\definecolor{Gray}{gray}{0.9}
\newcolumntype{a}{>{\columncolor{Gray}}c}
\newcolumntype{b}{>{\columncolor{white}}c}
\newcommand{\eps}{\varepsilon}
\newcommand{\R}{\mathbb{R}}
\newcommand{\Rbar}{\overline{\R}}
\newcommand{\dd}{\mathrm{d}}
\newcommand{\DD}{\mathcal{D}}
\newcommand{\FF}{\mathcal{F}}
\newcommand{\Mb}{\mathbf{M}}
\newcommand{\D}{\mathbb{D}}
\newcommand{\B}{\mathbb{B}}
\newcommand{\Pb}{\mathbb{P}}
\newcommand{\M}{\mathbb{M}}
\newcommand{\U}{\mathbb{U}}
\newcommand{\W}{\mathbb{W}}
\newcommand{\X}{\mathbb{X}}
\newcommand{\Y}{\mathbb{Y}}
\newcommand{\Z}{\mathbb{Z}}
\newcommand{\G}{\mathbb{G}}
\newcommand{\Ex}{\mathrm{E}}
\newcommand{\Var}{\mathrm{Var}}
\newcommand{\Cov}{\mathrm{Cov}}
\newcommand{\1}{\mathbf{1}}
\newcommand{\ip}[1]{\lfloor #1 \rfloor}
\renewcommand{\Pr}{\mathrm{P}}
\newcommand{\as}{\overset{\mathrm{a.s.}}{\longrightarrow}}
\theoremstyle{plain}
\newtheorem{prop}{Proposition}[section]
\newtheorem{cond}[prop]{Condition}
\numberwithin{equation}{section}
\def\disp{\displaystyle}
\title{Detecting distributional changes in samples of independent block maxima using probability weighted moments}
\author{
  Ivan Kojadinovic\,\footnote{Universit\'e de Pau et des Pays de l'Adour, 
Laboratoire de math\'ematiques et applications, 
UMR CNRS 5142, B.P. 1155, 64013 Pau Cedex, France.
{E-mail:} \texttt{ivan.kojadinovic@univ-pau.fr}}
\and
  Philippe Naveau\,\footnote{Laboratoire des sciences du climat et de l'environnement, CNRS, Orme des Merisiers / B\^at.\ 701 C.E.A Saclay, 91191 Gif-sur-Yvette, France.
{E-mail:} \texttt{philippe.naveau@lsce.ipsl.fr}}
}
\begin{document}
\maketitle

\begin{abstract}
The analysis of seasonal or annual block maxima is of interest in fields such as hydrology, climatology or meteorology. In connection with the celebrated method of block maxima, we study several tests that can be used to assess whether the available series of maxima is identically distributed. It is assumed that block maxima are independent but not necessarily generalized extreme value distributed. The asymptotic null distributions of the test statistics are investigated and the practical computation of approximate p-values is addressed. Extensive Monte-Carlo simulations show the adequate finite-sample behavior of the studied tests for a large number of realistic data generating scenarios. Illustrations on several environmental datasets conclude the work.

\medskip

\noindent {\it Keywords:} asymptotic statistics; cumulative sum statistics; generalized extreme value distribution; partial-sum empirical processes; weighted empirical processes.

\medskip

\noindent {\it MSC 2010:} 62G32, 62G10, 62G20.
\end{abstract}

%%%%%%%%%%%%%%%%%%%%%%%%%%%%%%%%%%%%%%%%%%%%%%%%%%%%%%%%%%%%%%%%%%%%%%%%

\section{Introduction}

The {\em block maximum method} put forward in the seminal monograph of \cite{Gum58} is frequently applied in environmental sciences to analyse extremes of a given series of observations. It consists of dividing the initial high frequency observations into blocks of equal size and forming a sample of block maxima. A typical example is the study of annual or seasonal maxima of temperatures or precipitation from daily observations. The distribution of the block maxima is classically modeled using the generalized extreme value (GEV) distribution. Under rather broad conditions on the underlying initial observations, the extremal types theorem \citep[see, e.g.,][]{DehFer06} states that, as the block size increases to infinity, the only possible non-degenerated limit of affinely normalized block maxima is the GEV. The cumulative distribution function (c.d.f.) of this three-parameter distribution is
\begin{equation}
\label{eq:gev}
G_{\mu,\sigma,\xi}(x) =
\left\{
\begin{array}{ll}
\disp \exp \left\{ - \left(1 + \xi \cdot \frac{x - \mu}{\sigma} \right)_+^{-1/\xi} \right\}, & \mbox{if } \xi \neq 0, \\
\\
\disp \exp \left\{ - \exp \left( - \frac{x - \mu}{\sigma} \right) \right\}, & \mbox{if } \xi=0,
\end{array}
\right.
\end{equation}
for $x \in \R$, where $\mu \in \R$, $\sigma > 0$ and $\xi \in \R$ are the so-called location, scale and shape parameters, respectively, and with the notation $(\cdot)_+ = \max(\cdot,0)$. %In the rest of the paper, we adopt the notation GEV($\mu$,$\sigma$,$\xi$) to refer to the above defined distribution.

The adequacy of the GEV distribution as a model for a sample of block maxima is predicted by the theory when the underlying high frequency observations are identically distributed and independent, or satisfy some suitable mixing conditions \citep[see, e.g.,][]{LeaLinRoo83}, and the block size is ``sufficiently large''. Sometimes maxima are directly observed, making it impossible to question the stationarity of the initial series. Furthermore, the block maxima method may be applied even when the initial series is clearly not identically distributed. For instance, when the original observations exhibit seasonality, it suffices to take the block size to be a multiple of the season length.

When the block size is large, one of the least questionable assumptions might be that of independence among block maxima. Before fitting a GEV distribution to the resulting extremes, two questions then arise naturally: (i) Is the series of independent maxima identically distributed? (ii) Is the block size large enough to warrant a good fit by the GEV distribution? The aim of this work is to focus on the first question without assuming that the answer to the second question is positive. Specifically, we consider that we have at hand a sample of independent block maxima $X_1,\dots,X_n$ and our objective is to develop tests for
\begin{equation}
\label{eq:H0}
  H_0 : \,\exists \, F \text{ such that } X_1, \ldots, X_n \text{ have c.d.f.\ } F
\end{equation}
against alternatives involving the non-constancy of the c.d.f. In other words, we aim at developing tests for detecting changes in the distribution of independent block maxima.

Under the additional assumption that each $X_i$ is GEV distributed, a natural approach for testing $H_0$ is to apply the maximum likelihood based theory developed in the first chapter of the well-known monograph of \cite{CsoHor97} on change-point analysis. The latter approach was considered for instance in \cite{JarRen08} but not without significant practical difficulties related to maximum likelihood estimation of the GEV parameters from small samples \citep[see also for instance the discussions in][]{HosWalWoo85,DieGuiNavRib08}.

The approach that we consider is based on {\em probability weighted moments} (PWM) \citep[see, e.g.,][]{GreLanWal79,HosWalWoo85,HosWal87,DieGuiNavRib08,GuiNavDieRib09,FerdeH15} and has the advantage of giving rise to tests that can hold their level asymptotically even if the distributions of the block maxima $X_i$ are not GEV. To illustrate that point, in our simulations, some scenarios involved block maxima obtained from blocks of size 1, 5, 10 and 50. The later sentence actually implies that the tests can be applied outside of the block maxima setting, although, by construction, this is less natural. These aspects will be clarified in the forthcoming sections.

The rest of the paper is organized as follows. In Section~\ref{sec:PWM}, we recall the PWM method for estimating the parameters of the GEV due to \cite{HosWalWoo85}, propose related {\em cumulative sum} (CUSUM) statistics for change-point detection, establish the limiting null distributions of the latter and address the computation of approximate p-values. The third section proposes alternative tests in the framework of the generalized PWM method for estimating the GEV due to \cite{DieGuiNavRib08}. Simulation results are partially reported in the fourth section and show the adequate behavior of the tests for many realistic data generating scenarios. The last section presents several fully reproducible illustrations on environmental datasets. 

All proofs are deferred to a sequence of appendices. Appendices~\ref{sec:weighted_uniform} and~\ref{proof:prop:weak_Mn} in particular establish necessary intermediate results that might be of independent interest, namely the weak convergence of the sequential weighted uniform empirical process and of certain sequential empirical processes constructed from PWM estimators. The paper is also accompanied by supplementary material available upon request from the authors and mostly consisting of simulation results that are not reported in Section~\ref{sec:sims}. Finally, note that the studied tests for change-point detection are implemented in the package {\tt npcp} \citep{npcp} for the \textsf{R} statistical environment \citep{Rsystem}.

\section{The PWM method and related statistics}
\label{sec:PWM}

The statistics for testing $H_0$ in~\eqref{eq:H0} that we propose are related to the method of PWM for estimating the GEV parameters \citep{HosWalWoo85}. After briefly recalling the method and suggesting related statistics for change-point detection, we study the limiting null distributions of the latter and address the practical computation of approximate p-values.

\subsection{Estimation of the GEV using PWM}
\label{sec:PWM_method}

A general definition of PWM can be found for instance in \citet[Section~2]{HosWalWoo85} (see also Appendix~\ref{proof:prop:weak_Mn}). Let $X$ be a random variable with c.d.f.\ $F$ and consider the functions $\nu_1 (x) = 1$, $\nu_2(x) = x$ and $\nu_3(x) = x^2$, $x \in [0,1]$. The quantities $\beta_i = \Ex [ X \nu_i\{F(X)\} ]$, $i \in \{1,2,3\}$, are the three (probability weighted) moments used in the PWM method for estimating the GEV parameters. 

Assume that $H_0$ in~\eqref{eq:H0} holds with $F$ the c.d.f.\ of the GEV given in~\eqref{eq:gev}. The method exploits the following relationship between $\beta_{1}$, $\beta_{2}$ and $\beta_{3}$ and the parameters $\mu$, $\sigma$ and $\xi$ of the GEV. For $\sigma > 0$, $\xi < 1$ and $\xi \neq 0$, there holds
\begin{equation}
\label{eq:sys_PWM}
\left\{
\begin{array}{l}
\disp \beta_{1} = \mu - \frac{\sigma}{\xi} \{ 1 - \Gamma(1 - \xi) \}, \\
\disp 2 \beta_{2} -  \beta_{1} = \frac{\sigma}{\xi} \Gamma(1 - \xi) (2^\xi - 1), \\
\disp \frac{3  \beta_{3} -  \beta_{1}}{2  \beta_{2} -  \beta_{1}} = \frac{3^\xi - 1}{2^\xi - 1},
\end{array}
\right.
\end{equation}
where $\Gamma$ is the Gamma function. Using continuity arguments, some thought reveals that the constraint $\xi \neq 0$ is actually unnecessary. Estimators of $\mu$, $\sigma$ and $\xi$ are then the solution of the above system of equations in which $\beta_{1}$, $\beta_{2}$ and $\beta_{3}$ are replaced by their estimators.

Two ways to estimate $\beta_{1}$, $\beta_{2}$ and $\beta_{3}$ were considered by \cite{HosWalWoo85}. The first approach consists of first estimating $F$ by a slight modification of the empirical c.d.f.\ of the sample, and then estimating the PWM by suitable arithmetic means. For reasons that shall become clear later in this section, we present the resulting estimators when based on a subsample $X_k,\dots,X_l$, $1 \leq k \leq l \leq n$, of the available data. The c.d.f.\ $F$ is estimated by
\begin{equation}
\label{eq:Fkl}
F_{k:l}(x) = \frac{1}{l-k+1} \left\{ \sum_{j=k}^l \1(X_j \leq x) + \gamma \right\}, \qquad x \in \R,
\end{equation}
with the convention that $F_{k:l} = 0$ for $k > l$ and where $\gamma$ is a constant whose value suggested by \cite{HosWalWoo85} is -0.35. A sensible estimator of $\beta_{i}$ is then
\begin{equation}
\label{eq:hatbetakl}
\hat \beta_{i,k:l} = \frac{1}{l-k+1}  \sum_{j=k}^l X_j \nu_i\{F_{k:l}(X_j)\}
\end{equation}
with the convention that $\hat \beta_{i,k:l} = 0$ for $k > l$. 

A second equally natural possibility to estimate $\beta_{1}$, $\beta_{2}$ and $\beta_{3}$ is to consider the asymptotically equivalent estimators $\hat b_{1,1:n},\hat b_{2,1:n},\hat b_{3,1:n}$ proposed by \cite{LanMatWal79} and defined, for a subsample $X_k,\dots,X_l$, $1 \leq k \leq l \leq n$, by
\begin{equation}
\label{eq:landwehr}
\hat b_{i,k:l} = \frac{1}{l-k+1} \sum_{j=k}^l \frac{\prod_{m=1}^{i-1}\{ (j-k+1)- m\}}{\prod_{m=1}^{i-1} \{ (l-k+1)-m \}} X_{(j),k:l},
\end{equation}
where $X_{(k),k:l} \leq \dots \leq X_{(l),k:l}$ are the order statistics computed from $X_k,\dots,X_l$, and with the convention that $\hat b_{i,k:l} = 0$ if $l - k < i$. The latter estimators are unbiased (they are related to $U$-statistics). 

\citet[Section~4]{HosWalWoo85} claimed that the $\hat \beta_{i,k:l}$ are slightly superior to the $\hat b_{i,k:l}$ in terms of finite-sample performance. While this seems true when $X_1,\dots,X_n$ are i.i.d.\ from a GEV with location zero and scale one, it does not seem to necessarily hold for arbitrary location and scale parameters.

Let $g_\mu$, $g_\sigma$ and $g_\xi$ be the components of the map from $\R^3$ to $\R^3$ implicitly defined by~\eqref{eq:sys_PWM} that transforms $\bm \beta = (\beta_{1},\beta_{2},\beta_{3})$ into $(\mu,\sigma,\xi)$ (that is, $\mu = g_\mu(\bm \beta)$, $\sigma = g_\sigma(\bm \beta)$ and $\xi = g_\xi(\bm \beta)$), and let us first focus on the function $g_\xi$  without assuming anymore that $F$ is the c.d.f.\ of the GEV but only that $F$ is continuous and that $\beta_{1}$, $\beta_{2}$ and $\beta_{3}$ exist. Under the constraint $\xi < 1$, it can be easily verified that the solution $g_\xi(\bm \beta)$ of the third equation in~\eqref{eq:sys_PWM} exists and is unique if and only if $1 < (3  \beta_{3} -  \beta_{1})/(2  \beta_{2} -  \beta_{1}) < 2$. The latter is guaranteed by the following result proved in Appendix~\ref{proof:prop:beta_ineqs}.

\begin{prop}
\label{prop:beta_ineqs}
Assume that $F$ is continuous and that $\beta_{1}$, $\beta_{2}$ and $\beta_{3}$ exist. Then, $2 \beta_{2} - \beta_{1} > 0$, $3 \beta_3 - 2 \beta_2 > 0$ and $- \beta_{1} + 4 \beta_{2} - 3 \beta_{3} > 0$.
\end{prop}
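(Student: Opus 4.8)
The plan is to rewrite all three quantities as integrals of the quantile function $Q = F^{-1}$ against simple polynomial weights and then to exploit the monotonicity of $Q$. Since $F$ is continuous, $U := F(X)$ is uniformly distributed on $[0,1]$ and $X = Q(U)$ almost surely; recalling that $\nu_i\{F(x)\} = F(x)^{i-1}$, this gives $\beta_i = \Ex[X\, F(X)^{i-1}] = \int_0^1 Q(u)\, u^{i-1}\,\dd u$ for $i \in \{1,2,3\}$. Substituting into the three target expressions and collecting powers of $u$ yields
\[
2\beta_2 - \beta_1 = \int_0^1 Q(u)(2u-1)\,\dd u, \qquad 3\beta_3 - 2\beta_2 = \int_0^1 Q(u)\,u(3u-2)\,\dd u,
\]
\[
-\beta_1 + 4\beta_2 - 3\beta_3 = \int_0^1 Q(u)(3u-1)(1-u)\,\dd u.
\]
All three integrals are well defined: existence of $\beta_1$ means $\Ex|X| = \int_0^1 |Q| < \infty$, and each weight is bounded on $[0,1]$.

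The \emph{key observation} is that each of the three weights, say $w_1(u)=2u-1$, $w_2(u)=u(3u-2)$ and $w_3(u)=(3u-1)(1-u)$, integrates to zero over $[0,1]$ and changes sign exactly once, passing from nonpositive to nonnegative values, at $u_1 = 1/2$, $u_2 = 2/3$ and $u_3 = 1/3$ respectively; these facts are read off directly from the displayed factorizations. Using $\int_0^1 w_j = 0$, I would then subtract the (finite) constant $Q(u_j)$ and write each quantity as $\int_0^1 \{Q(u) - Q(u_j)\}\, w_j(u)\,\dd u$. Because $Q$ is nondecreasing, $Q(u) - Q(u_j)$ has the same sign as $w_j(u)$ on both sides of $u_j$, so the integrand is nonnegative throughout $[0,1]$, and each of the three quantities is therefore $\geq 0$. (As a sanity check, the first one admits the probabilistic reading $2\beta_2-\beta_1 = \tfrac12\,\Ex|X-X'|$ with $X'$ an independent copy of $X$.)

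It remains to upgrade these to strict inequalities, which is where the continuity of $F$ enters decisively. Since $F$ is continuous, $X$ cannot be almost surely constant, so $Q$ is non-constant on a set of positive Lebesgue measure; and each $w_j$, being a nonzero polynomial, vanishes at only finitely many points and is thus nonzero almost everywhere. Consequently the nonnegative integrand $\{Q(u) - Q(u_j)\}\, w_j(u)$ is strictly positive on a set of positive measure: were it zero almost everywhere, then $Q$ would equal $Q(u_j)$ almost everywhere, forcing $X$ to be degenerate and contradicting continuity of $F$. I expect the only genuinely delicate point to be this strictness step, namely ensuring that the set on which $Q$ is non-constant is not absorbed into the single zero of $w_j$; the almost-everywhere nonvanishing of the polynomial weights resolves it cleanly. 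Everything else, that is, the uniform representation $\beta_i = \int_0^1 Q(u)\,u^{i-1}\,\dd u$, the vanishing of $\int_0^1 w_j$, and the single sign change of each $w_j$, is a routine verification.
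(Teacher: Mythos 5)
Your proof is correct, but it takes a genuinely different route from the paper's. The paper's argument is probabilistic: it first rewrites the moments as $\beta_i = i^{-1}\Ex\{\max(X_1,\dots,X_i)\}$ for independent copies $X_1,\dots,X_i$ of $X$, and then obtains each inequality from an almost-sure pointwise inequality between maxima (e.g., $\max(X_1,X_2) > (X_1+X_2)/2$ a.s.\ for the first, and analogous inequalities involving $\max(X_1,X_2,X_3)$ for the other two), continuity of $F$ entering only through $\Pr(X_i = X_j)=0$, which is what makes the inequalities strict. You instead work on the quantile scale: $\beta_i = \int_0^1 Q(u)\,u^{i-1}\,\dd u$, each target quantity becomes $\int_0^1 Q(u)\,w_j(u)\,\dd u$ with a mean-zero polynomial weight $w_j$ having a single sign change from $-$ to $+$ at some $u_j \in (0,1)$, and monotonicity of $Q$ gives $\{Q(u)-Q(u_j)\}\,w_j(u)\ge 0$ pointwise, a Chebyshev-type correlation argument; continuity of $F$ enters twice, once for the uniform transform and once for non-degeneracy in the strictness step. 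All the details check out: your three factorizations and the mean-zero and sign-change properties are correct, $Q(u_j)$ is finite because $u_j\in(0,1)$, and your strictness argument is sound (a.e.\ vanishing of the integrand forces $Q$ to be a.e.\ constant, hence $X$ degenerate, contradicting continuity of $F$). As for what each approach buys: the paper's proof is shorter and produces probabilistic identities that the paper reuses elsewhere (your own sanity check $2\beta_2-\beta_1=\Ex|X-X'|/2$ is Gini's mean difference, invoked in Section~2 when interpreting the test statistics); yours is more systematic, since it reduces any linear combination $\sum_i c_i\beta_i$ to checking that the polynomial $\sum_i c_i u^{i-1}$ has zero integral and a single sign change from nonpositive to nonnegative, a criterion that extends mechanically to higher-order probability weighted moments and makes explicit exactly when strictness fails.
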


Given the equivalence
$$
\left\{
\begin{array}{l}
2 \beta_{2} - \beta_{1} > 0, \\
3 \beta_3 - 2 \beta_2 > 0, \\
- \beta_{1} + 4 \beta_{2} - 3 \beta_{3} > 0,
\end{array}
\right.
\iff
\left\{
\begin{array}{l}
2 \beta_{2} - \beta_{1} > 0, \\
3 \beta_{3} - \beta_{1} > 0, \\ 
1 < (3  \beta_{3} -  \beta_{1})/(2  \beta_{2} -  \beta_{1}) < 2,
\end{array}
\right.
$$
it is natural to consider $g_\xi$ as a function defined on the convex subset
\begin{equation}
\label{eq:def_xi}
\DD_\xi = \{ \bm x \in \R^3 : 2 x_2 - x_1 > 0, 3 x_3 - 2x_2 > 0, - x_1 + 4x_2 - 3x_3 > 0 \}.
\end{equation}
In addition, it is easy to see that, since $\bm \beta \in \DD_\xi$, necessarily $g_\xi(\bm \beta) < 1$ and $g_\sigma(\bm \beta) > 0$, which is fully in accordance with the constraints under which the system~\eqref{eq:sys_PWM} was derived. Using the fact that $x \mapsto x/(2^x-1)$ can be extended by continuity at 0 and is strictly positive, it is also natural to consider $g_\sigma$ and $g_\mu$ as functions defined on $\DD_\xi$.

When $\beta_1$, $\beta_2$ and $\beta_3$ are estimated using the unbiased estimators $\hat b_{1,1:n},\hat b_{2,1:n},\hat b_{3,1:n}$ defined analogously to~\eqref{eq:landwehr}, \citet[Appendix B]{HosWalWoo85} showed that $\hat{\bm b}_{1:n} = (\hat b_{1,1:n}, \hat b_{2,1:n},\hat b_{3,1:n}) \in \DD_\xi$ for all $n \geq 3$, where $\DD_\xi$ is defined in~\eqref{eq:def_xi}, which implies that $g_\xi(\hat{\bm b}_{1:n}) < 1$ and $g_\sigma(\hat{\bm b}_{1:n}) > 0$ for all $n \geq 3$. The latter desirable property is referred to as the {\em feasibility criterion} of the PWM method. When $\hat{\bm \beta}_{1:n} = (\hat \beta_{1,1:n}, \hat \beta_{2,1:n},\hat \beta_{3,1:n})$ is used instead of $\hat{\bm b}_{1:n}$, an analogue result is not available. However, provided $\Ex(|X_1|) < \infty$, a consequence of the strong law of large numbers and the Glivenko--Cantelli lemma is that $\hat{\bm \beta}_{1:n}$ converges almost surely to $\bm \beta$ (see Appendix~\ref{proof:prop:weak_S_g_n}). The latter immediately implies that $\Pr(\hat{\bm \beta}_{1:n} \in \DD_\xi) \to 1$.

\subsection{Test statistics}

The above developments suggest to study tests for change-point detection based on the three CUSUM like statistics
\begin{equation}
\label{eq:S_g_n}
S_{g,n} =  \max_{1 \leq k \leq n-1} \frac{k (n-k)}{n^{3/2}} \1( \hat{\bm \beta}_{1:k} \in \DD_\xi, \hat{\bm \beta}_{k+1:n} \in \DD_\xi) \left| g( \hat{\bm \beta}_{1:k}) - g(\hat{\bm \beta}_{k+1:n}) \right|, \, g \in \{g_\mu, g_\sigma, g_\xi\},
\end{equation}
or on the three alternative ones
\begin{equation}
\label{eq:T_g_n}
T_{g,n} =  \max_{3 \leq k \leq n-3} \frac{k (n-k)}{n^{3/2}} \left| g( \hat{\bm b}_{1:k}) - g(\hat{\bm b}_{k+1:n}) \right|, \qquad g \in \{g_\mu, g_\sigma, g_\xi\}.
\end{equation}
In~\eqref{eq:S_g_n}, $\hat{\bm \beta}_{1:k}=(\hat \beta_{1,1:k},\hat \beta_{2,1:k},\hat \beta_{3,1:k})$ and $\hat{\bm \beta}_{k+1:n}=(\hat \beta_{1,k+1:n},\hat \beta_{2,k+1:n},\hat \beta_{3,k+1:n})$ are defined according to~\eqref{eq:hatbetakl}, and $\DD_\xi$ is given in~\eqref{eq:def_xi}, while in~\eqref{eq:T_g_n},  $\hat{\bm b}_{1:k}=(\hat b_{1,1:k},\hat b_{2,1:k},\hat b_{3,1:k})$ and $\hat{\bm b}_{k+1:n}=(\hat b_{1,k+1:n},\hat b_{2,k+1:n},\hat b_{3,k+1:n})$ are defined according to~\eqref{eq:landwehr}.

For instance, under $H_0$ in~\eqref{eq:H0} with $F$ the c.d.f.\ of the GEV, $S_{g_\xi,n}$ is the maximum over $k \in \{1,\dots,n-1\}$ of the normalized absolute difference between $g_\xi(\hat{\bm \beta}_{1:k})$, the estimator of the shape parameter $\xi$ computed from $X_1,\dots,X_k$, and $g_\xi(\hat{\bm \beta}_{k+1:n})$, the same estimator computed from $X_{k+1},\dots,X_n$. The role of the indicator $\1( \hat{\bm \beta}_{1:k} \in \DD_\xi, \hat{\bm \beta}_{k+1:n} \in \DD_\xi)$ is to allow the previous evaluations of $g_\xi$ only if $\hat{\bm \beta}_{1:k} \in \DD_\xi$ and $\hat{\bm \beta}_{k+1:n} \in \DD_\xi$. The coefficient $k (n-k)/n^{3/2}$ in front of each absolute difference in~\eqref{eq:S_g_n} is the classical normalizing term in the CUSUM approach ensuring that, under additional suitable assumptions, $S_{g_\xi,n}$ converges in distribution under the null. The interpretation of $T_{g_\xi,n}$ is similar. The reason for the disappearance of the indicators in its expression is that, as discussed in Section~\ref{sec:PWM_method}, $\hat{\bm b}_{1:k} \in \DD_\xi$ for all $k \geq 3$ \citep[Appendix B]{HosWalWoo85}.

If each block maxima in the sample $X_1,\dots,X_n$ is GEV distributed, $S_{g_\xi,n}$ and $T_{g_\xi,n}$ (resp.\ $S_{g_\sigma,n}$ and $T_{g_\sigma,n}$, $S_{g_\mu,n}$ and $T_{g_\mu,n}$) are thus test statistics particularly sensitive to changes in the shape parameter $\xi$ (resp.\ the scale parameter $\sigma$, the location parameter~$\mu$) of the GEV. It is however important to keep in mind that the proposed tests are actually nonparametric in the sense that they remain potentially meaningful even if the block maxima are not GEV distributed. The latter follows from the fact $\hat \beta_{1,1:n} = \hat b_{1,1:n}$ is simply the sample mean, while, for instance from \citet[page 252]{HosWalWoo85}, we have that $2 \hat b_{2,1:n} - \hat b_{1,1:n}$ is actually Gini's mean difference and, under~$H_0$ in~\eqref{eq:H0}, is thus an estimator of scale as measured by $2\beta_2 - \beta_1 = \Ex(|X_1-X_2|)/2$. Consequently, the tests based on $T_{g_\mu,n}$ and $S_{g_\mu,n}$ (resp.\ $T_{g_\sigma,n}$ and $S_{g_\sigma,n}$) can in general be interpreted as tests for change-point detection particularly sensitive to changes in the location (resp.\ scale) of the distribution. Similarly, the tests based on $T_{g_\xi,n}$ and $S_{g_\xi,n}$ can be interpreted in general as tests for change-point detection particularly sensitive to changes in the upper tail of $F$. Indeed, the ratio $(3 \beta_3 -\beta_1)/(2 \beta_2 -\beta_1)$ appearing in the expression of $g_{\xi}$ can be rewritten as
$$
\frac{\Ex \{ \max(X_1,X_2,X_3) - (X_1+X_2+X_3)/3 \}}{\Ex \{ \max(X_1,X_2)-(X_1+X_2)/2 \}} 
$$
and is thus clearly invariant to the replacement of the $X_i$ by $c X_i + d$ if $c > 0$. Some thought reveals that this ratio captures some information on the upper tail of $F$.

The statement that the tests are actually nonparametric will also become clearer in view of the theoretical results established in the next two subsections and will be illustrated in the Monte Carlo experiments partially reported in Section~\ref{sec:sims}.

\subsection{Limiting null distributions of $S_{g,n}$ and $T_{g,n}$}
\label{sec:limiting}

We shall now state conditions under which $S_{g,n}$ in~\eqref{eq:S_g_n} and $T_{g,n}$ in~\eqref{eq:T_g_n} converge in distribution under the null. 

\begin{cond}
\label{cond:PWM}
The random variables $X_1,\dots,X_n$ are i.i.d.\ from a continuous distribution with c.d.f.\ $F$ such that $\Ex(X_1^2) < \infty$ and there exists $\alpha \in [0,1/2)$ such that $\sup_{x \in \R} | H_\alpha(x) | < \infty$, where $H_\alpha(x) = x [ F(x) \{ 1-F(x)\} ]^\alpha$, $x \in \R$.
\end{cond}

Condition~\ref{cond:PWM} is for instance satisfied when $F$ is the c.d.f.\ of the GEV in~\eqref{eq:gev} with~$\xi < 1/2$, that is, when it has finite variance. In addition, it can be verified that when both $\max(X_1,0)$ and $-\min(X_1,0)$ have an ultimately monotone density and are in the maximum domain of attraction of the GEV with shape parameter $\xi < 1/2$, then  $\sup_{x \in \R} | H_\alpha(x) | < \infty$ for any $\alpha \in (\xi, 1/2)$ (see Section~3 %\ref{sec:lem} 
in the supplementary material).

%It can further be verified that that the domain of definition $\DD_\xi$ in~\eqref{eq:def_xi} of the functions $g_\mu$, $g_\sigma$ and $g_\xi$ is an open convex subset of $\R^3$ and that these functions are continuously differentiable on $\DD_\xi$. 
In the sequel, the first-order partial derivatives of $g \in \{g_\mu, g_\sigma, g_\xi\}$ will be denoted by $\partial_1 g$, $\partial_2 g$ and $\partial_3 g$, respectively. Recall furthermore that $\nu_1(x)=1$, $\nu_2(x) = x$ and $\nu_3(x) = x^2$, $x \in [0,1]$. 

The following result is proved in Appendix~\ref{proof:prop:weak_S_g_n}.

\begin{prop}[Limiting null distributions of $S_{g,n}$ and $T_{g,n}$]
\label{prop:weak_S_g_n}
Under Condition~\ref{cond:PWM}, for any $g \in \{g_\mu, g_\sigma, g_\xi\}$, $S_{g,n} - T_{g,n} =o_\Pr(1)$ and $S_{g,n}$ converges in distribution to $\sigma_{g} \sup_{s \in [0,1]} |\U(s) - s \U(1)|$, where $\U$ is a standard Brownian motion on $[0,1]$ and
\begin{equation}
\label{eq:sigma_g}
\sigma_{g}^2 = \sum_{i,j=1}^3 \partial_i g(\bm \beta) \partial_j g(\bm \beta) \Cov(Y_{\nu_i},Y_{\nu_j}),
\end{equation}
with, for any $\nu \in \{\nu_1,\nu_2,\nu_3\}$,
\begin{equation}
\label{eq:Ynu}
Y_\nu = X_1 \nu\{F(X_1)\} + \int_\R x \nu'\{F(x)\} \1(X_1 \leq x) \dd F(x).
\end{equation}
\end{prop}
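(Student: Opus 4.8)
The plan is to linearize the PWM estimators as partial sums of an influence function, rewrite each CUSUM statistic as the sup-norm of a tied-down partial-sum process, conclude by the continuous mapping theorem, and finally deduce the asymptotic equivalence of $S_{g,n}$ and $T_{g,n}$ from that of the two families of estimators. Fix $g\in\{g_\mu,g_\sigma,g_\xi\}$. The core of the argument is to show that, uniformly in the split index, the Hosking estimators admit the asymptotically linear representation
\begin{equation*}
\sqrt{n}\,\frac{\ip{ns}}{n}\bigl(\hat\beta_{i,1:\ip{ns}}-\beta_i\bigr)=\frac{1}{\sqrt{n}}\sum_{j=1}^{\ip{ns}}\bigl\{Y_{\nu_i,j}-\Ex(Y_{\nu_i})\bigr\}+o_\Pr(1),\qquad s\in[0,1],
\end{equation*}
where $Y_{\nu_i,j}$ is the quantity $Y_{\nu_i}$ of~\eqref{eq:Ynu} with $X_1$ replaced by $X_j$. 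The influence function $Y_{\nu_i}$ is read off from the von~Mises expansion of the map $F\mapsto\int x\,\nu_i\{F(x)\}\,\dd F(x)$: the leading summand $X_j\nu_i\{F(X_j)\}$ comes from perturbing the outer integrating measure, while the integral in~\eqref{eq:Ynu} comes from perturbing $F$ inside $\nu_i\{F(\cdot)\}$ after an integration by parts, the deterministic part of the latter accounting for the recentering at $\Ex(Y_{\nu_i})$ rather than at $\beta_i$. This representation, uniform in $s$, is exactly the content of the weak convergence result established in Appendix~\ref{proof:prop:weak_Mn}: the trivariate sequential process on the left converges weakly in $(D[0,1])^3$ to a Brownian motion with covariance kernel $(s\wedge t)\,\Cov(Y_{\nu_i},Y_{\nu_j})$, the latter being finite since $\Ex(X_1^2)<\infty$ under Condition~\ref{cond:PWM}.

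Next I would transfer this to $g$. Since $\bm\beta\in\DD_\xi$ and $g$ is continuously differentiable on the open convex set $\DD_\xi$, a Taylor expansion of $g$ about $\bm\beta$, combined with the above representation and the identity $\sum_{j=k+1}^n=\sum_{j=1}^n-\sum_{j=1}^k$, yields, uniformly in $k$,
\begin{equation*}
\frac{k(n-k)}{n^{3/2}}\bigl\{g(\hat{\bm\beta}_{1:k})-g(\hat{\bm\beta}_{k+1:n})\bigr\}=\frac{1}{\sqrt{n}}\Bigl(G_k-\frac{k}{n}G_n\Bigr)+o_\Pr(1),
\end{equation*}
where $G_k=\sum_{j=1}^k\sum_{i=1}^3\partial_i g(\bm\beta)\{Y_{\nu_i,j}-\Ex(Y_{\nu_i})\}$ is a partial sum of i.i.d.\ centered variables of variance $\sigma_g^2$ as in~\eqref{eq:sigma_g}; note that the recentering constant cancels in the CUSUM difference. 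Viewed as a process in $s=k/n$, the right-hand side converges weakly by Donsker's theorem to the tied-down Gaussian process $\sigma_g\{\U(s)-s\,\U(1)\}$.

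Applying the supremum functional, which is continuous for the uniform topology, then gives the stated limit first for the indicator-free statistic $T_{g,n}$ built from the unbiased estimators $\hat{\bm b}$, whose linearization coincides with the one above; the two index ranges $3\le k\le n-3$ and $1\le k\le n-1$ yield the same limit since the bridge vanishes at the endpoints. To pass to $S_{g,n}$, I would show that $\hat b_{i,k:l}-\hat\beta_{i,k:l}=o_\Pr(1)$ uniformly in the split, that the indicators in~\eqref{eq:S_g_n} equal one with probability tending to one on any interior region $\{\delta\le k/n\le1-\delta\}$ (since $\hat{\bm\beta}_{1:k}\to\bm\beta\in\DD_\xi$ there by a uniform law of large numbers), and that the boundary contributions are negligible by tightness and the vanishing normalization; together these give $S_{g,n}-T_{g,n}=o_\Pr(1)$, hence a common limit.

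The genuinely hard step is the uniform-in-$s$ control of the remainder in the integral term of~\eqref{eq:Ynu}. There the empirical counterpart $\nu_i'\{F_{1:\ip{ns}}(\cdot)\}$ is integrated against the unbounded factor $x$, so the remainder is driven by a \emph{weighted} sequential empirical process and cannot be handled by a pointwise central limit theorem. This is precisely where the weight $\alpha$ and the requirement $\sup_{x}|H_\alpha(x)|<\infty$, with $H_\alpha(x)=x[F(x)\{1-F(x)\}]^\alpha$, of Condition~\ref{cond:PWM} enter: they supply the tail integrability needed to tame the remainder near the endpoints $0$ and $1$. Carrying this out simultaneously over all subsample sizes is what forces the passage through the sequential weighted uniform empirical process of Appendix~\ref{sec:weighted_uniform} rather than a classical unweighted argument.
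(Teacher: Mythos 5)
Your proposal is correct and follows essentially the same route as the paper: the paper likewise establishes the uniform-in-$s$ linearization of the sequential PWM processes with influence function $Y_\nu$ (Appendix~\ref{proof:prop:weak_Mn}), tames the remainder coming from the unbounded integrand via the sequential weighted uniform empirical process of Appendix~\ref{sec:weighted_uniform} (this is exactly where $H_\alpha$ enters), transfers to $g$ by a mean-value/delta-method argument with the same interior/boundary treatment of the indicators, and compares the two families of estimators to obtain $S_{g,n}-T_{g,n}=o_\Pr(1)$ --- the paper merely reverses your order, proving the limit for $S_{g,n}$ first via a general proposition and then passing to $T_{g,n}$ through~\eqref{eq:aim1}--\eqref{eq:aim2}. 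One small imprecision: the comparison of $\hat b_{i,k:l}$ and $\hat \beta_{i,k:l}$ must be carried out at the $\sqrt{n}\,\lambda_n(0,s)$ scale, as in your displayed linearizations, not merely as $\hat b_{i,k:l}-\hat\beta_{i,k:l}=o_\Pr(1)$; the paper's explicit bound of this scaled difference by $(4/\sqrt{n}) \sup_{k \geq 1} k^{-1}\sum_{i=1}^{k} |X_i| \as 0$ supplies exactly what is needed.
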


\subsection{Computation of approximate p-values}
\label{sec:practical}

Proposition~\ref{prop:weak_S_g_n} suggests to base the computation of approximate p-values for $S_{g,n}$ and $T_{g,n}$, $g \in \{g_\mu, g_\sigma, g_\xi\}$, on their estimated asymptotic null distributions. Indeed, the distribution of the supremum of a Brownian bridge, known as the Kolmogorov distribution, can be approximated very well in practice. It therefore simply remains to estimate the unknown variance $\sigma_{g}^2$ in~\eqref{eq:sigma_g}.

To do so, it is necessary to estimate the first-order partial derivatives of $g$ at $\bm \beta$ and the covariances $\Cov(Y_{\nu_i},Y_{\nu_j})$, $i,j \in \{1,2,3\}$, where $Y_{\nu_i}$ is defined in~\eqref{eq:Ynu}. Natural estimators of the former are $\partial_1 g(\hat{\bm \beta}_{1:n})$, \dots, $\partial_p g(\hat{\bm \beta}_{1:n})$ or $\partial_1 g(\hat{\bm b}_{1:n})$, \dots, $\partial_p g(\hat{\bm b}_{1:n})$, while estimators of the latter can be based on the pseudo-observations 
\begin{equation*}
%\label{eq:Y_nu_i_n}
Y_{\nu,i,n} = X_i \nu\{F_{1:n}(X_i)\} + \frac{1}{n} \sum_{j=1}^n X_j \nu'\{F_{1:n}(X_j)\} \1(X_i \leq X_j),
\end{equation*}
for $i \in \{1,\dots,n\}$ and $\nu \in \{\nu_1,\nu_2,\nu_3\}$, where $F_{1:n}$ is defined in~\eqref{eq:Fkl}. Indeed, a sensible estimator $\widehat \Cov_n(Y_{\nu_i},Y_{\nu_j})$ of $\Cov(Y_{\nu_i},Y_{\nu_j})$, $i,j \in \{1,2,3\}$, under $H_0$ is then simply the sample covariance computed from $(Y_{\nu_i,1,n}, Y_{\nu_j,1,n})$,\dots,$(Y_{\nu_i,n,n}, Y_{\nu_j,n,n})$. 

The proof of the following result is omitted for the sake of brevity.

\begin{prop}
\label{prop:conv_sigma}
Under Condition~\ref{cond:PWM}, for any $g \in \{g_\mu, g_\sigma, g_\xi\}$,
\begin{equation}
\label{eq:hat_sigma_g_n}
\hat \sigma_{g,n}^2 = \sum_{i,j=1}^p \partial_i g(\hat{\bm \beta}_{1:n}) \partial_j g(\hat{\bm \beta}_{1:n}) \widehat \Cov_n(Y_{\nu_i},Y_{\nu_j})
\end{equation}
and
\begin{equation*}
\check \sigma_{g,n}^2 = \sum_{i,j=1}^p \partial_i g(\hat{\bm b}_{1:n}) \partial_j g(\hat{\bm b}_{1:n}) \widehat \Cov_n(Y_{\nu_i},Y_{\nu_j})
\end{equation*}
converge almost surely to $\sigma_{g}^2$ in~\eqref{eq:sigma_g}.
\end{prop}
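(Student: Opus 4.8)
The plan is to establish the almost-sure convergence of $\hat\sigma_{g,n}^2$ and $\check\sigma_{g,n}^2$ to $\sigma_g^2$ by showing that each of the two factors making up the summands converges almost surely to its population counterpart, and then invoking the continuous mapping theorem (in the form of continuity of finite sums and products) together with the smoothness of the partial derivatives $\partial_i g$. Concretely, the target $\sigma_g^2$ in~\eqref{eq:sigma_g} is a fixed finite sum of products $\partial_i g(\bm\beta)\,\partial_j g(\bm\beta)\,\Cov(Y_{\nu_i},Y_{\nu_j})$, so it suffices to prove two things: first, that $\partial_i g(\hat{\bm\beta}_{1:n}) \as \partial_i g(\bm\beta)$ (and likewise with $\hat{\bm b}_{1:n}$ in place of $\hat{\bm\beta}_{1:n}$) for each $i$ and each $g \in \{g_\mu,g_\sigma,g_\xi\}$; and second, that $\widehat\Cov_n(Y_{\nu_i},Y_{\nu_j}) \as \Cov(Y_{\nu_i},Y_{\nu_j})$ for each $i,j \in \{1,2,3\}$. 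Given these two convergences, the result follows because a finite sum of products of almost-surely convergent sequences converges almost surely to the sum of products of the limits.

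First I would treat the partial-derivative factors. The excerpt already records (in the discussion preceding and within Section~\ref{sec:PWM_method}, with proof deferred to Appendix~\ref{proof:prop:weak_S_g_n}) that, under $\Ex(|X_1|)<\infty$, one has $\hat{\bm\beta}_{1:n} \as \bm\beta$ as a consequence of the strong law of large numbers and the Glivenko--Cantelli lemma; the same argument applies to $\hat{\bm b}_{1:n}$, which is asymptotically equivalent. Since Condition~\ref{cond:PWM} assumes $\Ex(X_1^2)<\infty$, the hypothesis $\Ex(|X_1|)<\infty$ holds a fortiori, and moreover $\bm\beta \in \DD_\xi$ by Proposition~\ref{prop:beta_ineqs}, with $\DD_\xi$ open. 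The functions $g_\mu$, $g_\sigma$ and $g_\xi$ are continuously differentiable on $\DD_\xi$ (they are defined implicitly by the smooth system~\eqref{eq:sys_PWM}, with the relevant maps extended by continuity at $\xi=0$ and strictly positive denominators on $\DD_\xi$), so each $\partial_i g$ is continuous on $\DD_\xi$. Because $\hat{\bm\beta}_{1:n}$ eventually lies in the open set $\DD_\xi$ almost surely and converges to $\bm\beta \in \DD_\xi$, continuity of $\partial_i g$ yields $\partial_i g(\hat{\bm\beta}_{1:n}) \as \partial_i g(\bm\beta)$, and identically for $\hat{\bm b}_{1:n}$.

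Next I would treat the covariance factors, which is where the main work lies. The quantities $\widehat\Cov_n(Y_{\nu_i},Y_{\nu_j})$ are sample covariances of pseudo-observations $Y_{\nu,i,n}$ in which the unknown $F$ is replaced by the empirical estimator $F_{1:n}$, so one cannot apply the law of large numbers directly: the $Y_{\nu,i,n}$ are neither independent nor identically distributed across $i$, and each depends on the whole sample through $F_{1:n}$. The strategy is to compare $Y_{\nu,i,n}$ with the genuine (but unobservable) summands $Y_{\nu,i} = X_i\nu\{F(X_i)\} + \int_\R x\,\nu'\{F(x)\}\1(X_i\le x)\,\dd F(x)$ obtained by substituting $F$ for $F_{1:n}$; these $Y_{\nu,i}$ are i.i.d.\ copies of $Y_\nu$ in~\eqref{eq:Ynu}, whose sample covariance converges almost surely to $\Cov(Y_{\nu_i},Y_{\nu_j})$ by the strong law, provided the second moments $\Ex(Y_\nu^2)$ are finite. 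I would verify that finiteness using Condition~\ref{cond:PWM}: the factor $X_1\nu\{F(X_1)\}$ has a finite second moment since $\Ex(X_1^2)<\infty$ and $\nu$ is bounded on $[0,1]$, while the integral term is controlled via the weighting bound $\sup_x|H_\alpha(x)|<\infty$ with $\alpha<1/2$, which guarantees integrability of the tail-weighted contributions. The remaining and decisive step is to show that the discrepancy between the feasible and infeasible pseudo-observations is asymptotically negligible, i.e.\ $\frac1n\sum_{i=1}^n (Y_{\nu,i,n}-Y_{\nu,i})^2 \as 0$; this rests on the uniform consistency of $F_{1:n}$ (Glivenko--Cantelli) combined with uniform control of the integral-remainder term through the integrability furnished by $H_\alpha$. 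Once this is in hand, a Cauchy--Schwarz argument transfers the almost-sure convergence of the sample covariance from the $Y_{\nu,i}$ to the $Y_{\nu,i,n}$, giving $\widehat\Cov_n(Y_{\nu_i},Y_{\nu_j}) \as \Cov(Y_{\nu_i},Y_{\nu_j})$. The hard part will be precisely this plug-in error control for the integral term, where the interplay between the Glivenko--Cantelli rate and the tail weighting governed by $\alpha<1/2$ must be handled carefully; all other steps are standard applications of the strong law, continuity and the continuous mapping theorem.
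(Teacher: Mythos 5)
The paper gives no proof of Proposition~\ref{prop:conv_sigma} at all --- it states only that the proof ``is omitted for the sake of brevity'' --- so there is no official argument to compare yours against; what can be assessed is whether your plan is sound, and it is. The first half (almost sure convergence of the derivative factors) is fully covered by material the paper does prove elsewhere: $\hat{\bm \beta}_{1:n} \as \bm \beta$ (and the a.s.\ vanishing of $\hat{\bm b}_{1:n}-\hat{\bm \beta}_{1:n}$) is established in Appendix~\ref{proof:prop:weak_S_g_n}, $\bm \beta \in \DD_\xi$ is Proposition~\ref{prop:beta_ineqs}, $\DD_\xi$ is open, and continuity of $\partial_i g$ on $\DD_\xi$ then gives $\partial_i g(\hat{\bm \beta}_{1:n}) \as \partial_i g(\bm \beta)$ and likewise for $\hat{\bm b}_{1:n}$. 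For the covariance factors, your comparison of the feasible pseudo-observations $Y_{\nu,i,n}$ with the infeasible $Y_{\nu,i}$, followed by the strong law for the latter and a Cauchy--Schwarz transfer, is the natural route and it works. Two remarks on the details you leave open. First, your emphasis on $H_\alpha$ and $\alpha<1/2$ is unnecessary here: since $\nu_1,\nu_2,\nu_3$ are polynomials, $\nu$ and $\nu'$ are bounded on $[0,1]$, so the integral term in $Y_\nu$ is bounded in absolute value by the constant $\sup_{u}|\nu'(u)|\,\Ex(|X_1|)$, and $\Ex(Y_\nu^2)<\infty$ follows from $\Ex(X_1^2)<\infty$ alone; similarly the plug-in error in the first summand of $Y_{\nu,i,n}$ is controlled by Glivenko--Cantelli together with $\frac{1}{n}\sum_{i=1}^n X_i^2 \as \Ex(X_1^2)$. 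Second --- and this is the one step that genuinely needs a named tool rather than the appeal to ``uniform consistency of $F_{1:n}$'' you make --- the deviation $\frac{1}{n}\sum_{j=1}^n X_j\,\nu'\{F(X_j)\}\,\1(X_i\le X_j)-\int_\R x\,\nu'\{F(x)\}\,\1(X_i\le x)\,\dd F(x)$ is an empirical-process fluctuation over the class $\{x\mapsto x\,\nu'\{F(x)\}\,\1(x\ge t):t\in\R\}$ evaluated at the random points $t=X_i$; Glivenko--Cantelli for $F_{1:n}$ does not cover it, because the function $x\mapsto x\,\nu'\{F(x)\}$ is unbounded. What is needed is a uniform strong law over this ray-indexed class, which holds because it is a VC-subgraph (equivalently, monotonically indexed) class with integrable envelope $|x|\sup_u|\nu'(u)|$. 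With that ingredient made explicit, your decisive step $\frac{1}{n}\sum_{i=1}^n(Y_{\nu,i,n}-Y_{\nu,i})^2 \as 0$ goes through and the proof closes as you describe.
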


As we continue, we shall thus compute approximate p-values for $S_{g,n}$ as $1 - F_K(S_{g,n} / \hat \sigma_{g,n} )$ and approximate p-values for $T_{g,n}$ as $1 - F_K(T_{g,n} / \check \sigma_{g,n} )$, where $F_K$ is the c.d.f.\ of the Kolmogorov distribution. As classically done, we approached $F_K$ by the c.d.f.\ of the statistic of the Kolmogorov--Smirnov goodness-of-fit test for a simple hypothesis. From a practical perspective, we used the function {\tt pkolmogorov1x} given in the code of the \textsf{R} function {\tt ks.test}.

An even more important practical issue is the computation of $S_{g,n}$ and $T_{g,n}$ for $g$ equal to $g_\mu$, $g_\sigma$ and $g_\xi$. The numerical solving of the system~\eqref{eq:sys_PWM} can actually be avoided by using the following accurate approximations proposed in \cite{HosWalWoo85}:
\begin{align}
\label{eq:tilde_g_xi}
\xi &= \tilde g_\xi(\bm \beta) = f \circ f_\xi(\bm \beta) \quad \mbox{with} \quad \left\{
\begin{array}{l}
f(x) = - 7.8590 x - 2.9554 x^2, \\
\disp f_\xi(x_1,x_2,x_3) = \frac{2 x_2 - x_1}{3 x_3 - x_1} - \frac{\log 2}{\log 3},
\end{array}
\right. \\
\label{eq:tilde_g_sigma}
\sigma &= \tilde g_\sigma(\bm \beta) = \frac{(2 \beta_{\nu_2} - \beta_{\nu_1}) \xi}{\Gamma(1 - \xi) (2^\xi - 1)}, \\
\label{eq:tilde_g_mu}
\mu &= \tilde g_\mu(\bm \beta) = \beta_{\nu_1}  + \frac{\sigma}{\xi} \{1 - \Gamma(1 - \xi)\}.
\end{align}
We also used the above approximations in combination with symbolic differentiation to obtain approximations of the first-order partial derivatives of $g_\mu$, $g_\sigma$ and $g_\xi$ necessary for computing $\hat \sigma_{g,n}^2$ and $\check \sigma_{g,n}^2$ in Proposition~\ref{prop:conv_sigma}. 

\section{The GPWM method and related statistics}
\label{sec:GPWM}

Alternative test statistics could yet be based on the generalization of the PWM method for estimating the parameters of the GEV proposed by \cite{DieGuiNavRib08}. Under the assumption that $H_0$ in~\eqref{eq:H0} holds with $F$ the c.d.f.\ of the GEV distribution, the estimation method involves the quantities $\beta_i = \Ex [ X \nu_i\{F(X)\} ]$, $i \in \{1,2,3\}$, with $\nu_1(x) = - x \log x$, $\nu_2(x) = x (\log x)^2$ and $\nu_3(x) = - x^2 \log x$, $x \in [0,1]$. The moments $\beta_1$, $\beta_2$ and $\beta_3$ are called {\em generalized} PWM in \cite{DieGuiNavRib08} as they are not PWM in the classical sense considered for instance in \cite{GreLanWal79} because of the definitions of the functions $\nu_1$, $\nu_2$ and $\nu_3$.

The relationship between the GEV parameters $\mu$, $\sigma$ and $\xi$ and the GPWM $\beta_{1}$, $\beta_{2}$ and $\beta_{3}$ is given by the following system of equations, which holds for $\sigma > 0$, $\xi < 2$ and $\xi \neq 0$:
\begin{equation}
\label{eq:sys_GPWM}
\left\{
\begin{array}{l}
\disp 4 \beta_{1} = \mu - \frac{\sigma}{\xi} \{ 1 - 2^\xi \Gamma(2 - \xi) \}, \\
\disp \beta_{1} - \beta_{2} = \frac{\sigma}{2^{3-\xi}} \Gamma(2 - \xi), \\
\disp \frac{2 (\beta_{1} - \beta_{2})}{\beta_{1} - 9/4  \beta_{3}} = \frac{\xi}{1 - (3/2)^\xi}.
\end{array}
\right.
\end{equation}
Again, the constraint $\xi \neq 0$ can be removed using continuity arguments. Estimators of $\mu$, $\sigma$ and $\xi$ are then the solution of the above system in which $\beta_{1}$, $\beta_{2}$ and $\beta_{3}$ are replaced by suitable estimators. A sensible choice is to consider $\hat \beta_{1,1:n}$, $\hat \beta_{2,1:n}$ and $\hat \beta_{3,1:n}$, respectively, defined analogously to~\eqref{eq:hatbetakl}.

Let $h_\mu$, $h_\sigma$ and $h_\xi$ be the components of the map from $\R^3$ to $\R^3$ implicitly defined by~\eqref{eq:sys_GPWM} that transforms $\bm \beta = (\beta_{1}, \beta_{2}, \beta_{3})$ into $(\mu,\sigma,\xi)$. In practice, to evaluate whether $\hat{\bm \beta}_{1:n} = (\hat \beta_{1,1:n}, \hat \beta_{2,1:n}, \hat \beta_{3,1:n}) \in \R^3$ belongs to the domain of definition $\DD_h$ of $h_\xi$, $h_\sigma$ and  $h_\mu$, we shall attempt to solve numerically~\eqref{eq:sys_GPWM} and will conclude that $\hat{\bm \beta}_{1:n} \in \DD_h$ if and only if $h_\xi(\hat{\bm \beta}_{1:n}) < 2$, $h_\sigma(\hat{\bm \beta}_{1:n}) > 0$ and $h_\mu(\hat{\bm \beta}_{1:n}) \in \R$.

The above ingredients lead to the following generalization of the statistics $S_{g,n}$ in~\eqref{eq:S_g_n}:
\begin{equation}
\label{eq:S_h_n}
S_{h,n} =  \max_{1 \leq k \leq n-1} \frac{k (n-k)}{n^{3/2}} \1( \hat{\bm \beta}_{1:k} \in \DD_h, \hat{\bm \beta}_{k+1:n} \in \DD_h) \left| h( \hat{\bm \beta}_{1:k}) - h(\hat{\bm \beta}_{k+1:n}) \right|, \, h \in \{h_\mu, h_\sigma, h_\xi \}.
\end{equation}
Should the block maxima $X_i$ be GEV distributed, $S_{h_\xi,n}$ (resp.\ $S_{h_\sigma,n}$, $S_{h_\mu,n}$) is a test statistic particularly sensitive to changes in the shape (resp.\ scale, location) parameter. 

We conjecture that the asymptotics of the statistics in~\eqref{eq:S_h_n} under $H_0$ in~\eqref{eq:H0} take the same form as those of the statistics $S_{g,n}$ considered in the previous section, the only difference being the definitions of the functions $\nu_1$, $\nu_2$ and~$\nu_3$. Let $\sigma_{h}^2$ be the analogue of~\eqref{eq:sigma_g}. 
% \begin{equation}
% \label{eq:sigma_h}
% \sigma_{h}^2 = \sum_{i,j=1}^3 \partial_i h(\bm \beta) \partial_j h(\bm \beta) \Cov(Y_{\nu_i},Y_{\nu_j}),
% \end{equation}
% where, for any $\nu \in \{\nu_1,\nu_2,\nu_3\}$, $Y_\nu$ is defined in~\eqref{eq:Ynu}. 
Specifically, we conjecture that, under the null, $S_{h,n}$ converges in distribution to $\sigma_{h} \sup_{s \in [0,1]} |\U(s) - s \U(1)|$, where $\U$ is a standard Brownian motion on $[0,1]$. By means of a sequential extension of the method of proof used in \cite{DieGuiNavRib08}, it might be possible to show the aforementioned weak convergence when $F$ is the c.d.f.\ of the GEV and $h$ is either $h_\xi$, $h_\sigma$ or $h_\mu$. 

To carry out the tests, consider the analogue $\hat \sigma_{h,n}^2$ of~\eqref{eq:hat_sigma_g_n}.
% the pseudo-observations defined in~\eqref{eq:Y_nu_i_n} and, as previously, for any $i,j \in \{1,2,3\}$, let $\widehat \Cov_n(Y_{\nu_i},Y_{\nu_j})$ be the sample covariance computed from $(Y_{\nu_i,1,n}, Y_{\nu_j,1,n})$,\dots,$(Y_{\nu_i,n,n}, Y_{\nu_j,n,n})$. Then, 
% \begin{equation}
% \label{eq:hat_sigma_h_n}
% \hat \sigma_{h,n}^2 = \sum_{i,j=1}^3 \partial_i h(\hat{\bm \beta}_{1:n}) \partial_j h(\hat{\bm \beta}_{1:n}) \widehat \Cov_n(Y_{\nu_i},Y_{\nu_j})
% \end{equation}
% is a sensible estimator of $\sigma_{h}^2$ in~\eqref{eq:sigma_h} and 
Approximate p-values for $S_{h,n}$ can be computed as $1 - F_K(S_{h,n} / \hat \sigma_{h,n} )$, where $F_K$ is the c.d.f.\ of the Kolmogorov distribution.

From a practical perspective, as in the previous section, the numerical solving of system~\eqref{eq:sys_GPWM} can be avoided. Starting from the fact that the function $x \mapsto \{ x/(1-(3/2)^x) \}^{\log(3/2)}$ is almost linear on $[-1,1]$, we propose the following accurate approximations of $h_\xi$, $h_\sigma$ and $h_\mu$:
\begin{align}
\label{eq:tilde_h_xi}
\xi &= \tilde h_\xi(\bm \beta) = f \circ f_\xi(\bm \beta) \qquad \mbox{with} \qquad \left\{
\begin{array}{l}
\disp f(x) =  \frac{1.442853 - (-x)^{0.4054651}}{0.1183375}, \\
\\
\disp f_\xi(x_1,x_2,x_3) = \frac{2 (x_1 - x_2)}{x_1 - \frac{9}{4} x_3},
\end{array}
\right. \\
\label{eq:tilde_h_sigma}
\sigma &= \tilde h_\sigma(\bm \beta) = \frac{2^{3 - \xi} (\beta_{1} - \beta_{2})}{\Gamma(2 - \xi)}, \\
\label{eq:tilde_h_mu}
\mu &= \tilde h_\mu(\bm \beta) = 4 \beta_{1} + \frac{\sigma}{\xi} \{1 -  2^\xi \Gamma(2 - \xi) \}.
\end{align}
As in Section~\ref{sec:practical}, approximations of the partial derivatives of $h_\xi$, $h_\sigma$ and $h_\mu$ which are necessary for computing $\hat \sigma_{h,n}^2$, are obtained by differentiating $\tilde h_\xi$, $\tilde h_\sigma$ and $\tilde h_\mu$.

\section{Monte-Carlo experiments}
\label{sec:sims}

Monte-Carlo experiments were carried out in order to investigate the finite-sample behavior of the tests studied in the previous sections. Nine statistics were considered: $S_{g,n}$ in~\eqref{eq:S_g_n} and $T_{g,n}$ in~\eqref{eq:T_g_n} with $g$ being respectively $\tilde g_\xi$ in~\eqref{eq:tilde_g_xi}, $\tilde g_\sigma$ in~\eqref{eq:tilde_g_sigma} and $\tilde g_\mu$ in~\eqref{eq:tilde_g_mu}, and the statistic $S_{h,n}$ in~\eqref{eq:S_h_n} with $h$ being respectively $\tilde h_\xi$ in~\eqref{eq:tilde_h_xi}, $\tilde h_\sigma$ in~\eqref{eq:tilde_h_sigma} and $\tilde h_\mu$ in~\eqref{eq:tilde_h_mu}. Recall that, under the conditions of Proposition~\ref{prop:weak_S_g_n}, the statistics $S_{g,n}$ and $T_{g,n}$ are asymptotically equivalent under the null, and that the functions $\tilde g_\xi$, $\tilde g_\sigma$, $\tilde g_\mu$ (resp.\ $\tilde h_\xi$, $\tilde h_\sigma$, $\tilde h_\mu$) are related to the PWM (resp.\ GPWM) method for estimating the GEV parameters summarized in Section~\ref{sec:PWM_method} (resp.\ Section~\ref{sec:GPWM}). When computing the statistics, the following asymptotically negligible change was made: given a small integer $r \geq 1$, maxima in~\eqref{eq:S_g_n},~\eqref{eq:T_g_n} and~\eqref{eq:S_h_n} were taken over the set $\{r,\dots,n-r\}$ instead of the set $\{1,\dots,n-1\}$ or $\{3,\dots,n-3\}$. The role of the integer $r$ is to exclude subsamples of cardinality strictly smaller than $r$. In our simulations, we used the value $r=10$ and, in all the considered settings, encountered hardly any sample for which the indicators in~\eqref{eq:S_g_n} and~\eqref{eq:S_h_n} turned to be zero for some value of $k$.

We started our experiments by investigating the empirical levels of the nine tests for samples of size $n \in \{50,100,200,400\}$ generated from the GEV($\mu$,$\sigma$,$\xi$). To estimate the power of the tests, 1000 samples were generated under each data generating scenario and all the tests were carried out at the 5\% significance level. Our main finding was that all tests had a tendency of being too liberal when $|\mu/\sigma|$ was large. This prompted us to run the tests based on $S_{g,n}$ in~\eqref{eq:S_g_n} (resp.\ $T_{g,n}$ in~\eqref{eq:T_g_n}, $S_{h,n}$ in~\eqref{eq:S_h_n}) on the data translated by $- \tilde g_\mu(\hat{\bm \beta}_{1:n})$ (resp.\ $- \tilde g_\mu(\hat{\bm b}_{1:n})$, $- \tilde h_\mu(\hat{\bm \beta}_{1:n})$). Since $\tilde g_\mu(\hat{\bm \beta}_{1:n}) \as \tilde g_\mu(\bm \beta)$ and $\tilde g_\mu(\hat{\bm b}_{1:n}) \as \tilde g_\mu(\bm \beta)$ under Condition~\ref{cond:PWM}, the asymptotic results stated in Section~\ref{sec:limiting} remain valid when the tests based on the statistics $S_{g,n}$ and $T_{g,n}$ are applied on the translated data. 

The adopted translation of the data led to a significant improvement of the empirical levels. Additional improvement of the levels of the test based on $T_{\tilde g_\sigma,n}$ (resp.\ $T_{\tilde g_\xi,n}$) was obtained by multiplying $\hat \sigma^2_{\tilde g_\sigma,n}$ (resp.\ $\hat \sigma^2_{\tilde g_\xi,n}$) in~\eqref{eq:hat_sigma_g_n} by $(n+10)/n$ (resp. $(n+20)/n$). A last asymptotically negligible change was carried out for the tests based on $S_{h,n}$: the statistics given in~\eqref{eq:S_h_n} were computed with $\gamma$ in~\eqref{eq:Fkl} equal to zero. 

As the tests based on $S_{g,n}$ in~\eqref{eq:S_g_n} turned out to be consistently worse-behaved than those based on $T_{g,n}$ in~\eqref{eq:T_g_n}, for the sake of brevity, we do not report any results for the former in the forthcoming tables.

Table~\ref{resH0_mu_0} gives the empirical levels of the six remaining tests for samples generated from the GEV($\mu$,1,$\xi$) for $\mu = 0$ and $\xi \in \{-1,-0.9,\dots,1,1.1\}$. Analogue tables for $\mu \in \{-100,-10,10,100\}$ can be found in Section~1.1 %~\ref{sec:H0GEV} 
of the supplementary material (see Tables~1--4 %\ref{resH0_mu_-100}--\ref{resH0_mu_100} 
therein).  For the sample sizes under consideration, the tests based on $T_{g,n}$ appear to hold their level well for $-1 \leq \xi \leq 0.4$, that is, as long as the variance of the GEV is finite. The tests based on $S_{h,n}$ are overall too conservative for $-1 \leq \xi \leq 0.3$. Their empirical levels are closer to the 5\% nominal level for $0.4 \leq \xi \leq 0.8$ and they start being too liberal for $\xi \geq 0.9$. 

A similar experiment was carried out for samples of independent block maxima obtained from blocks of size $b \in \{1,5,10,50\}$ from the generalized Pareto distribution (GPD) with location parameter equal to 0, scale parameter equal to 1 and shape parameter equal to $\xi$ (abbreviated as GPD(0,1,$\xi$) as we continue). The experiment was designed having in mind the fact that the GPD with shape parameter $\xi$ is in the maximum domain of attraction of the GEV with shape parameter $\xi$. The results for $b=1$ are reported in Table~\ref{resH0_bm_1}. The results for $b \in \{5,10,50\}$ are reported in Section~1.2 %\ref{sec:H0GPD} 
of the supplementary material (see Tables~5--7 %\ref{resH0_bm_5}--\ref{resH0_bm_50} 
therein). Another variant of this experiment was carried out by using the absolute value of the standard Student $t$ distribution with $1/\xi$ degrees of freedom instead of the GPD(0,1,$\xi$). Indeed, the absolute value of the standard Student $t$ distribution with $1/\xi$ degrees of freedom, $\xi > 0$, is in the maximum domain of attraction of the GEV with shape parameter $\xi$. The results are reported in Section~1.3 %\ref{sec:H0abst} 
of the supplementary material (see Tables~8--11 %\ref{resH0_bm_abst1}--\ref{resH0_bm_abst50} 
therein). The conclusions in terms of $\xi$ are overall the same as for the first experiment: the tests based on $T_{g,n}$ behave mostly adequately when $\xi \leq 0.4$ and become too liberal for $\xi \geq 0.5$; the tests based on $S_{h,n}$ are overall too conservative for $\xi \leq 0.3$ and are too liberal for $\xi \geq 0.9$. 

\begin{table}[t!]
\centering
\caption{Percentage of rejection of $H_0$ computed from 1000 samples of size $n \in \{50, 100, 200, 400\}$ generated from the GEV(0,1,$\xi$).} 
\label{resH0_mu_0}
\begingroup\small
\begin{tabular}{rrrrrrrrrrrrrrrr}
  \hline
  \multicolumn{2}{c}{} & \multicolumn{3}{c}{PWM / $T_{g,n}$} & \multicolumn{3}{c}{GPWM / $S_{h,n}$} & \multicolumn{2}{c}{} & \multicolumn{3}{c}{PWM / $T_{g,n}$} & \multicolumn{3}{c}{GPWM / $S_{h,n}$}\\ \cmidrule(lr){3-5} \cmidrule(lr){6-8} \cmidrule(lr){11-13} \cmidrule(lr){14-16} $\xi$ & $n$ & $\tilde g_\mu$ & $\tilde g_\sigma$ & $\tilde g_\xi$ & $\tilde h_\mu$ & $\tilde h_\sigma$ & $\tilde h_\xi$ & $\xi$ & $n$ & $\tilde g_\mu$ & $\tilde g_\sigma$ & $\tilde g_\xi$ & $\tilde h_\mu$ & $\tilde h_\sigma$ & $\tilde h_\xi$ \\ \hline
-1.0 & 50 & 3.0 & 0.7 & 4.1 & 7.2 & 0.8 & 0.2 & 0.1 & 50 & 3.8 & 3.2 & 4.0 & 1.2 & 0.6 & 1.4 \\ 
   & 100 & 2.9 & 1.6 & 3.7 & 7.4 & 1.7 & 0.2 &  & 100 & 4.9 & 3.4 & 2.7 & 2.7 & 1.4 & 1.3 \\ 
   & 200 & 4.0 & 2.9 & 4.1 & 5.4 & 3.2 & 0.6 &  & 200 & 4.7 & 4.2 & 2.5 & 3.8 & 2.4 & 2.9 \\ 
   & 400 & 3.8 & 3.3 & 4.4 & 3.9 & 2.9 & 1.1 &  & 400 & 4.4 & 4.5 & 3.4 & 4.1 & 3.5 & 3.9 \\ 
  -0.9 & 50 & 2.6 & 0.7 & 3.8 & 4.8 & 0.2 & 0.4 & 0.2 & 50 & 6.2 & 5.4 & 4.4 & 2.3 & 1.5 & 1.5 \\ 
   & 100 & 4.3 & 2.0 & 4.2 & 6.3 & 1.5 & 0.2 &  & 100 & 5.2 & 4.9 & 3.6 & 3.6 & 2.2 & 1.6 \\ 
   & 200 & 4.8 & 2.8 & 5.1 & 4.8 & 2.4 & 0.7 &  & 200 & 5.0 & 4.4 & 3.2 & 4.3 & 2.6 & 4.2 \\ 
   & 400 & 4.0 & 3.3 & 3.6 & 3.4 & 3.5 & 1.1 &  & 400 & 4.6 & 4.2 & 4.2 & 4.2 & 3.9 & 4.4 \\ 
  -0.8 & 50 & 2.2 & 0.8 & 4.0 & 4.5 & 0.2 & 0.3 & 0.3 & 50 & 5.7 & 3.7 & 5.0 & 2.2 & 0.8 & 1.4 \\ 
   & 100 & 2.2 & 2.4 & 2.8 & 4.3 & 1.6 & 0.1 &  & 100 & 4.3 & 4.1 & 2.9 & 2.6 & 1.4 & 2.0 \\ 
   & 200 & 4.1 & 3.8 & 3.8 & 4.7 & 3.6 & 0.5 &  & 200 & 6.1 & 6.0 & 2.5 & 4.5 & 3.0 & 3.2 \\ 
   & 400 & 5.6 & 4.1 & 4.2 & 5.9 & 4.3 & 2.0 &  & 400 & 5.0 & 4.8 & 2.6 & 3.7 & 3.7 & 4.0 \\ 
  -0.7 & 50 & 2.4 & 1.4 & 3.9 & 3.6 & 0.1 & 0.0 & 0.4 & 50 & 6.6 & 4.4 & 8.1 & 3.1 & 1.5 & 2.1 \\ 
   & 100 & 3.1 & 2.5 & 4.3 & 4.2 & 1.1 & 0.1 &  & 100 & 6.8 & 5.6 & 4.5 & 4.5 & 2.0 & 2.9 \\ 
   & 200 & 3.9 & 3.7 & 4.1 & 5.0 & 2.7 & 0.6 &  & 200 & 6.1 & 5.0 & 3.3 & 4.0 & 3.8 & 3.0 \\ 
   & 400 & 3.7 & 3.7 & 3.9 & 4.3 & 3.5 & 1.3 &  & 400 & 5.3 & 4.5 & 2.1 & 3.6 & 4.8 & 4.3 \\ 
  -0.6 & 50 & 2.2 & 1.9 & 3.6 & 2.2 & 0.1 & 0.3 & 0.5 & 50 & 8.0 & 4.2 & 9.5 & 2.7 & 1.3 & 3.5 \\ 
   & 100 & 4.1 & 2.9 & 3.8 & 3.7 & 0.7 & 0.3 &  & 100 & 9.2 & 7.0 & 6.8 & 4.8 & 3.2 & 2.9 \\ 
   & 200 & 5.1 & 4.1 & 3.5 & 4.3 & 2.5 & 0.7 &  & 200 & 5.8 & 4.5 & 3.4 & 4.8 & 3.3 & 3.5 \\ 
   & 400 & 4.3 & 4.6 & 4.0 & 4.4 & 4.1 & 2.4 &  & 400 & 5.4 & 5.6 & 2.9 & 4.5 & 4.1 & 4.2 \\ 
  -0.5 & 50 & 3.3 & 3.2 & 4.5 & 2.7 & 0.3 & 0.7 & 0.6 & 50 & 6.7 & 4.8 & 16.6 & 4.2 & 2.5 & 3.4 \\ 
   & 100 & 3.4 & 2.3 & 4.7 & 3.7 & 0.5 & 0.7 &  & 100 & 8.5 & 6.8 & 10.9 & 5.6 & 4.4 & 3.0 \\ 
   & 200 & 3.4 & 3.8 & 4.9 & 3.8 & 2.2 & 1.3 &  & 200 & 6.4 & 5.9 & 7.6 & 4.5 & 3.9 & 3.7 \\ 
   & 400 & 4.5 & 4.4 & 4.1 & 4.5 & 3.3 & 2.4 &  & 400 & 5.3 & 7.6 & 2.7 & 3.8 & 4.8 & 3.4 \\ 
  -0.4 & 50 & 2.5 & 3.2 & 4.9 & 2.0 & 0.4 & 0.3 & 0.7 & 50 & 7.8 & 4.0 & 18.5 & 3.9 & 2.1 & 3.2 \\ 
   & 100 & 3.6 & 3.5 & 3.9 & 2.8 & 1.2 & 1.0 &  & 100 & 10.4 & 7.2 & 17.1 & 5.8 & 5.1 & 2.4 \\ 
   & 200 & 3.0 & 5.0 & 3.9 & 3.2 & 2.7 & 1.5 &  & 200 & 7.1 & 6.6 & 8.5 & 5.3 & 4.7 & 3.2 \\ 
   & 400 & 5.3 & 4.9 & 3.7 & 5.2 & 3.5 & 1.9 &  & 400 & 7.4 & 6.9 & 6.5 & 6.3 & 5.4 & 3.5 \\ 
  -0.3 & 50 & 2.8 & 4.2 & 4.4 & 1.3 & 0.7 & 0.6 & 0.8 & 50 & 9.6 & 3.1 & 27.4 & 4.6 & 4.3 & 5.9 \\ 
   & 100 & 4.6 & 3.0 & 4.3 & 3.9 & 0.9 & 0.7 &  & 100 & 10.1 & 7.3 & 24.3 & 7.2 & 6.0 & 3.2 \\ 
   & 200 & 4.6 & 4.2 & 4.6 & 4.4 & 1.9 & 2.1 &  & 200 & 9.4 & 7.7 & 18.8 & 6.8 & 6.3 & 2.7 \\ 
   & 400 & 4.2 & 4.1 & 4.7 & 3.6 & 3.0 & 3.1 &  & 400 & 7.5 & 7.4 & 10.6 & 5.4 & 5.1 & 3.0 \\ 
  -0.2 & 50 & 2.6 & 2.9 & 3.0 & 1.5 & 0.4 & 0.4 & 0.9 & 50 & 11.9 & 5.5 & 31.1 & 7.0 & 5.3 & 6.1 \\ 
   & 100 & 5.1 & 4.3 & 3.9 & 3.6 & 0.9 & 1.0 &  & 100 & 10.8 & 7.4 & 33.6 & 7.8 & 6.0 & 2.2 \\ 
   & 200 & 4.7 & 4.6 & 4.3 & 4.0 & 2.7 & 2.0 &  & 200 & 10.4 & 9.3 & 25.5 & 7.4 & 6.0 & 3.0 \\ 
   & 400 & 5.1 & 4.1 & 4.6 & 3.9 & 3.0 & 3.2 &  & 400 & 7.6 & 7.4 & 18.9 & 6.0 & 6.7 & 2.7 \\ 
  -0.1 & 50 & 4.0 & 3.8 & 3.9 & 1.8 & 0.7 & 0.8 & 1.0 & 50 & 12.0 & 4.6 & 37.8 & 7.6 & 6.4 & 9.5 \\ 
   & 100 & 4.2 & 4.2 & 4.7 & 2.6 & 1.6 & 1.5 &  & 100 & 11.6 & 7.4 & 39.1 & 7.4 & 6.7 & 4.2 \\ 
   & 200 & 4.7 & 4.4 & 4.5 & 4.1 & 2.4 & 2.1 &  & 200 & 12.3 & 8.5 & 32.4 & 9.6 & 10.1 & 3.0 \\ 
   & 400 & 4.0 & 3.7 & 4.2 & 3.4 & 2.4 & 3.8 &  & 400 & 7.8 & 5.2 & 27.5 & 5.8 & 6.2 & 2.8 \\ 
  0.0 & 50 & 4.7 & 3.0 & 4.0 & 1.8 & 0.3 & 1.3 & 1.1 & 50 & 13.9 & 4.8 & 38.8 & 10.8 & 7.6 & 12.2 \\ 
   & 100 & 3.6 & 3.3 & 2.8 & 2.6 & 1.3 & 1.5 &  & 100 & 15.3 & 8.3 & 44.9 & 12.0 & 9.6 & 5.3 \\ 
   & 200 & 4.5 & 3.6 & 3.8 & 3.6 & 2.2 & 1.7 &  & 200 & 9.8 & 7.4 & 44.4 & 7.9 & 9.5 & 2.3 \\ 
   & 400 & 4.6 & 3.4 & 3.7 & 4.0 & 2.8 & 4.0 &  & 400 & 7.5 & 6.4 & 38.0 & 7.0 & 8.3 & 2.4 \\ 
   \hline
\end{tabular}
\endgroup
\end{table}

\begin{table}[t!]
\centering
\caption{Percentage of rejection of $H_0$ computed from 1000 samples of $n \in \{50, 100, 200, 400\}$ independent block maxima obtained from blocks of size 1 from the GPD(0,1,$\xi$).} 
\label{resH0_bm_1}
\begingroup\small
\begin{tabular}{rrrrrrrrrrrrrrrr}
  \hline
  \multicolumn{2}{c}{} & \multicolumn{3}{c}{PWM / $T_{g,n}$} & \multicolumn{3}{c}{GPWM / $S_{h,n}$} & \multicolumn{2}{c}{} & \multicolumn{3}{c}{PWM / $T_{g,n}$} & \multicolumn{3}{c}{GPWM / $S_{h,n}$}\\ \cmidrule(lr){3-5} \cmidrule(lr){6-8} \cmidrule(lr){11-13} \cmidrule(lr){14-16} $\xi$ & $n$ & $\tilde g_\mu$ & $\tilde g_\sigma$ & $\tilde g_\xi$ & $\tilde h_\mu$ & $\tilde h_\sigma$ & $\tilde h_\xi$ & $\xi$ & $n$ & $\tilde g_\mu$ & $\tilde g_\sigma$ & $\tilde g_\xi$ & $\tilde h_\mu$ & $\tilde h_\sigma$ & $\tilde h_\xi$ \\ \hline
-1.0 & 50 & 3.0 & 3.9 & 1.7 & 1.6 & 3.0 & 1.8 & 0.1 & 50 & 6.1 & 2.9 & 5.5 & 3.5 & 1.5 & 5.0 \\ 
   & 100 & 3.7 & 4.3 & 2.5 & 2.2 & 2.8 & 1.6 &  & 100 & 5.7 & 3.3 & 4.3 & 3.9 & 2.2 & 4.2 \\ 
   & 200 & 3.8 & 4.3 & 3.1 & 2.7 & 3.6 & 2.9 &  & 200 & 4.9 & 3.3 & 2.4 & 3.8 & 2.9 & 3.5 \\ 
   & 400 & 4.2 & 4.7 & 3.7 & 3.9 & 4.3 & 3.1 &  & 400 & 5.5 & 3.5 & 4.2 & 4.8 & 3.7 & 4.9 \\ 
  -0.9 & 50 & 3.4 & 3.5 & 2.8 & 1.5 & 3.0 & 1.8 & 0.2 & 50 & 6.6 & 3.4 & 5.0 & 2.6 & 1.5 & 3.9 \\ 
   & 100 & 4.9 & 3.5 & 3.8 & 2.8 & 1.8 & 2.5 &  & 100 & 5.4 & 3.6 & 3.7 & 3.6 & 2.4 & 4.3 \\ 
   & 200 & 4.4 & 4.4 & 3.5 & 3.6 & 3.7 & 3.2 &  & 200 & 5.7 & 4.0 & 3.1 & 5.0 & 4.5 & 4.8 \\ 
   & 400 & 3.5 & 4.3 & 3.5 & 3.3 & 3.6 & 4.2 &  & 400 & 5.0 & 5.3 & 3.1 & 3.8 & 4.6 & 4.4 \\ 
  -0.8 & 50 & 3.8 & 1.6 & 2.3 & 1.1 & 1.3 & 1.7 & 0.3 & 50 & 6.3 & 3.7 & 6.3 & 3.0 & 1.8 & 5.2 \\ 
   & 100 & 3.5 & 2.7 & 3.0 & 2.7 & 1.9 & 2.0 &  & 100 & 6.5 & 4.8 & 5.5 & 4.4 & 2.9 & 4.6 \\ 
   & 200 & 4.4 & 3.7 & 3.5 & 3.5 & 2.9 & 3.4 &  & 200 & 5.7 & 5.0 & 2.3 & 4.0 & 3.4 & 4.7 \\ 
   & 400 & 4.8 & 4.2 & 4.3 & 4.4 & 3.5 & 4.5 &  & 400 & 5.2 & 4.1 & 2.8 & 4.3 & 3.8 & 4.5 \\ 
  -0.7 & 50 & 4.5 & 1.9 & 2.0 & 1.9 & 1.0 & 1.8 & 0.4 & 50 & 9.8 & 5.1 & 9.0 & 5.7 & 2.9 & 4.3 \\ 
   & 100 & 5.2 & 2.5 & 3.2 & 3.5 & 2.0 & 1.5 &  & 100 & 7.8 & 6.1 & 5.0 & 5.2 & 3.0 & 4.1 \\ 
   & 200 & 4.3 & 3.3 & 3.8 & 2.9 & 2.1 & 3.8 &  & 200 & 7.3 & 6.0 & 3.3 & 6.1 & 5.0 & 3.6 \\ 
   & 400 & 3.4 & 4.4 & 5.3 & 3.4 & 3.9 & 4.6 &  & 400 & 5.9 & 4.3 & 2.8 & 5.4 & 3.9 & 4.4 \\ 
  -0.6 & 50 & 4.1 & 2.2 & 2.5 & 2.0 & 1.8 & 2.4 & 0.5 & 50 & 7.8 & 3.9 & 11.0 & 4.3 & 2.0 & 3.8 \\ 
   & 100 & 4.5 & 3.1 & 3.3 & 3.5 & 2.2 & 3.8 &  & 100 & 7.0 & 6.8 & 9.2 & 5.1 & 4.9 & 4.5 \\ 
   & 200 & 4.0 & 3.2 & 3.3 & 2.8 & 2.9 & 2.9 &  & 200 & 6.8 & 5.0 & 6.2 & 4.8 & 3.1 & 4.3 \\ 
   & 400 & 4.1 & 4.5 & 3.4 & 4.1 & 3.8 & 4.0 &  & 400 & 6.5 & 5.9 & 3.5 & 4.3 & 4.6 & 3.3 \\ 
  -0.5 & 50 & 4.6 & 1.9 & 2.3 & 2.1 & 1.2 & 1.8 & 0.6 & 50 & 8.4 & 4.5 & 17.0 & 5.1 & 2.8 & 5.5 \\ 
   & 100 & 4.1 & 2.7 & 2.4 & 2.3 & 1.9 & 2.5 &  & 100 & 10.1 & 7.1 & 11.1 & 6.7 & 4.9 & 4.9 \\ 
   & 200 & 6.3 & 3.6 & 5.4 & 4.5 & 3.6 & 3.7 &  & 200 & 9.3 & 7.6 & 7.0 & 6.2 & 5.3 & 3.6 \\ 
   & 400 & 5.7 & 2.9 & 4.9 & 5.0 & 3.0 & 5.6 &  & 400 & 5.9 & 6.4 & 4.8 & 5.1 & 5.3 & 3.8 \\ 
  -0.4 & 50 & 5.0 & 2.1 & 3.7 & 1.9 & 1.0 & 3.1 & 0.7 & 50 & 9.9 & 5.7 & 22.1 & 6.2 & 3.9 & 6.7 \\ 
   & 100 & 4.5 & 1.7 & 3.5 & 2.7 & 1.5 & 3.1 &  & 100 & 9.9 & 7.0 & 19.8 & 7.2 & 5.2 & 3.5 \\ 
   & 200 & 4.6 & 3.3 & 3.3 & 3.7 & 2.9 & 3.4 &  & 200 & 8.0 & 7.5 & 11.2 & 6.6 & 5.4 & 3.2 \\ 
   & 400 & 5.2 & 3.8 & 3.8 & 4.6 & 3.8 & 3.3 &  & 400 & 7.1 & 5.5 & 5.2 & 6.5 & 4.4 & 4.4 \\ 
  -0.3 & 50 & 3.3 & 1.9 & 3.6 & 1.3 & 0.6 & 1.8 & 0.8 & 50 & 11.4 & 5.9 & 27.4 & 7.1 & 5.5 & 6.9 \\ 
   & 100 & 5.1 & 3.4 & 3.2 & 2.9 & 2.2 & 3.2 &  & 100 & 13.2 & 7.1 & 25.2 & 8.8 & 6.6 & 3.8 \\ 
   & 200 & 5.4 & 2.7 & 3.0 & 3.8 & 2.8 & 4.0 &  & 200 & 10.4 & 7.2 & 16.7 & 7.7 & 7.5 & 2.5 \\ 
   & 400 & 6.0 & 3.6 & 4.1 & 5.8 & 3.4 & 4.1 &  & 400 & 6.3 & 6.2 & 12.0 & 5.1 & 4.8 & 3.2 \\ 
  -0.2 & 50 & 4.9 & 1.4 & 3.5 & 1.8 & 0.8 & 2.1 & 0.9 & 50 & 12.4 & 5.9 & 31.0 & 9.9 & 7.7 & 9.4 \\ 
   & 100 & 4.5 & 2.6 & 3.3 & 3.0 & 1.3 & 2.8 &  & 100 & 12.4 & 8.6 & 32.0 & 7.9 & 8.9 & 5.2 \\ 
   & 200 & 5.2 & 3.1 & 5.0 & 4.0 & 2.7 & 4.1 &  & 200 & 10.2 & 6.8 & 25.0 & 8.1 & 7.2 & 1.7 \\ 
   & 400 & 4.2 & 3.9 & 3.2 & 4.7 & 4.1 & 4.0 &  & 400 & 8.1 & 7.0 & 18.5 & 6.3 & 5.6 & 2.4 \\ 
  -0.1 & 50 & 6.8 & 2.8 & 3.8 & 4.0 & 0.9 & 3.5 & 1.0 & 50 & 12.6 & 4.9 & 37.0 & 8.5 & 7.0 & 9.1 \\ 
   & 100 & 5.1 & 2.2 & 3.5 & 3.3 & 1.8 & 4.0 &  & 100 & 13.6 & 6.8 & 40.4 & 11.1 & 8.8 & 4.8 \\ 
   & 200 & 4.6 & 3.7 & 3.0 & 4.2 & 3.0 & 3.8 &  & 200 & 10.9 & 7.4 & 35.1 & 9.7 & 9.5 & 2.1 \\ 
   & 400 & 5.5 & 4.6 & 5.5 & 5.3 & 3.9 & 4.8 &  & 400 & 8.0 & 7.4 & 25.6 & 6.7 & 7.3 & 3.1 \\ 
  0.0 & 50 & 6.1 & 3.5 & 2.9 & 2.9 & 1.8 & 3.7 & 1.1 & 50 & 16.3 & 5.6 & 40.0 & 11.7 & 9.2 & 14.3 \\ 
   & 100 & 5.0 & 3.1 & 4.4 & 3.6 & 1.5 & 4.2 &  & 100 & 15.6 & 8.3 & 47.3 & 10.8 & 11.3 & 5.7 \\ 
   & 200 & 6.2 & 3.7 & 4.3 & 5.1 & 4.0 & 4.1 &  & 200 & 11.2 & 8.3 & 41.5 & 11.0 & 9.6 & 3.7 \\ 
   & 400 & 5.7 & 4.7 & 4.5 & 5.1 & 4.0 & 5.1 &  & 400 & 8.7 & 7.7 & 37.1 & 8.0 & 7.4 & 2.5 \\ 
   \hline
\end{tabular}
\endgroup
\end{table}

Additional experiments under the null (whose results are reported in Table~12 %\ref{resH0others} 
of the supplementary material) were carried out to assess the empirical levels of the tests for samples generated from various normal and exponential distributions. No test turned out to be too liberal. 

The next experiments focused on the power of the tests. To ease reading of the forthcoming tables, among the six tests proposed in the paper, the rejection percentages of those that are expected to be sensitive to the alternative under consideration are colored in light gray.

We started by assessing the power of the tests in the case of a change in the shape parameter of the GEV. First, for $t \in \{0.25,0.5,0.75\}$, samples of size $n \in \{100, 200\}$ were generated such that the $\ip{nt}$ first observations were from a GEV($\mu$,1,-0.4) and the $n-\ip{nt}$ last observations were from a GEV($\mu$,1,$\xi$), for $\xi \in \{-0.2,0,0.2,0.4\}$ and $\mu \in \{-100,0,100\}$. The results for $\mu=0$ are reported in Table~\ref{resH1xi1_-0.4_mu_0} while those for $\mu \in \{-100,100\}$ are given in Tables~13 %\ref{resH1xi1_-0.4_mu_-100} 
and~14 %\ref{resH1xi1_-0.4_mu_100} 
of the supplementary material. The column $F_n$ gives the rejection percentages obtained using the nonparametric rank-based test for change-point detection based on empirical c.d.f.s studied in \cite{HolKojQue13} and implemented in the function {\tt cpTestFn} of the \textsf{R} package {\tt npcp}. The columns $\bar x_n$ and $s_n^2$ report the empirical powers of CUSUM tests designed to be particularly sensitive to changes in the expectation \citep[see][and the references therein]{Phi87} and the variance \citep[see, e.g.,][]{BucKoj16b}, respectively. The former (resp.\ latter) is implemented in the function {\tt cpTestMean} (resp.\ {\tt cpTestU}) of the \textsf{R} package {\tt npcp}. Tables~\ref{resH1xi1_0_mu_0} and~\ref{resH1xi1_0.2_mu_0} report similar results but for higher $\xi$ values. Overall, as could have been expected, the tests based on $T_{\tilde g_\mu,n}$, $T_{\tilde g_\sigma}$, $S_{\tilde h_\mu,n}$ and $S_{\tilde h_\sigma,n}$ have hardly any power against such alternatives. Roughly speaking, the test based on $T_{\tilde g_\xi,n}$ is more (resp.\ less) powerful than the one based on $S_{\tilde h_\xi,n}$ when the largest shape parameter is smaller or equal than $0.2$ (resp.\ greater or equal than $0.4$). Both tests are overall more powerful than the general-purpose nonparametric test considered in \cite{HolKojQue13} and the two CUSUM tests designed to be particularly sensitive to changes in the expectation and the variance, respectively.

\begin{table}[t!]
\centering
\caption{Percentage of rejection of $H_0$ computed from 1000 samples of size $n \in \{100, 200\}$ such that the $\ip{nt}$ first observations are from a GEV(0,1,-0.4) and the $n - \ip{nt}$ last observations are from a GEV(0,1,$\xi$).} 
\label{resH1xi1_-0.4_mu_0}
\begingroup\small
\begin{tabular}{rrrrrrrrarra}
  \hline
  \multicolumn{6}{c}{} & \multicolumn{3}{c}{PWM / $T_{g,n}$} & \multicolumn{3}{c}{GPWM / $S_{h,n}$} \\ \cmidrule(lr){7-9} \cmidrule(lr){10-12} $\xi$ & $n$ & $t$ & $F_n$ & $\bar x_n$ & $s_n^2$ & $\tilde g_\mu$ & $\tilde g_\sigma$ & $\tilde g_\xi$  & $\tilde h_\mu$ & $\tilde h_\sigma$ & $\tilde h_\xi$  \\ \hline
-0.2 & 100 & 0.25 & 5.7 & 5.1 & 4.3 & 3.2 & 3.8 & 10.2 & 2.3 & 1.7 & 1.2 \\ 
   &  & 0.75 & 6.1 & 4.9 & 7.0 & 4.5 & 3.4 & 5.1 & 3.3 & 0.9 & 1.4 \\ 
   &  & 0.50 & 5.5 & 6.5 & 8.1 & 4.2 & 3.4 & 12.5 & 2.9 & 1.7 & 2.0 \\ 
   & 200 & 0.25 & 5.2 & 7.0 & 6.0 & 3.7 & 4.0 & 17.7 & 3.7 & 2.2 & 5.1 \\ 
   &  & 0.75 & 7.3 & 8.3 & 8.5 & 5.5 & 3.9 & 15.3 & 4.5 & 1.9 & 3.5 \\ 
   &  & 0.50 & 8.5 & 11.7 & 10.7 & 4.9 & 3.5 & 29.4 & 4.7 & 1.6 & 6.9 \\ 
  0.0 & 100 & 0.25 & 5.8 & 7.6 & 4.5 & 4.8 & 4.2 & 23.1 & 2.6 & 1.2 & 3.2 \\ 
   &  & 0.75 & 7.6 & 12.6 & 13.0 & 5.3 & 4.4 & 11.6 & 4.2 & 1.1 & 4.5 \\ 
   &  & 0.50 & 7.8 & 16.8 & 10.2 & 4.4 & 2.6 & 32.8 & 3.0 & 0.8 & 9.1 \\ 
   & 200 & 0.25 & 8.3 & 16.8 & 7.6 & 4.5 & 2.6 & 54.4 & 3.8 & 1.1 & 22.7 \\ 
   &  & 0.75 & 8.0 & 22.2 & 29.8 & 5.7 & 5.2 & 40.2 & 3.7 & 2.1 & 15.4 \\ 
   &  & 0.50 & 10.3 & 32.1 & 30.7 & 4.6 & 3.1 & 76.1 & 3.2 & 1.5 & 37.9 \\ 
  0.2 & 100 & 0.25 & 8.0 & 9.7 & 2.4 & 6.1 & 4.7 & 42.7 & 3.1 & 1.4 & 14.3 \\ 
   &  & 0.75 & 8.3 & 23.6 & 11.7 & 5.8 & 5.5 & 16.0 & 4.2 & 1.5 & 9.2 \\ 
   &  & 0.50 & 12.0 & 27.7 & 10.3 & 6.8 & 6.7 & 52.8 & 3.8 & 3.0 & 26.8 \\ 
   & 200 & 0.25 & 12.0 & 24.9 & 4.9 & 7.1 & 4.7 & 78.2 & 5.0 & 2.8 & 58.2 \\ 
   &  & 0.75 & 11.5 & 51.3 & 39.4 & 6.8 & 4.4 & 59.7 & 4.2 & 2.2 & 44.6 \\ 
   &  & 0.50 & 22.0 & 65.9 & 35.0 & 6.6 & 6.4 & 92.4 & 4.9 & 3.2 & 81.8 \\ 
  0.4 & 100 & 0.25 & 9.6 & 11.1 & 0.5 & 9.6 & 7.3 & 62.6 & 4.3 & 2.0 & 27.5 \\ 
   &  & 0.75 & 9.3 & 34.4 & 11.7 & 6.4 & 9.5 & 21.7 & 3.5 & 2.5 & 20.9 \\ 
   &  & 0.50 & 15.7 & 40.8 & 7.2 & 8.1 & 10.1 & 69.5 & 4.2 & 2.5 & 50.3 \\ 
   & 200 & 0.25 & 15.5 & 25.1 & 2.3 & 11.1 & 7.0 & 89.2 & 4.6 & 2.9 & 85.1 \\ 
   &  & 0.75 & 13.7 & 71.4 & 36.3 & 7.9 & 7.6 & 63.9 & 3.3 & 4.2 & 79.2 \\ 
   &  & 0.50 & 29.6 & 80.6 & 26.9 & 12.3 & 8.7 & 95.7 & 6.3 & 3.4 & 97.4 \\ 
   \hline
\end{tabular}
\endgroup
\end{table}

\begin{table}[t!]
\centering
\caption{Percentage of rejection of $H_0$ computed from 1000 samples of size $n \in \{100, 200\}$ such that the $\ip{nt}$ first observations are from a GEV(0,1,0) and the $n - \ip{nt}$ last observations are from a GEV(0,1,$\xi$).} 
\label{resH1xi1_0_mu_0}
\begingroup\small
\begin{tabular}{rrrrrrrrarra}
  \hline
  \multicolumn{6}{c}{} & \multicolumn{3}{c}{PWM / $T_{g,n}$} & \multicolumn{3}{c}{GPWM / $S_{h,n}$} \\ \cmidrule(lr){7-9} \cmidrule(lr){10-12} $\xi$ & $n$ & $t$ & $F_n$ & $\bar x_n$ & $s_n^2$ & $\tilde g_\mu$ & $\tilde g_\sigma$ & $\tilde g_\xi$  & $\tilde h_\mu$ & $\tilde h_\sigma$ & $\tilde h_\xi$  \\ \hline
0.2 & 100 & 0.25 & 5.7 & 4.4 & 1.2 & 5.8 & 4.2 & 6.7 & 2.8 & 2.0 & 4.1 \\ 
   &  & 0.75 & 6.7 & 7.4 & 3.2 & 4.6 & 4.6 & 3.9 & 3.5 & 2.2 & 2.9 \\ 
   &  & 0.50 & 4.1 & 5.5 & 2.2 & 4.6 & 4.8 & 6.7 & 2.9 & 1.3 & 4.3 \\ 
   & 200 & 0.25 & 8.2 & 6.7 & 2.2 & 5.9 & 4.4 & 10.0 & 4.8 & 2.4 & 8.8 \\ 
   &  & 0.75 & 5.7 & 9.9 & 7.8 & 5.7 & 4.8 & 6.0 & 4.4 & 3.3 & 7.2 \\ 
   &  & 0.50 & 6.1 & 9.9 & 7.3 & 5.2 & 4.8 & 14.7 & 4.6 & 2.8 & 15.9 \\ 
  0.4 & 100 & 0.25 & 5.9 & 4.6 & 0.4 & 6.2 & 4.4 & 15.5 & 3.3 & 2.1 & 7.0 \\ 
   &  & 0.75 & 6.1 & 11.3 & 2.5 & 7.0 & 8.0 & 5.9 & 3.8 & 1.7 & 6.6 \\ 
   &  & 0.50 & 6.4 & 13.1 & 4.4 & 6.7 & 8.0 & 19.2 & 2.9 & 2.2 & 17.4 \\ 
   & 200 & 0.25 & 7.1 & 8.0 & 1.8 & 5.4 & 5.2 & 28.8 & 3.9 & 2.4 & 27.9 \\ 
   &  & 0.75 & 7.7 & 25.9 & 15.2 & 6.1 & 7.2 & 13.0 & 4.5 & 3.2 & 24.5 \\ 
   &  & 0.50 & 7.6 & 30.6 & 10.3 & 5.1 & 6.4 & 37.9 & 3.7 & 2.8 & 54.2 \\ 
  0.6 & 100 & 0.25 & 6.7 & 5.1 & 0.7 & 8.6 & 7.9 & 39.9 & 4.5 & 3.2 & 17.8 \\ 
   &  & 0.75 & 7.3 & 19.5 & 5.2 & 7.5 & 13.7 & 13.1 & 3.9 & 3.6 & 15.0 \\ 
   &  & 0.50 & 7.7 & 17.4 & 1.6 & 7.4 & 12.5 & 36.8 & 3.8 & 3.4 & 35.1 \\ 
   & 200 & 0.25 & 8.4 & 9.4 & 1.3 & 9.1 & 8.8 & 57.6 & 5.2 & 3.9 & 57.7 \\ 
   &  & 0.75 & 8.3 & 47.4 & 21.9 & 7.1 & 15.0 & 24.3 & 4.0 & 5.6 & 56.6 \\ 
   &  & 0.50 & 15.4 & 45.7 & 11.5 & 10.6 & 14.7 & 68.3 & 5.1 & 5.7 & 84.1 \\ 
   \hline
\end{tabular}
\endgroup
\end{table}

\begin{table}[t!]
\centering
\caption{Percentage of rejection of $H_0$ computed from 1000 samples of size $n \in \{100, 200\}$ such that the $\ip{nt}$ first observations are from a GEV(0,1,0.2) and the $n - \ip{nt}$ last observations are from a GEV(0,1,$\xi$).} 
\label{resH1xi1_0.2_mu_0}
\begingroup\small
\begin{tabular}{rrrrrrrrarra}
  \hline
  \multicolumn{6}{c}{} & \multicolumn{3}{c}{PWM / $T_{g,n}$} & \multicolumn{3}{c}{GPWM / $S_{h,n}$} \\ \cmidrule(lr){7-9} \cmidrule(lr){10-12} $\xi$ & $n$ & $t$ & $F_n$ & $\bar x_n$ & $s_n^2$ & $\tilde g_\mu$ & $\tilde g_\sigma$ & $\tilde g_\xi$  & $\tilde h_\mu$ & $\tilde h_\sigma$ & $\tilde h_\xi$  \\ \hline
0.4 & 100 & 0.25 & 5.7 & 2.8 & 0.5 & 6.8 & 4.7 & 6.1 & 4.0 & 1.2 & 4.5 \\ 
   &  & 0.75 & 6.1 & 4.3 & 1.1 & 6.1 & 5.5 & 3.7 & 4.3 & 2.3 & 3.6 \\ 
   &  & 0.50 & 5.9 & 6.5 & 0.3 & 6.7 & 6.0 & 6.6 & 3.8 & 1.7 & 5.0 \\ 
   & 200 & 0.25 & 5.2 & 4.4 & 0.8 & 5.5 & 4.8 & 7.8 & 4.0 & 3.2 & 8.9 \\ 
   &  & 0.75 & 6.0 & 7.9 & 4.5 & 5.9 & 6.5 & 3.4 & 4.8 & 4.1 & 8.7 \\ 
   &  & 0.50 & 5.4 & 8.9 & 2.9 & 4.3 & 5.0 & 9.7 & 3.2 & 2.3 & 15.9 \\ 
  0.6 & 100 & 0.25 & 6.7 & 2.4 & 0.2 & 8.1 & 6.3 & 22.6 & 4.5 & 2.9 & 10.7 \\ 
   &  & 0.75 & 7.3 & 8.1 & 2.3 & 8.1 & 9.4 & 7.0 & 5.3 & 3.0 & 6.7 \\ 
   &  & 0.50 & 7.6 & 7.8 & 1.3 & 9.7 & 9.4 & 18.4 & 4.6 & 3.5 & 16.3 \\ 
   & 200 & 0.25 & 6.4 & 6.1 & 0.5 & 6.9 & 7.6 & 27.7 & 4.8 & 4.6 & 24.5 \\ 
   &  & 0.75 & 7.0 & 21.0 & 7.4 & 6.7 & 10.4 & 9.4 & 4.5 & 5.7 & 23.5 \\ 
   &  & 0.50 & 7.5 & 19.2 & 4.5 & 7.7 & 10.0 & 29.5 & 4.0 & 3.7 & 48.9 \\ 
  0.8 & 100 & 0.25 & 6.0 & 3.1 & 0.3 & 10.7 & 8.4 & 49.5 & 4.9 & 4.1 & 16.5 \\ 
   &  & 0.75 & 5.8 & 13.9 & 2.5 & 8.8 & 17.0 & 13.6 & 4.4 & 4.1 & 11.1 \\ 
   &  & 0.50 & 8.3 & 13.1 & 1.0 & 12.9 & 18.4 & 43.9 & 5.3 & 5.8 & 28.3 \\ 
   & 200 & 0.25 & 9.2 & 5.6 & 0.8 & 13.4 & 11.3 & 58.6 & 6.5 & 5.6 & 46.4 \\ 
   &  & 0.75 & 7.4 & 38.3 & 11.2 & 9.7 & 23.0 & 23.6 & 4.6 & 7.1 & 49.1 \\ 
   &  & 0.50 & 14.2 & 30.6 & 4.0 & 15.8 & 24.7 & 60.5 & 6.9 & 8.5 & 75.3 \\ 
   \hline
\end{tabular}
\endgroup
\end{table}

Similar experiments were used to assess the power of the tests when the scale (resp.\ location) parameter increases from 0.5 to 1 (resp.\ from 0 to 0.5). The corresponding rejection percentages are reported in Tables~\ref{resH1sigma_mu_0} and~\ref{resH1mu_mu_0}, respectively (additional results are available in Sections~2.2 and 2.3 %\ref{sec:GEVscale}--\ref{sec:GEVloc} 
of the supplementary material). For changes in the scale parameter, we see that the test based on $T_{\tilde g_\sigma,n}$ is more (resp.\ less) powerful than the test based on $S_{\tilde h_\sigma,n}$ for $\xi \in \{-0.4,0\}$ (resp. $\xi \in \{0.4,0.8\}$), the two tests becoming more powerful than the test ``$s_n^2$'' for $\xi \geq 0$. Again, as expected, the remaining PWM or GPWM tests have little power against such alternatives. For changes in the location parameter, the test based on $T_{\tilde g_\mu,n}$ is more (resp.\ less) powerful than the test based on $S_{\tilde h_\mu,n}$ for $\xi \in \{-0.4,0\}$ (resp. $\xi=0.8$). This time the rank-based test of \cite{HolKojQue13} is more powerful than the two aforementioned tests, although the difference is small for $\xi \in \{0,0.4,0.8\}$.

\begin{table}[t!]
\centering
\caption{Percentage of rejection of $H_0$ computed from 1000 samples of size $n \in \{100, 200\}$ such that the $\ip{nt}$ first observations are from a GEV(0,0.5,$\xi$) and the $n - \ip{nt}$ last observations are from a GEV(0,1,$\xi$).} 
\label{resH1sigma_mu_0}
\begingroup\small
\begin{tabular}{rrrrrrrarrar}
  \hline
  \multicolumn{6}{c}{} & \multicolumn{3}{c}{PWM / $T_{g,n}$} & \multicolumn{3}{c}{GPWM / $S_{h,n}$} \\ \cmidrule(lr){7-9} \cmidrule(lr){10-12} $\xi$ & $n$ & $t$ & $F_n$ & $\bar x_n$ & $s_n$ & $\tilde g_\mu$ & $\tilde g_\sigma$ & $\tilde g_\xi$  & $\tilde h_\mu$ & $\tilde h_\sigma$ & $\tilde h_\xi$  \\ \hline
-0.4 & 100 & 0.25 & 17.0 & 7.3 & 54.3 & 4.5 & 62.2 & 9.8 & 4.8 & 38.1 & 1.8 \\ 
   &  & 0.75 & 20.1 & 13.4 & 94.5 & 13.5 & 82.3 & 2.6 & 13.9 & 58.9 & 0.9 \\ 
   &  & 0.50 & 40.7 & 12.4 & 98.1 & 8.5 & 93.6 & 5.7 & 9.8 & 81.9 & 1.0 \\ 
   & 200 & 0.25 & 50.0 & 8.4 & 97.8 & 5.5 & 97.4 & 11.8 & 4.2 & 93.7 & 3.5 \\ 
   &  & 0.75 & 44.7 & 19.6 & 99.9 & 15.5 & 98.7 & 2.8 & 15.6 & 96.5 & 0.4 \\ 
   &  & 0.50 & 85.6 & 24.4 & 100.0 & 10.5 & 100.0 & 5.1 & 10.9 & 100.0 & 0.9 \\ 
  0.0 & 100 & 0.25 & 15.0 & 9.0 & 18.3 & 6.8 & 57.6 & 6.7 & 5.8 & 39.1 & 2.1 \\ 
   &  & 0.75 & 13.7 & 17.7 & 59.6 & 12.5 & 78.6 & 3.1 & 10.3 & 50.6 & 0.7 \\ 
   &  & 0.50 & 26.0 & 20.7 & 64.7 & 11.0 & 91.7 & 5.5 & 9.9 & 79.1 & 0.5 \\ 
   & 200 & 0.25 & 28.1 & 15.7 & 50.6 & 6.3 & 95.6 & 7.5 & 4.8 & 89.9 & 2.7 \\ 
   &  & 0.75 & 29.5 & 33.1 & 91.0 & 15.0 & 97.9 & 1.7 & 13.9 & 95.8 & 0.7 \\ 
   &  & 0.50 & 63.1 & 41.6 & 94.6 & 10.8 & 99.9 & 4.9 & 10.5 & 99.8 & 0.7 \\ 
  0.4 & 100 & 0.25 & 8.2 & 3.9 & 0.6 & 6.9 & 25.2 & 7.1 & 5.0 & 22.0 & 2.6 \\ 
   &  & 0.75 & 10.9 & 11.0 & 5.2 & 14.3 & 58.1 & 6.4 & 12.6 & 42.3 & 1.6 \\ 
   &  & 0.50 & 17.6 & 14.5 & 5.9 & 12.2 & 70.3 & 7.9 & 10.7 & 63.6 & 1.6 \\ 
   & 200 & 0.25 & 18.8 & 9.9 & 2.7 & 7.3 & 62.4 & 6.4 & 6.0 & 69.9 & 5.4 \\ 
   &  & 0.75 & 20.4 & 22.6 & 11.8 & 12.9 & 85.9 & 3.6 & 13.2 & 85.5 & 1.3 \\ 
   &  & 0.50 & 47.5 & 29.7 & 12.2 & 11.3 & 94.5 & 6.1 & 10.7 & 95.2 & 2.2 \\ 
  0.8 & 100 & 0.25 & 7.3 & 2.7 & 0.0 & 9.4 & 11.3 & 25.6 & 6.6 & 14.1 & 3.9 \\ 
   &  & 0.75 & 10.1 & 5.0 & 0.6 & 18.9 & 34.1 & 23.4 & 14.0 & 36.1 & 2.9 \\ 
   &  & 0.50 & 12.7 & 4.4 & 0.3 & 13.1 & 32.6 & 23.0 & 10.4 & 45.8 & 2.8 \\ 
   & 200 & 0.25 & 16.8 & 3.0 & 0.6 & 8.7 & 23.8 & 21.6 & 6.7 & 40.4 & 3.6 \\ 
   &  & 0.75 & 13.2 & 7.8 & 1.0 & 17.3 & 55.8 & 14.6 & 14.9 & 70.6 & 1.2 \\ 
   &  & 0.50 & 36.8 & 9.5 & 0.9 & 15.7 & 61.5 & 18.6 & 13.4 & 79.9 & 2.1 \\ 
   \hline
\end{tabular}
\endgroup
\end{table}

\begin{table}[t!]
\centering
\caption{Percentage of rejection of $H_0$ computed from 1000 samples of size $n \in \{100, 200\}$ such that the $\ip{nt}$ first observations are from a GEV(0,1,$\xi$) and the $n - \ip{nt}$ last observations are from a GEV(0.5,1,$\xi$).} 
\label{resH1mu_mu_0}
\begingroup\small
\begin{tabular}{rrrrrrarrarr}
  \hline
  \multicolumn{6}{c}{} & \multicolumn{3}{c}{PWM / $T_{g,n}$} & \multicolumn{3}{c}{GPWM / $S_{h,n}$} \\ \cmidrule(lr){7-9} \cmidrule(lr){10-12} $\xi$ & $n$ & $t$ & $F_n$ & $\bar x_n$ & $s_n$ & $\tilde g_\mu$ & $\tilde g_\sigma$ & $\tilde g_\xi$  & $\tilde h_\mu$ & $\tilde h_\sigma$ & $\tilde h_\xi$  \\ \hline
-0.4 & 100 & 0.25 & 39.3 & 39.2 & 2.2 & 27.4 & 3.9 & 3.9 & 27.6 & 1.1 & 0.5 \\ 
   &  & 0.75 & 36.5 & 38.4 & 3.9 & 27.9 & 3.4 & 3.8 & 19.3 & 1.0 & 0.5 \\ 
   &  & 0.50 & 58.0 & 61.9 & 2.1 & 48.0 & 3.5 & 4.1 & 42.7 & 0.6 & 0.5 \\ 
   & 200 & 0.25 & 67.6 & 72.1 & 3.6 & 57.7 & 3.2 & 4.5 & 53.9 & 1.6 & 1.5 \\ 
   &  & 0.75 & 68.6 & 73.3 & 4.1 & 57.1 & 5.1 & 4.2 & 47.8 & 2.5 & 1.0 \\ 
   &  & 0.50 & 88.6 & 91.0 & 1.2 & 79.3 & 3.0 & 3.7 & 74.1 & 1.4 & 2.0 \\ 
  0.0 & 100 & 0.25 & 28.2 & 19.7 & 3.7 & 26.5 & 3.4 & 2.6 & 23.8 & 0.9 & 1.1 \\ 
   &  & 0.75 & 28.2 & 23.3 & 2.2 & 29.0 & 4.7 & 4.2 & 21.2 & 1.8 & 2.0 \\ 
   &  & 0.50 & 48.6 & 39.0 & 1.9 & 48.3 & 2.3 & 3.4 & 42.1 & 0.7 & 1.7 \\ 
   & 200 & 0.25 & 52.8 & 42.2 & 4.0 & 54.6 & 4.1 & 2.5 & 52.8 & 2.3 & 1.8 \\ 
   &  & 0.75 & 51.7 & 42.4 & 1.7 & 54.9 & 4.5 & 4.0 & 48.1 & 3.6 & 2.7 \\ 
   &  & 0.50 & 76.7 & 66.6 & 3.5 & 78.1 & 3.7 & 3.1 & 76.1 & 2.4 & 2.3 \\ 
  0.4 & 100 & 0.25 & 23.1 & 4.7 & 0.5 & 22.1 & 4.7 & 4.7 & 19.8 & 2.0 & 2.3 \\ 
   &  & 0.75 & 22.8 & 7.8 & 0.5 & 27.2 & 5.2 & 4.3 & 22.0 & 2.8 & 3.1 \\ 
   &  & 0.50 & 43.5 & 10.6 & 0.3 & 45.2 & 3.8 & 3.6 & 42.0 & 1.4 & 3.0 \\ 
   & 200 & 0.25 & 48.8 & 8.5 & 1.2 & 48.4 & 5.0 & 3.2 & 51.9 & 3.0 & 2.7 \\ 
   &  & 0.75 & 48.6 & 8.5 & 1.0 & 54.1 & 5.6 & 2.8 & 49.9 & 3.9 & 4.6 \\ 
   &  & 0.50 & 74.0 & 16.8 & 0.9 & 71.6 & 4.7 & 2.3 & 76.6 & 2.8 & 3.1 \\ 
  0.8 & 100 & 0.25 & 22.9 & 1.2 & 0.2 & 21.7 & 7.1 & 21.0 & 18.5 & 4.9 & 2.5 \\ 
   &  & 0.75 & 24.2 & 2.5 & 0.2 & 27.9 & 6.7 & 19.7 & 26.5 & 5.3 & 4.0 \\ 
   &  & 0.50 & 46.2 & 2.9 & 0.2 & 38.9 & 6.8 & 23.2 & 41.1 & 4.9 & 1.3 \\ 
   & 200 & 0.25 & 48.6 & 2.0 & 0.4 & 40.0 & 7.9 & 15.4 & 45.6 & 6.9 & 2.2 \\ 
   &  & 0.75 & 52.0 & 3.3 & 0.4 & 46.4 & 7.9 & 17.5 & 51.8 & 6.9 & 3.9 \\ 
   &  & 0.50 & 77.5 & 3.4 & 0.4 & 64.7 & 7.8 & 13.6 & 71.6 & 5.1 & 2.0 \\ 
   \hline
\end{tabular}
\endgroup
\end{table}

In a last experiment, we investigated the power of the tests for samples generated from normal distributions with different variances or expectations. The results are reported in Tables~23 %\ref{resH1sigma_norm} 
and~24 %\ref{resH1mu_norm} 
of the supplementary material. In the case of a change in variance (resp.\ expectation), the proposed tests designed to be sensitive to a change in scale (resp.\ location) appear overall slightly less powerful than the ``$s_n^2$'' (resp.\ ``$\bar x_n$'') tests.

\section{Illustrations}
\label{sec:illus}

The results of the finite-sample experiments partially reported in the previous section suggest that the overall behavior of the tests based on the statistics $T_{g,n}$ in~\eqref{eq:T_g_n} is better than that of the tests based on the statistics $S_{h,n}$ in~\eqref{eq:S_h_n} when the distributions of the block maxima have finite variances. When they have an infinite variance but a finite expectation, the tests based on the statistics $S_{h,n}$ should however clearly be preferred as those based on $T_{g,n}$ do not hold their level anymore. 

When dealing with climate data, the assumption of finite variances is reasonable for temperatures, precipitation or winds, which is why we focus on the tests based on the statistics $T_{g,n}$ in the forthcoming applications. 

To illustrate the pratical use of the three selected tests, we applied them on some of the environmental datasets available in the \textsf{R} packages {\tt evd} \citep{evd}, {\tt extRemes} \citep{extRemes}, {\tt ismev} \citep{ismev} and {\tt ClusterMax} \citep{ClusterMax}. The observations consist of block maxima of wind speeds, temperatures, precipitation or sea levels, and can reasonably be considered to be independent. The interest reader can easily plot the data after installing the aforementioned \textsf{R} packages.

The main practical difficulty that prevented a direct application of the tests is that all the datasets contain ties due to a limited precision of the underlying measurement instruments. To address this issue, we proceeded as follows. For every dataset, we computed the smallest absolute difference $d$ between any two distinct observations. Assuming that $d$ is a good estimate of the precision of the measurements, we added a random component uniformly distributed over $(0,d)$ to every observation. For every dataset, 1000 different de-tied samples were generated with the hope that at least one of these is close to the true unobserved sample, and the tests were applied to each such de-tied sample. The minimum and the maximum of the obtained p-values are reported in Table~\ref{illus}. The minimum and the maximum of the estimates $\tilde g_\mu(\hat{\bm b}_{\bm \nu,1:n})$, $\tilde g_\sigma(\hat{\bm b}_{\bm \nu,1:n})$ and $\tilde g_\xi(\hat{\bm b}_{\bm \nu,1:n})$ are provided as well. Under $H_0$ in~\eqref{eq:H0} with $F$ the c.d.f.\ of the GEV, the three latter quantities are estimates of the corresponding location, scale and shape parameters obtained using the PWM method summarized in Section~\ref{sec:PWM_method}.

\begin{sidewaystable}[t!]
\centering
\caption{Minimum and maximum of $\tilde g_\mu(\hat{\bm b}_{1:n})$, $\tilde g_\sigma(\hat{\bm b}_{1:n})$ and $\tilde g_\xi(\hat{\bm b}_{1:n})$ as well as minimum and maximum of the p-values of the tests based on $T_{g,n}$ in~\eqref{eq:T_g_n} with $g$ respectively equal to $\tilde g_\mu$, $\tilde g_\sigma$, $\tilde g_\xi$ for 1000 different de-tied samples generated from each dataset as explained in Section~\ref{sec:illus}. The first two columns provide the names of the \textsf{R} packages containing the datasets and the names of the latter. The columns $n$ and $n'$ report the number of block maxima and the number of distinct block maxima, respectively.} 
\label{illus}
{\small
\begin{tabular}{llrrrrrrrrrrrrrr}
  \hline
  \multicolumn{4}{c}{} & \multicolumn{6}{c}{Estimates} & \multicolumn{6}{c}{p-values of the test based on $T_{g,n}$ with} \\
\cmidrule(lr){5-10} \cmidrule(lr){11-16}  \multicolumn{4}{c}{} & \multicolumn{2}{c}{$\tilde g_\mu(\hat{\bm \beta}_{\bm \nu,1:n})$} & \multicolumn{2}{c}{$\tilde g_\sigma(\hat{\bm \beta}_{\bm \nu,1:n})$} & \multicolumn{2}{c}{$\tilde g_\xi(\hat{\bm \beta}_{\bm \nu,1:n})$} & \multicolumn{2}{c}{$g=\tilde g_\mu$} & \multicolumn{2}{c}{$g=\tilde g_\sigma$} & \multicolumn{2}{c}{$g=\tilde g_\xi$} \\
\cmidrule(lr){5-6} \cmidrule(lr){7-8} \cmidrule(lr){9-10} \cmidrule(lr){11-12} \cmidrule(lr){13-14} \cmidrule(lr){15-16} Package & Dataset & $n$ & $n'$ & min & max & min & max & min & max & min & max & min & max & min & max \\ \hline
evd & lisbon & 30 & 21 & 95.79 & 96.22 & 12.62 & 13.07 & -0.16 & -0.13 & 0.152 & 0.205 & 0.167 & 0.271 & 0.416 & 0.630 \\ 
  evd & oxford & 80 & 19 & 84.23 & 84.46 & 4.20 & 4.44 & -0.34 & -0.26 & 0.099 & 0.248 & 0.534 & 1.000 & 0.413 & 1.000 \\ 
  extRemes & HEAT (Tmax) & 43 & 9 & 112.89 & 113.22 & 1.91 & 2.32 & -0.41 & -0.21 & 0.002 & 0.029 & 0.590 & 1.000 & 0.766 & 1.000 \\ 
  extRemes & HEAT (-Tmin) & 43 & 15 & -70.70 & -70.37 & 3.67 & 4.07 & -0.18 & -0.09 & 0.000 & 0.002 & 0.130 & 0.529 & 0.173 & 0.898 \\ 
  extRemes & ftcanmax & 100 & 78 & 135.75 & 135.96 & 55.51 & 55.79 & 0.13 & 0.13 & 0.724 & 0.757 & 0.430 & 0.467 & 1.000 & 1.000 \\ 
  ismev & fremantle & 86 & 31 & 1.49 & 1.49 & 0.14 & 0.14 & -0.21 & -0.19 & 0.006 & 0.009 & 0.433 & 0.578 & 1.000 & 1.000 \\ 
  ismev & portpirie & 65 & 42 & 3.88 & 3.88 & 0.20 & 0.20 & -0.06 & -0.04 & 0.537 & 0.603 & 0.788 & 0.949 & 0.782 & 0.928 \\ 
  ClusterMax & Station 8 (St-Lizier) & 228 & 58 & 1.97 & 1.98 & 1.95 & 1.97 & 0.14 & 0.15 & 0.217 & 0.261 & 0.296 & 0.378 & 0.068 & 0.091 \\ 
  ClusterMax & Station 14 (Aurillac) & 228 & 73 & 2.39 & 2.40 & 2.87 & 2.89 & 0.29 & 0.29 & 0.152 & 0.174 & 0.622 & 0.683 & 0.124 & 0.138 \\ 
  ClusterMax & Station 57 (Nevers) & 228 & 56 & 1.82 & 1.84 & 1.90 & 1.92 & 0.18 & 0.18 & 0.287 & 0.349 & 0.155 & 0.202 & 0.034 & 0.046 \\ 
  ClusterMax & Station 78 (Niort) & 228 & 59 & 1.96 & 1.97 & 2.15 & 2.17 & 0.22 & 0.22 & 0.038 & 0.047 & 0.016 & 0.022 & 0.448 & 0.523 \\ 
   \hline
\end{tabular}
}
\end{sidewaystable}

Small maximal p-values provide evidence against $H_0$ in~\eqref{eq:H0}. Under the additional assumption of GEV distributed block maxima, the latter can be interpreted as evidence of change in the location, scale or shape parameter. Ignoring a necessary adjustment of the p-values due to multiple testing, the following conclusions could for instance be drawn from Table~\ref{illus}:
\begin{itemize}
\item there is very strong evidence of a change in the location parameter for the annual minimal temperature in the {\tt HEAT} dataset, and strong evidence of the same nature for the annual maximal temperature in the {\tt HEAT} dataset and for the annual maximal sea levels in the {\tt fremantle} dataset,
\item there is weak (resp.\ some) evidence of change in the shape parameter for the weekly maxima of hourly precipitation measured at Saint-Lizier (resp.\ Nevers), France, during the Fall season between 1993 and 2011,
\item there is strong evidence of a change in the scale parameter and some evidence of a change in the location parameter for the weekly maxima of hourly precipitation measured at Niort, France, during the Fall season between 1993 and 2011.

\end{itemize}

Notice that the tests, as implemented in \textsf{R} package {\tt npcp}, are fast. Obtaining the results reported in Table~\ref{illus} took less than 30 seconds on one 2.7 GHz processor. 

Let us end this section with a practical remark. To test $H_0$ in~\eqref{eq:H0} from a single sample of block maxima, three tests, based on the statistics $T_{g,n}$ in~\eqref{eq:T_g_n} or on the statistics $S_{h,n}$ in~\eqref{eq:S_h_n}, are to be used. As noted by a referee, such an approach, although interesting from a data analysis perspective, lacks a simple decision rule regarding the rejection or not of $H_0$. The later could be the subject of future research. A simple yet conservative approach would be to use a Bonferroni correction and reject $H_0$ as soon one p-value is below a third of the chosen significance level.

%%%%%%%%%%%%%%%%%%%%%%%%%%%%%%%%%%%%%%%%%%%%%%%%%%%%%%%%%%%%%%%%%%%%%%%%%%%%%%%%%%%%%%%

\paragraph{Acknowledgments}

The authors would like to thank two anonymous Referees and an Associate Editor for their constructive comments on an earlier version of this manuscript. Part of Philippe Naveau’s work has been supported by the ANR-DADA, LEFE-INSU-Multirisk, AMERISKA, A2C2 and Extremoscope projects. Part of the research was done while the second author was visiting the IMAGE-NCAR group in Boulder, CO, USA.

\appendix

\section{Proof of Proposition~\ref{prop:beta_ineqs}}
\label{proof:prop:beta_ineqs}

\begin{proof}%[\bf Proof of Proposition~\ref{prop:beta_ineqs}]
Some thought reveals that
$$
\beta_i = \Ex[X \{ F(X) \}^{i-1} ] = i^{-1} \Ex\{ \max(X_1,\dots,X_i) \}, \qquad i \in \{1,2,3\},
$$
where $X_1,\dots,X_i$ are independent copies of $X$. 

The first inequality is obtained by noticing that $\max(X_1,X_2) > (X_1+X_2)/2$ almost surely since $\Pr(X_1 = X_2) = 0$ by continuity of $F$, and by taking expectations on both sides. The second inequality follows from the fact that $3 \max(X_1,X_2,X_3) > \max(X_1,X_2) + \max(X_1,X_3) + \max(X_2,X_3)$ almost surely. The last inequality is a consequence of the fact that
$$
2 \max(X_1,X_2) + 2 \max(X_1,X_3) + 2 \max(X_2,X_3) > 3 \max(X_1,X_2,X_3) + X_1+X_2+X_3
$$
almost surely.
\end{proof}

\section{Weak convergence of the sequential weighted uniform empirical process}
\label{sec:weighted_uniform}

The purpose of this appendix is to derive the weak convergence of a sequential version of the weighted uniform empirical process. The latter result, that may be of independent interest, is used in the proof of Proposition~\ref{prop:weak_Mn}, itself necessary for the proof, given in Appendix~\ref{proof:prop:weak_S_g_n}, of Proposition~\ref{prop:weak_S_g_n}. The forthcoming developments rely heavily on Appendix~G of \cite{GenSeg09}. 

Let $U_1,\dots,U_n$ be an i.i.d.\ sample from the standard uniform distribution. For any integers $1 \leq k \leq l \leq n$, let $G_{k:l}$ be the empirical c.d.f.\ computed from the sample $U_k,\dots,U_l$ with the convention that $G_{k:l} = 0$, for all $k>l$. Then, define the {\em two-sided sequential uniform empirical process} by
\begin{equation}
\label{eq:Bn}
\B_n(s,t,u) = \frac{1}{\sqrt{n}} \sum_{i=\ip{ns}+1}^{\ip{nt}} \{ \1(U_i \leq u) - u \} = \sqrt{n} \lambda_n(s,t) \{G_{\ip{ns}+1:\ip{nt}}(u) - u \},
\end{equation}
for $(s,t,u) \in \Delta \times [0,1]$, where $\Delta = \{ (s,t) \in [0,1]^2 : s \leq t\}$ and $\lambda_n(s,t) = (\ip{nt}-\ip{ns})/n$, $(s,t) \in \Delta$. The process $\B_n(0,\cdot,\cdot)$ is the standard sequential uniform empirical process and it is well-known \citep[see, e.g.,][Theorem 2.12.1]{vanWel96} that $\B_n(0,\cdot,\cdot) \leadsto \B(0,\cdot,\cdot)$ in $\ell^\infty([0,1]^2)$, where $\B(0,\cdot,\cdot)$ is a centered Gaussian process with continuous trajectories, sometimes called a {\em Kieffer--Müller process}, whose covariance function is given by
$$
\Cov\{ \B(0,s,u), \B(0,t,v) \} = (s \wedge t) ( u \wedge v - uv), \qquad (s,t,u,v) \in [0,1]^4.
$$
Since, for any $(s,t) \in \Delta$, $\B_n(s,t,\cdot) = \B_n(0,t,\cdot) - \B_n(0,s,\cdot)$, we immediately obtain from the continuous mapping that $\B_n \leadsto \B$ in $\ell^\infty(\Delta \times [0,1])$, where, for any $(s,t) \in \Delta$, $\B(s,t,\cdot) = \B(0,t,\cdot) - \B(0,s,\cdot)$.

Analogously to \cite{GenSeg09}, let $\alpha \geq 0$ and consider a weighted version of the processes $\B_n$ defined as
\begin{equation}
\label{eq:W_alpha_n}
\W_{\alpha,n}(s,t,u) = 
\left\{
\begin{array}{ll}
\disp \frac{\B_n(s,t,u)}{ \{ u (1-u) \}^{\alpha}}, &\mbox{if } u \in (0,1), \\
0, &\mbox{if } u \in \{0,1\}. 
\end{array}
\right.
\end{equation}
The following result is then a univariate sequential extension of Theorem~G.1 of \cite{GenSeg09}. 

\begin{prop}
\label{prop:weak_W_alpha_n}
For any $\alpha \in [0,1/2)$, $\W_{\alpha,n} \leadsto \W_\alpha$ in $\ell^\infty(\Delta \times [0,1])$, where $\W_{\alpha}$ is a centered Gaussian process with continuous trajectories such that $\W_{\alpha}(0,\cdot,u) = 0$ for $u \in \{0,1\}$, 
$$
\Cov\{ \W_\alpha(0,s,u), \W_\alpha(0,t,v) \} = \frac{(s \wedge t) ( u \wedge v - uv)}{\{ u (1-u) v (1-v) \}^{\alpha}}, \qquad (s,t,u,v) \in [0,1]^2 \times (0,1)^2,
$$
and $\W_{\alpha}(s,t,\cdot) = \W_{\alpha}(0,t,\cdot) - \W_{\alpha}(0,s,\cdot)$ for all $(s,t) \in \Delta$.
\end{prop}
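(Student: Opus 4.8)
The plan is to reduce the two-sided process to the one-sided (in time) process, and then to establish weak convergence of the latter in $\ell^\infty([0,1]^2)$ through finite-dimensional convergence together with asymptotic tightness, the behaviour near $u \in \{0,1\}$ being the only delicate point. Since, by definition, $\W_{\alpha,n}(s,t,\cdot) = \W_{\alpha,n}(0,t,\cdot) - \W_{\alpha,n}(0,s,\cdot)$ for all $(s,t) \in \Delta$, the linear map $\Phi \colon \ell^\infty([0,1]^2) \to \ell^\infty(\Delta \times [0,1])$ defined by $\Phi(f)(s,t,u) = f(t,u) - f(s,u)$ is bounded, hence continuous, and satisfies $\W_{\alpha,n} = \Phi(\W_{\alpha,n}(0,\cdot,\cdot))$. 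By the continuous mapping theorem it thus suffices to prove that $\W_{\alpha,n}(0,\cdot,\cdot) \leadsto \W_\alpha(0,\cdot,\cdot)$ in $\ell^\infty([0,1]^2)$; the two-sided covariance and the identity defining $\W_\alpha(s,t,\cdot)$ are then read off from those of $\W_\alpha(0,\cdot,\cdot)$.

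Convergence of the finite-dimensional distributions is immediate: at any finite collection of interior points $u_1,\dots,u_m \in (0,1)$ and of times, the corresponding coordinates of $\W_{\alpha,n}(0,\cdot,\cdot)$ are fixed deterministic multiples of those of $\B_n(0,\cdot,\cdot)$, so they converge to the claimed Gaussian limit by the already-available finite-dimensional convergence of $\B_n$, the limiting covariance being obtained by dividing $\Cov\{\B(0,s,u),\B(0,t,v)\} = (s\wedge t)(u\wedge v - uv)$ by $\{u(1-u)\,v(1-v)\}^{\alpha}$.

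For asymptotic tightness I would fix $\eta \in (0,1/2)$ and split $[0,1]$ in the $u$-direction into the interior $[\eta,1-\eta]$ and the two end zones $(0,\eta]$ and $[1-\eta,1)$. On $[0,1]^2 \times [\eta,1-\eta]$ the weight $\{u(1-u)\}^{-\alpha}$ is bounded and Lipschitz, so $\W_{\alpha,n}$ restricted to this region is the image of $\B_n(0,\cdot,\cdot)$ under a continuous map and its weak convergence follows from the classical convergence $\B_n(0,\cdot,\cdot) \leadsto \B(0,\cdot,\cdot)$ recalled above. It then remains to show that the end zones are asymptotically negligible, uniformly in $n$, namely that
\[
\lim_{\eta \to 0} \limsup_{n \to \infty} \Pr\Bigl( \sup_{t \in [0,1]} \sup_{u \in (0,\eta]} |\W_{\alpha,n}(0,t,u)| > \eps \Bigr) = 0
\]
for every $\eps > 0$, together with the symmetric statement near $1$ obtained through the substitution $u \mapsto 1-u$.

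This last point is the main obstacle and is where the restriction $\alpha < 1/2$ is essential; I would establish it by mirroring the argument of Theorem~G.1 of \cite{GenSeg09} while carrying the time-supremum through each step by means of a martingale maximal inequality. The key observation is that, for fixed $u$, the partial sums $m \mapsto \sum_{i=1}^m \{\1(U_i \le u) - u\}$ form a mean-zero martingale, so Doob's $L^2$ inequality gives $\Ex\{ \sup_{t \in [0,1]} \B_n(0,t,u)^2 \} \le 4\, u(1-u)$, whence $\Ex\{ \sup_{t} \W_{\alpha,n}(0,t,u)^2 \} \le 4\, \{u(1-u)\}^{1-2\alpha}$, a quantity tending to $0$ as $u \to 0$ precisely because $1-2\alpha > 0$. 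To upgrade this pointwise-in-$u$ bound to a supremum over $u \in (0,\eta]$, I would partition the end zone into dyadic blocks $(2^{-(j+1)}\eta, 2^{-j}\eta]$, control the oscillation of $\W_{\alpha,n}(0,t,\cdot)$ on each block using the monotonicity of $u \mapsto \1(U_i \le u)$ together with the above maximal bound evaluated at the dyadic endpoints, and sum the resulting estimates over $j$; these form a convergent geometric series of ratio $2^{-(1-2\alpha)}$, so their total is made arbitrarily small by taking $\eta$ small, uniformly in $n$. Combining the interior convergence with this end-zone control yields the required asymptotic tightness, hence $\W_{\alpha,n}(0,\cdot,\cdot) \leadsto \W_\alpha(0,\cdot,\cdot)$, and the reduction of the first paragraph completes the proof; continuity of the trajectories of $\W_\alpha$ is inherited from the tightness and the separability of the limit.
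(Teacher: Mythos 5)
Your overall architecture is sound up to one step: the reduction of the two-sided process to $\W_{\alpha,n}(0,\cdot,\cdot)$, the finite-dimensional convergence, the treatment of the interior region $[0,1]\times[\eta,1-\eta]$ by continuous mapping, and the Doob bound $\Ex\{\sup_{t}\W_{\alpha,n}(0,t,u)^2\}\le 4\{u(1-u)\}^{1-2\alpha}$ at each fixed $u$ are all correct. The gap is in the passage from the dyadic endpoints $u_j=2^{-j}\eta$ to the full supremum over $u\in(0,\eta]$. Inside a block $(u_{j+1},u_j]$, the only comparison that the monotonicity of $u\mapsto\1(U_i\le u)$ provides is of the form
$$
\B_n(0,t,u_{j+1})-\sqrt{n}\,\lambda_n(0,t)(u_j-u_{j+1})\;\le\;\B_n(0,t,u)\;\le\;\B_n(0,t,u_j)+\sqrt{n}\,\lambda_n(0,t)(u_j-u_{j+1}),
$$
so besides the endpoint values you pick up a deterministic centering error of order $\sqrt{n}\,(u_j-u_{j+1})=\sqrt{n}\,2^{-(j+1)}\eta$. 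After division by the weight, block $j$ contributes a term of order $\sqrt{n}\,(2^{-j}\eta)^{1-\alpha}$, which diverges as $n\to\infty$ for every fixed $j$ and $\eta$, because your blocks have widths that do not shrink with $n$. The geometric series of ratio $2^{-(1-2\alpha)}$ therefore controls only the countably many endpoint values $\W_{\alpha,n}(0,\cdot,u_j)$, not the oscillation between them, and the end-zone negligibility is not established.

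This is precisely the hard point of weighted empirical process theory, and it cannot be dispatched by an $n$-independent partition plus monotonicity: one needs either brackets whose size is measured in $L^2(P)$ — so that the fineness of the partition is tied to the bracket size and the bracketing entropy integral converges exactly when $\alpha<1/2$ — or a maximal inequality exploiting the martingale structure in the $u$-direction (Birnbaum--Marshall or H\'ajek--R\'enyi type), typically combined with a separate treatment of the range $u\lesssim 1/n$ via binomial tail bounds. The paper sidesteps all of this by invoking Lemma~G.2 of \cite{GenSeg09}, which asserts that the weighted class $\FF=\{f_u: u\in(0,1)\}\cup\{0\}$ is $P$-Donsker (proved there by exactly such a bracketing argument), and then obtains the sequential version from Theorem~2.12.1 of \cite{vanWel96}; in that framework your end-zone requirement is nothing but asymptotic equicontinuity at the point $0\in\FF$, which holds since $\rho(f_u,0)=\{u(1-u)\}^{1/2-\alpha}\to0$ as $u\to0$. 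To make your self-contained route work, the fixed dyadic blocks must be replaced by one of these devices; as written, the tightness argument fails.
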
 

\begin{proof}
Fix $\alpha \in [0,1/2)$. Analogously to~\citet[Appendix~G]{GenSeg09}, for any $u \in (0,1)$, let $f_u:(0,1) \to \R$ be defined as $f_u(x) = \{u (1-u) \}^{-\alpha} \{ \1(x \leq u) - u \}$, $x \in (0,1)$, and let $\FF = \{ f_u : u \in (0,1) \} \cup \{ 0 \}$, where $0$ is the function vanishing everywhere on~$(0,1)$. Furthermore, let $P$ be the uniform probability distribution on~$(0,1)$. Lemma~G.2 of \citet{GenSeg09} then states that the collection $\FF$ is $P$-Donsker \citep[see, e.g.,][]{vanWel96,Kos08}, which is equivalent to saying that there exists a $P$-Brownian bridge $\G$ such that $\G_n \leadsto \G$ in $\ell^\infty(\FF)$, where, for any $f \in \FF$, $\G_n(f)= \sqrt{n} (\Pb_n f - Pf )$ with 
$$
Pf = \int_{[0,1]} f(x) \dd x, \qquad \Pb_n f = \frac{1}{n} \sum_{i=1}^n f(U_i),
$$
and the convention that $\Pb_0 f = 0$. For any $s \in [0,1]$ and $f \in \FF$, let $\Z_n(s,f) = \sqrt{n} \lambda_n(0,s) (\Pb_{\ip{ns}} f - Pf ) = \sqrt{n} \lambda_n(0,s) \Pb_{\ip{ns}} f$, where the second equality follows from the fact that $Pf=0$ for all $f \in \FF$. An immediate consequence of Theorem 2.12.1 of \cite{vanWel96} is then that $\Z_n \leadsto \Z$ in $\ell^\infty([0,1] \times \FF)$, where $\Z$ is a tight centered Gaussian process with covariance function 
\begin{equation}
\label{eq:covZ}
\Cov\{\Z(s,f),\Z(t,g)\} = (s \wedge t) (Pfg - Pf Pg) = (s \wedge t) Pfg, \qquad s,t \in [0,1], f,g \in \FF.
\end{equation}
To conclude the proof, we proceed analogously to \citet[Appendix~G]{GenSeg09}. Let $\phi:[0,1] \to \FF$ be defined as
$$
\phi(u) = \left\{
\begin{array}{ll}
f_u, & \mbox{if } u \in (0,1), \\
0, & \mbox{if } u \in \{0,1\}.
\end{array}
\right.
$$
Then, let $T:\ell^\infty([0,1] \times \FF) \to \ell^\infty([0,1]^2)$ be defined as $T(z)(s,u) = z(s,\phi(u))$ for all $z \in \ell^\infty([0,1] \times \FF)$ and $(s,u) \in [0,1]^2$, and notice that $T(\Z_n) = \W_{\alpha,n}(0,\cdot,\cdot)$. Indeed, for any $s \in [0,1]$, if $u \in (0,1)$,
$$
T(\Z_n)(s,u) = \Z_n(s,f_u) = \sqrt{n} \lambda_n(0,s) \Pb_{\ip{ns}} f_u = \W_{\alpha,n}(0,s,u),
$$
and, if $u \in \{0,1\}$,
$$
T(\Z_n)(s,u) = \Z_n(s,0) = \sqrt{n} \lambda_n(0,s) \Pb_{\ip{ns}} 0 = \W_{\alpha,n}(0,s,u).
$$
The map $T$ being linear and bounded, it is continuous. Hence, by the continuous mapping theorem, $T(\Z_n) \leadsto T(\Z)$ in $\ell^\infty([0,1]^2)$. Let $\W_{\alpha}(0,\cdot,\cdot) = T(\Z)$. Given the definition of $T$, since $\Z$ is centered, so is $\W_{\alpha}(0,\cdot,\cdot)$, and since the finite-dimensional distributions of $\Z$ are Gaussian, so are those of $\W_{\alpha}(0,\cdot,\cdot)$. Furthermore, the covariance function of $\W_{\alpha}(0,\cdot,\cdot)$ follows immediately from~\eqref{eq:covZ}. It remains to verify that the trajectories of $\W_{\alpha}(0,\cdot,\cdot)$ are continuous. From \citet[Chapter 2.12]{vanWel96}, we know that the trajectories of $\Z$ are almost surely uniformly continuous with respect to the natural semimetric $|s-t| + \rho(f,g)$ on $[0,1] \times \FF$, where 
$$
\rho^2(f,g) = \Ex[ \{\G(f) - \G(g)\}^2 ], \qquad f,g \in \FF.
$$ 
From \citet[page 3020]{GenSeg09}, we have that the map $\phi$ is continuous with respect to the Euclidean metric on $[0,1]$ and the standard deviation semimetric $\rho$ on $\FF$. As a consequence, the trajectories of $(s,u) \mapsto \W_{\alpha}(0,s,u) = \Z(s,\phi(u))$ are almost surely continuous on $[0,1]^2$. 

Finally, since $\W_{\alpha,n}(s,t,\cdot) = \W_{\alpha,n}(0,t,\cdot) - \W_{\alpha,n}(0,s,\cdot)$ for all $(s,t) \in \Delta$, we immediately obtain from the continuous mapping theorem that $\W_{\alpha,n} \leadsto \W_\alpha$ in $\ell^\infty(\Delta \times [0,1])$, where, for any $(s,t) \in \Delta$, $\W_{\alpha}(s,t,\cdot) = \W_{\alpha}(0,t,\cdot) - \W_{\alpha}(0,s,\cdot)$.
\end{proof}

%%%%%%%%%%%%%%%%%%%%%%%%%%%%%%%%%%%%%%%%%%%%%%%%%%%%%%%%%%%%%%%%%%%%%%%%%%%%%%%%%%%%%%%

\section{Weak convergence of sequential PWM processes}
\label{proof:prop:weak_Mn}

In order to study the limiting null distribution of the test statistics $S_{g,n}$ in~\eqref{eq:S_g_n} and $T_{g,n}$ in~\eqref{eq:T_g_n}, it is necessary to investigate the asymptotics of sequential empirical processes constructed from PWM estimators. The main result of this section is used in the proof, given in Appendix~\ref{proof:prop:weak_S_g_n}, of Proposition~\ref{prop:weak_S_g_n}, but it might also be of independent interest. 

As we consider a slightly more general theoretical framework than in the main sections of the paper, a richer notation is necessary. Let $X$ be a random variable with c.d.f.\ $F$ and let $\nu_1,\dots,\nu_p$ be $p$ functions on $[0,1]$. The reals $\beta_{\nu} = \Ex [ X \nu\{F(X)\} ]$, $\nu \in \{\nu_1,\dots,\nu_p\}$, assuming they exist, are called {\em generalized} PWM following \cite{DieGuiNavRib08}. When the functions $\nu_i$ are defined as $\nu_i(x) = x^{r_i} (1 - x)^{s_i}$, $x \in [0,1]$, for some integers $r_i \geq 0$ and $s_i \geq 0$, one recovers a very usefull subset of ``classical'' PWM that can be used to fit a large number of distributions \citep[see, e.g.,][]{GreLanWal79,HosWalWoo85,HosWal87}. In the sequel, we write~$\bm \nu = (\nu_1,\dots,\nu_p)$ and $\bm \beta_{\bm \nu} = (\beta_{\nu_1},\dots,\beta_{\nu_p})$.

Assuming that we have at hand a random sample $X_1,\dots,X_n$ from an unknown c.d.f.~$F$, a natural nonparametric estimator of $\bm \beta_{\bm \nu}$ based on a subsample $X_k,\dots,X_l$, $1 \leq k \leq l \leq n$, is 
\begin{equation}
\label{eq:hatbbetakl}
\hat{\bm \beta}_{\bm \nu,k:l} = (\hat \beta_{\nu_1,k:l},\dots,\hat \beta_{\nu_p,k:l}),
\end{equation}
where $\hat \beta_{\nu_i,k:l}$ stands for $\hat \beta_{i,k:l}$ defined in~\eqref{eq:hatbetakl} with the difference that $\gamma$ in~\eqref{eq:Fkl} is set to zero. 

For any $x \geq 0$, let $\ip{x}$ be the largest integer smaller or equal than $x$. Furthermore, recall that $\Delta = \{ (s,t) \in [0,1]^2 : s \leq t\}$ and that $\lambda_n(s,t) = (\ip{nt}-\ip{ns})/n$, $(s,t) \in \Delta$. The aim of this section is to establish the weak convergence  of the vector of empirical processes $\Mb_{\bm \nu,n} = \Big( \M_{\nu_1,n},\dots,\M_{\nu_p,n} \Big)$, where, for any $\nu \in \{\nu_1,\dots,\nu_p\}$,
\begin{equation}
\label{eq:M_nu_n}
\M_{\nu,n}(s,t) = \sqrt{n} \lambda_n(s,t)  ( \hat \beta_{\nu,\ip{ns}+1:\ip{nt}} - \beta_{\nu} ), \qquad (s,t) \in \Delta.
\end{equation}

The desired result is established under the following condition.

\begin{cond}
\label{cond:main}
The functions $\nu_1,\dots,\nu_p$ are continuously differentiable on $[0,1]$. Furthermore, $X_1,\dots,X_n$ are i.i.d.\ from a continuous distribution with c.d.f.\ $F$ such that:
\begin{enumerate}[(i)]
\item $\beta_{\nu_i} = \Ex [ X_1 \nu_i\{F(X_1)\} ] < \infty$ for all $i \in \{1,\dots,p\}$,
\item $\Ex( [  X_1 \nu_i\{F(X_1)\} - \beta_{\nu_i} ][  X_1 \nu_j\{F(X_1)\} - \beta_{\nu_j} ] ) < \infty$ for all $i,j \in \{1,\dots,p\}$,
\item there exists $\alpha \in [0,1/2)$ such that $\sup_{x \in \R} | H_\alpha(x) | < \infty$, where
\begin{equation*}
%\label{eq:H_nu_alpha}
H_\alpha(x) = x [ F(x) \{ 1-F(x)\} ]^\alpha, \qquad x \in \R.
\end{equation*}
\end{enumerate}
\end{cond}

It is easy to verify that if one of the functions $\nu_1,\dots,\nu_p$ is constant on $[0,1]$, then~(i) and~(ii) in Condition~\ref{cond:main} are equivalent to requiring that $\Ex(X_1^2) < \infty$. 

\begin{prop}[Limiting null distribution of $\Mb_{\bm \nu,n}$]
\label{prop:weak_Mn}
Under Condition~\ref{cond:main},
$$
\Mb_{\bm \nu,n} = \Big( \M_{\nu_1,n},\dots,\M_{\nu_p,n} \Big) \leadsto \Mb_{\bm \nu} = \Big( \M_{\nu_1},\dots,\M_{\nu_p} \Big)
$$
in $\{ \ell^\infty(\Delta) \}^p$, where $\M_{\nu_1,n},\dots,\M_{\nu_p,n}$ are defined in~\eqref{eq:M_nu_n} and $\M_{\nu_1},\dots,\M_{\nu_p}$ are centered Gaussian processes with continuous trajectories such that
\begin{equation}
\label{eq:M_nu_st}
\M_{\nu_i}(s,t) = \M_{\nu_i}(0,t) - \M_{\nu_i}(0,s), \qquad (s,t) \in \Delta, \, i \in  \{1,\dots,p\},
\end{equation}
and
\begin{equation}
\label{eq:covW}
\Cov\{ \M_{\nu_i}(0,s), \M_{\nu_j}(0,t) \} = (s \wedge t) \Cov(Y_{\nu_i},Y_{\nu_j}), \qquad (s,t) \in \Delta, \, i,j \in  \{1,\dots,p\},
\end{equation}
where, for any $\nu \in \{\nu_1,\dots,\nu_p\}$, $Y_\nu$ is defined in~\eqref{eq:Ynu}.
\end{prop}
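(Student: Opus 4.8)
The plan is to pass to the probability integral transform scale and to reduce the whole statement to the sequential weighted uniform empirical process of Proposition~\ref{prop:weak_W_alpha_n}. Since $F$ is continuous, the variables $U_i = F(X_i)$ are i.i.d.\ uniform on $[0,1]$, $X_i = F^{-1}(U_i)$ almost surely, and, writing $k = \ip{ns}+1$, $l = \ip{nt}$ and $G_{k:l}$ for the empirical c.d.f.\ of $U_k,\dots,U_l$, one has $F_{k:l}(X_j) = G_{k:l}(U_j)$ (recall $\gamma$ is set to zero here) and
$$
\M_{\nu,n}(s,t) = \frac{1}{\sqrt n} \sum_{j=k}^l \bigl[ X_j\, \nu\{G_{k:l}(U_j)\} - \beta_\nu \bigr].
$$
A first-order Taylor expansion of $\nu$, licit because each $\nu_i \in \CC^1([0,1])$, then yields $\M_{\nu,n} = A_{\nu,n} + B_{\nu,n} + R_{\nu,n}$ with
$$
A_{\nu,n}(s,t) = \frac{1}{\sqrt n} \sum_{j=k}^l \bigl[ X_j\, \nu(U_j) - \beta_\nu \bigr], \qquad B_{\nu,n}(s,t) = \frac{1}{\sqrt n} \sum_{j=k}^l X_j\, \nu'(U_j) \{ G_{k:l}(U_j) - U_j \},
$$
and $R_{\nu,n}$ gathering the remainders $X_j [ \nu\{G_{k:l}(U_j)\} - \nu(U_j) - \nu'(U_j)\{G_{k:l}(U_j)-U_j\} ]$.

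The term $A_{\nu,n}$ is the increment of the partial-sum process of the centered, square-integrable (by Condition~\ref{cond:main}(ii)) variables $\psi_\nu(U_j) - \beta_\nu$, where $\psi_\nu(u) = F^{-1}(u)\nu(u)$. Since $\W_{\alpha,n}$ is built from the same sequential empirical measure of $U_1,\dots,U_n$, I would obtain the joint weak convergence of $(A_{\nu_1,n},\dots,A_{\nu_p,n})$ and $\W_{\alpha,n}$ at once: the union $\FF \cup \{\psi_{\nu_1},\dots,\psi_{\nu_p}\}$ of the $P$-Donsker class $\FF$ used in the proof of Proposition~\ref{prop:weak_W_alpha_n} with finitely many square-integrable functions is again $P$-Donsker, so Theorem~2.12.1 of \cite{vanWel96} yields convergence of the associated two-time sequential empirical process. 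For $B_{\nu,n}$, I use $\sqrt n\, \lambda_n(s,t)\{G_{k:l}(u)-u\} = \B_n(s,t,u) = \{u(1-u)\}^\alpha\, \W_{\alpha,n}(s,t,u)$ to write
$$
B_{\nu,n}(s,t) = \int_0^1 w(u)\, \W_{\alpha,n}(s,t,u)\, \dd G_{k:l}(u), \qquad w(u) = F^{-1}(u)\,\nu'(u)\,\{u(1-u)\}^\alpha .
$$
After replacing $\dd G_{k:l}$ by $\dd u$, the map $z \mapsto \int_0^1 w(u)\, z(\cdot,\cdot,u)\, \dd u$ is a bounded, hence continuous, linear functional $\ell^\infty(\Delta\times[0,1]) \to \ell^\infty(\Delta)$, because $w$ is bounded: $\nu'$ is bounded on $[0,1]$ and $|F^{-1}(u)\{u(1-u)\}^\alpha| \le \sup_{x\in\R}|H_\alpha(x)| < \infty$ by Condition~\ref{cond:main}(iii). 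The continuous mapping theorem then delivers the joint limit of $(A_{\nu_i,n}, B_{\nu_i,n})_i$, the $B$-part converging to $\int_0^1 F^{-1}(u)\nu'(u)\, \B(\cdot,\cdot,u)\, \dd u$.

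It remains to show that $\sup_{(s,t)\in\Delta}|R_{\nu,n}(s,t)| = o_\Pr(1)$, together with the uniform negligibility of the $\dd G_{k:l}\mapsto\dd u$ replacement. Factoring $|X_j|\{U_j(1-U_j)\}^\alpha = |H_\alpha(X_j)| \le \sup_x|H_\alpha(x)|$ out of each summand and using $|G_{k:l}(U_j)-U_j| \le \{U_j(1-U_j)\}^\alpha M_n(s,t)/\{\sqrt n\,\lambda_n(s,t)\}$ with $M_n(s,t) = \sup_u|\W_{\alpha,n}(s,t,u)|$ and $n\lambda_n(s,t) = l-k+1$, one gets the block-wise bound
$$
|R_{\nu,n}(s,t)| \le \sup_x|H_\alpha(x)|\; M_n(s,t)\; \omega\bigl(\|G_{k:l}-\mathrm{id}\|_\infty\bigr),
$$
where $\omega$ is the modulus of continuity of $\nu'$. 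On $\{\lambda_n(s,t)\ge\eps\}$ this vanishes because $\|G_{k:l}-\mathrm{id}\|_\infty \to 0$ uniformly while $M_n(s,t) = O_\Pr(1)$; on $\{\lambda_n(s,t)<\eps\}$ one bounds $\omega$ by $2\|\nu'\|_\infty$ and uses that $\sup_{\lambda_n(s,t)<\eps} M_n(s,t)$ is small, since the limit $\W_\alpha$ has continuous trajectories and vanishes on the diagonal $\{s=t\}$. Letting $\eps\downarrow0$ closes the argument, and the measure-replacement error is dealt with by the same split. This uniform control of the remainder down to the diagonal of $\Delta$, where the blocks are short and the weighted fluctuations near $0$ and $1$ are largest, is the main obstacle; it is precisely here that the a.s.\ continuity of $\W_\alpha$ from Proposition~\ref{prop:weak_W_alpha_n} and the boundedness of $H_\alpha$ in Condition~\ref{cond:main}(iii) are indispensable.

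Combining the three steps, each component satisfies $\M_{\nu,n} = A_{\nu,n}+B_{\nu,n}+o_\Pr(1)$ uniformly on $\Delta$, so $\Mb_{\bm\nu,n}\leadsto\Mb_{\bm\nu}$ in $\{\ell^\infty(\Delta)\}^p$ with $\M_\nu = A_\nu+B_\nu$. The increment identity~\eqref{eq:M_nu_st} and the continuity of the trajectories are inherited from the partial-sum structure of $A_\nu$ and from $\B(s,t,\cdot)=\B(0,t,\cdot)-\B(0,s,\cdot)$. Finally,~\eqref{eq:covW} follows from a direct covariance computation: expanding $\Cov\{\M_{\nu_i}(0,s),\M_{\nu_j}(0,t)\}$ into its four contributions coming from $A_{\nu_i},A_{\nu_j}$, $A_{\nu_i},B_{\nu_j}$, $B_{\nu_i},A_{\nu_j}$ and $B_{\nu_i},B_{\nu_j}$, and using the covariance of the Kiefer--Müller process $\B$, reproduces $(s\wedge t)\Cov(Y_{\nu_i},Y_{\nu_j})$ with $Y_\nu = X_1\nu\{F(X_1)\}+\int_0^1 F^{-1}(v)\nu'(v)\1(U_1\le v)\,\dd v$, that is, the quantity in~\eqref{eq:Ynu}.
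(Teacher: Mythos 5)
Your decomposition is, modulo the probability integral transform, exactly the paper's: your $A_{\nu,n}$ is the paper's $\Y_{\nu,n}$, your $B_{\nu,n}$ is the paper's $\X'_{\nu,n}$ in~\eqref{eq:X_nu_n'}, and your remainder bound $|R_{\nu,n}(s,t)|\le \sup_{x}|H_\alpha(x)|\,M_n(s,t)\,\omega(\|G_{k:l}-\mathrm{id}\|_\infty)$ combined with the split into blocks with $\lambda_n(s,t)$ small (handled by the asymptotic equicontinuity of $\W_{\alpha,n}$ and its vanishing on the diagonal) and blocks with $\lambda_n(s,t)\ge\eps$ (handled by uniform convergence of $G_{k:l}$ and $M_n=O_\Pr(1)$) reproduces the paper's proof of~\eqref{eq:step1}. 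Your treatment of the joint convergence of $(A_{\nu_1,n},\dots,A_{\nu_p,n},\W_{\alpha,n})$ via the $P$-Donsker property of the union class $\FF\cup\{\psi_{\nu_1},\dots,\psi_{\nu_p}\}$ and Theorem~2.12.1 of \cite{vanWel96} is correct, and in fact supplies the finite-dimensional step that the paper omits ``for the sake of brevity''. The limit identification and covariance computation are also fine.

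The genuine gap is the measure replacement $\dd G_{k:l}\mapsto\dd u$ in $B_{\nu,n}$, i.e.\ the uniform negligibility over $\Delta$ of
\[
\int_0^1 w(u)\,\W_{\alpha,n}(s,t,u)\,\{\dd G_{k:l}(u)-\dd u\},
\]
which you dismiss with ``dealt with by the same split''. The split only localizes the problem to $\Delta^\delta=\{(s,t)\in\Delta: t-s\ge\delta\}$; there, the two facts you have available---$\|G_{k:l}-\mathrm{id}\|_\infty=o_\Pr(1)$ and $M_n(s,t)=O_\Pr(1)$---do \emph{not} suffice. Unlike for $R_{\nu,n}$, there is no modulus-of-continuity factor of $\nu'$ to exploit: the integrand is identical in the two integrals and only the measure changes, and closeness of $G_{k:l}$ to the identity in Kolmogorov distance controls the discrepancy only for integrands of moderate total variation, whereas $u\mapsto\W_{\alpha,n}(s,t,u)$ has total variation of order $\sqrt{n}\,\lambda_n(s,t)$, so integration by parts yields only an $O_\Pr(1)$ bound. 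This is precisely where the paper works hardest (its step~\eqref{eq:step2}): it shows the difference process converges weakly to zero in $\ell^\infty(\Delta^\delta)$, obtaining the finite-dimensional part from the joint convergence $(\W_{\alpha,n}(s,t,\cdot),G_{\ip{ns}+1:\ip{nt}})\leadsto(\W_\alpha(s,t,\cdot),G)$ together with Lemma~3 of \cite{HolKojQue13} (integration of a process whose weak limit has continuous trajectories against a weakly convergent empirical measure), and then proves asymptotic tightness by a separate, hands-on computation. Your gap is repairable with tools you already invoke---for instance, discretize $[0,1]$ into finitely many cells, use the asymptotic uniform equicontinuity in $u$ of $\W_{\alpha,n}$ and the continuity of $w$ (equivalently, of $x\mapsto\nu'\{F(x)\}H_\alpha(x)$ on $\Rbar$, which the paper also needs) to replace the integrand by a piecewise constant function, and only then invoke $\|G_{k:l}-\mathrm{id}\|_\infty\to0$---but as written this step is asserted rather than proven, and it is not a routine one.
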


\begin{proof}%[\bf Proof]% of Proposition~\ref{prop:weak_Mn}]
Fix $\nu \in \{\nu_1,\dots,\nu_p\}$ and consider the decomposition 
$$
\M_{\nu,n}(s,t) = \X_{\nu,n}(s,t) + \Y_{\nu,n}(s,t), \qquad (s,t) \in \Delta,
$$
where
\begin{align}
\label{eq:X_nu_n}
\X_{\nu,n}(s,t) &= \sqrt{n} \lambda_n(s,t)  ( \hat \beta_{\nu,\ip{ns}+1:\ip{nt}} - \tilde \beta_{\nu,\ip{ns}+1:\ip{nt}} ), \\
\nonumber %\label{eq:Y_nu_n}
\Y_{\nu,n}(s,t) &= \sqrt{n} \lambda_n(s,t)  ( \tilde \beta_{\nu,\ip{ns}+1:\ip{nt}} - \beta_{\nu} ), 
\end{align}
with 
\begin{equation}
\label{eq:tilde_beta_st}
\tilde \beta_{\nu,\ip{ns}+1:\ip{nt}} = \frac{1}{\ip{nt} - \ip{ns}}  \sum_{i=\ip{ns}+1}^{\ip{nt}} X_i \nu\{F(X_i)\}.
\end{equation}
In addition, let $U_1,\dots,U_n$ be the unobservable sample obtained from $X_1,\dots,X_n$ by the probability integral transformations $U_i = F(X_i)$, $i \in \{1,\dots,n\}$, and recall the definition of the empirical process $\W_{\alpha,n}$ in~\eqref{eq:W_alpha_n} and its weak limit $\W_\alpha$ given in Proposition~\ref{prop:weak_W_alpha_n}. The first step of the proof is to show that 
\begin{equation}
\label{eq:step_join}
(\Y_{\nu_1,n},\dots,\Y_{\nu_p,n},\W_{\alpha,n}) \leadsto (\Y_{\nu_1},\dots,\Y_{\nu_p},\W_\alpha)
\end{equation}
 in $\{ \ell^\infty(\Delta) \}^p \times \ell^\infty(\Delta \times [0,1])$.

The proof of the convergence in distribution of the finite-dimensional distributions of $(\Y_{\nu_1,n},\dots,\Y_{\nu_p,n},\W_{\alpha,n})$ to those of $(\Y_{\nu_1},\dots,\Y_{\nu_p},\W_\alpha)$ %can be proved using the Cram\'er-Wold device and the Lindeberg--Feller central limit theorem [check]. It 
is omitted for the sake of brevity. To prove~\eqref{eq:step_join}, it remains thus to show joint asymptotic tightness, which is implied by marginal asymptotic tightness. From Proposition~\ref{prop:weak_W_alpha_n}, we have that $\W_{\alpha,n} \leadsto \W_\alpha$ in $\ell^\infty(\Delta \times [0,1])$. Next, fix $\nu \in \{\nu_1,\dots,\nu_p\}$ and notice that, since $\sigma_{\nu}^2 = \Var[X_1 \nu \{F(X_1)\}] < \infty$, we have, from the functional central limit theorem, that $\Y_{\nu,n}(0,\cdot) \leadsto \Y_{\nu}(0,\cdot)$ in $\ell^\infty([0,1])$, where $\Y_{\nu}(0,\cdot) = \sigma_{\nu} \U$ with $\U$ a standard Brownian motion. Since, for any $(s,t) \in \Delta$, $\Y_{\nu,n}(s,t) = \Y_{\nu,n}(0,t) - \Y_{\nu,n}(0,s)$, an immediate consequence of the continuous mapping theorem is that $\Y_{\nu,n} \leadsto \Y_{\nu}$ in $\ell^\infty(\Delta)$, where, for any $(s,t) \in \Delta$, $\Y_{\nu}(s,t) = \sigma_{\nu} \{ \U(t) - \U(s) \}$. This completes the proof of~\eqref{eq:step_join}.

Fix again $\nu \in \{\nu_1,\dots,\nu_p\}$. Since $\sup_{x \in \R} | H_\alpha(x) | < \infty$, $\sup_{u \in [0,1]} | \nu'(u) | < \infty$ and $F$ is continuous, the map $\psi_{\nu,\alpha}:\ell^\infty(\Delta \times [0,1]) \to \ell^\infty(\Delta)$ defined as 
\begin{equation}
\label{eq:psi_nu_alpha}
\psi_{\nu,\alpha}(f)(s,t) = \int_\R \nu'\{F(x)\} H_\alpha(x) f\{s,t,F(x) \} \dd F(x), \quad (s,t) \in \Delta, \quad f \in \ell^\infty(\Delta \times [0,1]), 
\end{equation}
is continuous. The weak convergence in~\eqref{eq:step_join} and the continuous mapping theorem then imply that 
\begin{multline*}
\Big(\Y_{\nu_1,n} + \psi_{\nu_1,\alpha}(\W_{\alpha,n}),\dots, \Y_{\nu_p,n} + \psi_{\nu_p,\alpha}(\W_{\alpha,n}) \Big) \leadsto \Big(\Y_{\nu_1} + \psi_{\nu_1,\alpha}(\W_{\alpha}),\dots, \Y_{\nu_p} + \psi_{\nu_p,\alpha}(\W_{\alpha}) \Big)
\end{multline*}
in~$\{\ell^\infty(\Delta)\}^p$. To complete the proof, it remains thus to show that, for any $\nu \in \{\nu_1,\dots,\nu_p\}$, 
\begin{equation}
\label{eq:ae_X_phi}
\sup_{(s,t) \in \Delta} | \X_{\nu,n}(s,t) -  \psi_{\nu,\alpha}(\W_{\alpha,n})(s,t)\} | = o_\Pr(1),
\end{equation}
where $\X_{\nu,n}$ is defined in~\eqref{eq:X_nu_n}. Fix $\nu \in \{\nu_1,\dots,\nu_p\}$ and consider the empirical process
\begin{equation}
\label{eq:X_nu_n'}
\X_{\nu,n}'(s,t) = \int_\R  \nu'\{F(x)\} H_\alpha(x) \W_{\alpha,n}\{s,t,F(x)\}  \dd F_{\ip{ns}+1:\ip{nt}}(x), \qquad (s,t) \in \Delta,
\end{equation}
where $F_{\ip{ns}+1:\ip{nt}}$ is defined in~\eqref{eq:Fkl} with $\gamma=0$. To prove~\eqref{eq:ae_X_phi}, we shall first show that 
\begin{equation}
\label{eq:step1}
\sup_{(s,t) \in \Delta} |\X_{\nu,n}(s,t) - \X'_{\nu,n}(s,t) | = o_\Pr(1),
\end{equation}
and then that
\begin{equation}
\label{eq:step2}
\sup_{(s,t) \in \Delta} |\X'_{\nu,n}(s,t) - \psi_{\nu,\alpha}(\W_{\alpha,n})(s,t) | = o_\Pr(1).
\end{equation}

\noindent
{\em Proof of~\eqref{eq:step1}.}
On one hand, starting from~\eqref{eq:X_nu_n}, we have that, for any $(s,t) \in \Delta$,
\begin{align*}
\X_{\nu,n}(s,t) &= \frac{1}{\sqrt{n}} \sum_{i=\ip{ns}+1}^{\ip{nt}} X_i \left[ \nu\{F_{\ip{ns}+1:\ip{nt}}(X_i)\} - \nu\{F(X_i)\} \right] \\
&= \int_\R x \sqrt{n} \lambda_n(s,t) \left[ \nu\{F_{\ip{ns}+1:\ip{nt}}(x)\} - \nu\{F(x) \} \right]  \dd F_{\ip{ns}+1:\ip{nt}}(x).
\end{align*}
On the other hand,~\eqref{eq:X_nu_n'} can be rewritten as 
$$
\X'_{\nu,n}(s,t) = \int_\R x \nu'\{F(x)\} \sqrt{n} \lambda_n(s,t) \{ F_{\ip{ns}+1:\ip{nt}}(x) - F(x) \} \dd F_{\ip{ns}+1:\ip{nt}}(x), \quad (s,t) \in \Delta.
$$
Then, the supremum on the left of~\eqref{eq:step1} is smaller than
\begin{multline}
\label{eq:supdiff}
\sup_{(s,t) \in \Delta} \sqrt{n} \lambda_n(s,t) \int_\R |x|  \left| \nu\{F_{\ip{ns}+1:\ip{nt}}(x)\} - \nu\{F(x)\} \right.  \\ \left. - \nu'\{F(x)\}  \{ F_{\ip{ns}+1:\ip{nt}}(x) - F(x) \} \right|  \dd F_{\ip{ns}+1:\ip{nt}}(x).
\end{multline}
By the mean value theorem, for any $x \in \R$, there exists $U^*_{n,s,t,x} \in (0, 1)$ such that
\begin{multline*}
\nu\{F_{\ip{ns}+1:\ip{nt}}(x)\} - \nu\{F(x)\} = \nu'[F(x) + U^*_{n,s,t,x} \{ F_{\ip{ns}+1:\ip{nt}}(x) - F(x) \} ] \\ \times \{ F_{\ip{ns}+1:\ip{nt}}(x) - F(x) \}.
\end{multline*}
Hence,~\eqref{eq:supdiff} is smaller than
\begin{multline*}
\sup_{(s,t) \in \Delta} \int_\R |x|  \left| \nu'[F(x) + U^*_{n,s,t,x} \{ F_{\ip{ns}+1:\ip{nt}}(x) - F(x) \}] - \nu'\{F(x)\} \right| \\ \times | \sqrt{n} \lambda_n(s,t) \{ F_{\ip{ns}+1:\ip{nt}}(x) - F(x) \} |  \dd F_{\ip{ns}+1:\ip{nt}}(x), 
\end{multline*}
that is, 
\begin{multline*}
\sup_{(s,t) \in \Delta} \int_\R |H_\alpha(x)|  \left| \nu'[F(x) + U^*_{n,s,t,x} \{ F_{\ip{ns}+1:\ip{nt}}(x) - F(x) \}] - \nu'\{F(x)\} \right| \\ \times | \W_{\alpha,n}\{s,t,F(x)\} |  \dd F_{\ip{ns}+1:\ip{nt}}(x).
\end{multline*}
Therefore,
\begin{multline}
\label{eq:supdiff2}
\sup_{(s,t) \in \Delta} |\X_{\nu,n}(s,t) - \X'_{\nu,n}(s,t) | \\ \leq \sup_{(s,t,x) \in \Delta \times \R} \left| \nu'[F(x) + U^*_{n,s,t,x} \{ F_{\ip{ns}+1:\ip{nt}}(x) - F(x) \}] - \nu'\{F(x)\} \right| \\ \times \sup_{(s,t,x) \in \Delta \times \R} | \W_{\alpha,n}\{s,t,F(x)\} | \times \sup_{x \in \R} |H_\alpha(x)|. 
\end{multline}
Fix $\eps, \eta > 0$. From the previous inequality, since $\sup_{u \in [0,1]} |\nu'(u)| < \infty$, $\sup_{x \in \R} | H_\alpha(x) | < \infty$, and using the fact that $\W_{\alpha,n}(s,t,\cdot) = 0$ when $s=t$ and the asymptotic uniform equicontinuity in probability of $\W_{\alpha,n}$ (a consequence of Proposition~\ref{prop:weak_W_alpha_n}), there exists $\delta \in (0,1)$ such that, for all sufficiently large~$n$, 
\begin{multline*}
\Pr \left( \sup_{(s,t) \in \Delta \atop t - s < \delta} |\X_{\nu,n}(s,t) - \X'_{\nu,n}(s,t) | > \eps \right) \\ \leq \Pr \left( 2 \sup_{u \in [0,1]} |\nu'(u)| \times \sup_{x \in \R} | H_\alpha(x) | \times \sup_{(s,t,u) \in \Delta \times [0,1] \atop t - s < \delta} | \W_{\alpha,n}(s,t,u) | > \eps \right) < \eta/2.
\end{multline*}
To prove~\eqref{eq:step1}, it remains therefore to show that, for all sufficiently large~$n$,
$$
\Pr \left( \sup_{(s,t) \in \Delta \atop t - s \geq \delta} |\X_{\nu,n}(s,t) - \X'_{\nu,n}(s,t) | > \eps \right) < \eta/2.
$$
Let $\Delta^\delta = \{(s,t) \in \Delta : t - s \geq \delta\}$. To show the latter, it suffices to show that 
$$
\sup_{(s,t) \in \Delta^\delta} |\X_{\nu,n}(s,t) - \X'_{\nu,n}(s,t) | = o_\Pr(1).
$$
The fact that $\B_n \leadsto \B$ in $\ell^\infty(\Delta \times [0,1])$, where $\B_n$ is defined in~\eqref{eq:Bn}, implies that 
\begin{equation}
\label{eq:unifconv}
\sup_{(s,t,u) \in \Delta^\delta \times [0,1]} | G_{\ip{ns}+1:\ip{nt}}(u) - u| = \sup_{(s,t,x) \in \Delta^\delta \times \R} | F_{\ip{ns}+1:\ip{nt}}(x) - F(x)| = o_\Pr(1),
\end{equation}
since $G_{\ip{ns}+1:\ip{nt}}\{F(x)\} = F_{\ip{ns}+1:\ip{nt}}(x)$ for all $(s,t,x) \in \Delta \times \R$. The previous display then implies that
$$
\sup_{(s,t,x) \in \Delta^\delta \times \R} \left| \nu'[F(x) + U^*_{n,s,t,x} \{ F_{\ip{ns}+1:\ip{nt}}(x) - F(x) \}] - \nu'\{F(x)\} \right| = o_\Pr(1)
$$
since $\nu'$ is (uniformly) continuous on $[0,1]$. The latter combined with~\eqref{eq:supdiff2} in which the suprema are restricted to $(s,t) \in \Delta^\delta$ and the fact that $\sup_{(s,t,x) \in \Delta \times \R} | \W_{\alpha,n}\{s,t,F(x)\} | = O_\Pr(1)$ implies finally~\eqref{eq:step1}.

\noindent
{\em Proof of~\eqref{eq:step2}.}
We have 
\begin{multline*}
\sup_{(s,t) \in \Delta} |\X'_{\nu,n}(s,t) - \psi_{\nu,\alpha}(\W_{\alpha,n})(s,t)\} | \leq \sup_{(s,t) \in \Delta} |\X'_{\nu,n}(s,t)| + \sup_{(s,t) \in \Delta} | \psi_{\nu,\alpha}(\W_{\alpha,n})(s,t) |\\ \leq 2 \sup_{u \in [0,1]} |\nu'(u)| \times \sup_{x \in \R} | H_\alpha(x) | \times \sup_{(s,t,u) \in \Delta \times [0,1]} | \W_{\alpha,n}(s,t,u) |.
\end{multline*}
Fix $\eps, \eta > 0$. Proceeding as for the proof of~\eqref{eq:step2}, from the previous inequality, the fact that $\W_{\alpha,n}(s,t,\cdot) = 0$ when $s=t$ and the asymptotic uniform equicontinuity in probability of $\W_{\alpha,n}$, there exists $\delta \in (0,1)$ such that, for all sufficiently large~$n$, 
$$
\Pr \left( \sup_{(s,t) \in \Delta \atop t - s < \delta} |\X'_{\nu,n}(s,t) - \psi_{\nu,\alpha}(\W_{\alpha,n})(s,t)\} | > \eps \right) %\\ \leq \Pr \left( 2 \times \sup_{x \in \Rbar} | H_\alpha(x) | \times \sup_{(s,t,u) \in \Delta \times [0,1] \atop t - s < \delta} | \W_{\alpha,n}(s,t,u) | > \eps \right) 
< \eta/2,
$$
and thus, to prove~\eqref{eq:step2}, it remains to show that 
\begin{equation}
\label{eq:ae_X'phi2}
\sup_{(s,t) \in \Delta^\delta} |\X'_{\nu,n}(s,t) - \psi_{\nu,\alpha}(\W_{\alpha,n})(s,t) | = o_\Pr(1),
\end{equation}
where $\Delta^\delta = \{(s,t) \in \Delta : t - s \geq \delta\}$. Proving the previous display is equivalent to showing that $\X'_{\nu,n} - \psi_{\nu,\alpha}(\W_{\alpha,n})$ converges weakly to $(s,t) \mapsto 0$ in $\ell^\infty(\Delta^\delta)$. Fix $(s,t) \in \Delta^\delta$ and let $G(u) = u$ for all $u \in [0,1]$. From~\eqref{eq:unifconv}, $( \W_{\alpha,n}(s,t,\cdot), G_{\ip{ns}+1:\ip{nt}}) \leadsto  ( \W_\alpha(s,t,\cdot), G )$ in $\{ \ell^\infty([0,1]) \}^2$, which implies that
\begin{multline*}
\Big( x \mapsto \nu'\{F(x)\} H_\alpha(x) \W_{\alpha,n}\{s,t,F(x)\}, F_{\ip{ns}+1:\ip{nt}} \Big) \\ \leadsto  \Big(  x \mapsto \nu'\{F(x)\} H_\alpha(x) \W_\alpha\{s,t,F(x)\}, F \Big)
\end{multline*}
in $\{ \ell^\infty(\Rbar) \}^2$, where $\Rbar = [-\infty,\infty]$. The assumptions of the proposition ensure that $x \mapsto \nu'\{F(x)\} H_\alpha(x)$ is a continuous function on $\Rbar$ and therefore that the process 
$$
x \mapsto \nu'\{F(x)\} H_\alpha(x) \W_\alpha\{s,t,F(x)\} \in \ell^\infty(\Rbar)
$$
has continuous trajectories. Combining the latter fact with Lemma~3 of \cite{HolKojQue13} and the continuity of the map $\psi_{\nu,\alpha}$ defined in~\eqref{eq:psi_nu_alpha}, we obtain, from the continuous mapping theorem, that $\X'_{\nu,n}(s,t) - \psi_{\nu,\alpha}(\W_{\alpha,n})(s,t) \leadsto 0$. The latter weak convergence being equivalent to convergence in probability to zero, we have that the finite-dimensional distributions of $\X'_{\nu,n} - \psi_{\nu,\alpha}(\W_{\alpha,n})$ converge weakly to those of $(s,t) \mapsto 0$ in $\ell^\infty(\Delta^\delta)$. 

It remains to show asymptotic tightness. To do so, we shall prove the $\|\cdot\|_1$-asymptotic uniform equicontinuity in probability of $\X'_{\nu,n} - \psi_{\nu,\alpha}(\W_{\alpha,n})$ in $\ell^\infty(\Delta^\delta)$. By the triangle inequality, it suffices to show the latter for $\X'_{\nu,n}$ only as $\psi_{\nu,\alpha}(\W_{\alpha,n})$ converges weakly to $\psi_{\nu,\alpha}(\W_\alpha)$ in $\ell^\infty(\Delta^\delta)$, the limit having continuous trajectories. By Problem~2.1.5 in \cite{vanWel96}, it remains thus to prove that, for any positive sequence $a_n \downarrow 0$,
\begin{equation}
\label{eq:X_nu_n'_auep}
\sup_{(s,t), (s',t') \in \Delta^\delta \atop |s-s'|+|t-t'| \leq a_n} | \X_{\nu,n}'(s,t) - \X_{\nu,n}'(s',t') | = o_\Pr(1).
\end{equation}
The supremum on the left of the previous display is smaller than $A_n + B_n$, where
\begin{multline*}
A_n = \sup_{(s,t), (s',t') \in \Delta^\delta \atop |s-s'|+|t-t'| \leq a_n} \Big| \int_\R  \nu'\{F(x)\}H_\alpha(x) \W_{\alpha,n}\{s,t,F(x)\} \dd F_{\ip{ns}+1:\ip{nt}}(x) \\ - \int_\R  \nu'\{F(x)\} H_\alpha(x) \W_{\alpha,n}\{s',t',F(x) \} \dd F_{\ip{ns}+1:\ip{nt}}(x) \Big|
\end{multline*}
and
\begin{multline*}
B_n = \sup_{(s,t), (s',t') \in \Delta^\delta \atop |s-s'|+|t-t'| \leq a_n} \Big| \int_\R  \nu'\{F(x)\} H_\alpha(x) \W_{\alpha,n}\{s',t',F(x)\} \dd F_{\ip{ns}+1:\ip{nt}}(x) \\ - \int_\R  \nu'\{F(x)\} H_\alpha(x) \W_{\alpha,n}\{s',t',F(x)\} \dd F_{\ip{ns'}+1:\ip{nt'}}(x) \Big|.
\end{multline*}
Concerning $A_n$, we have
$$
A_n \leq \sup_{u \in [0,1]} |\nu'(u)| \times \sup_{x \in \R} | H_\alpha(x) | \times \sup_{(s,t), (s',t') \in \Delta^\delta, u \in [0,1] \atop |s-s'|+|t-t'| \leq a_n} | \W_{\alpha,n}(s,t,u) - \W_{\alpha,n}(s',t',u) | = o_\Pr(1).
$$
The term $B_n$ is smaller than
\begin{multline*}
\sup_{(s,t), (s',t') \in \Delta^\delta \atop |s-s'|+|t-t'| \leq a_n} \left| \frac{1}{\ip{nt}-\ip{ns}} \left\{ \sum_{i=\ip{ns}+1}^{\ip{nt}} \nu'\{F(X_i)\} H_\alpha(X_i) \W_{\alpha,n}\{s',t',F(X_i)\} \right. \right. \\
 \left. \left. - \sum_{i=\ip{ns'}+1}^{\ip{nt'}} \nu'\{F(X_i)\} H_\alpha(X_i) \W_{\alpha,n}\{s',t',F(X_i)\} \right\} \right| + \\ \sup_{(s,t), (s',t') \in \Delta^\delta \atop |s-s'|+|t-t'| \leq a_n} \left| \left( \frac{1}{\ip{nt}-\ip{ns}} - \frac{1}{\ip{nt'}-\ip{ns'}} \right)  \sum_{i=\ip{ns'}+1}^{\ip{nt'}} \nu'\{F(X_i)\} H_\alpha(X_i) \W_{\alpha,n}\{s',t',F(X_i)\} \right|,
\end{multline*}
which is smaller than
\begin{multline*}
2 \times \sup_{(s,t), (s',t') \in \Delta^\delta \atop |s-s'|+|t-t'| \leq a_n} \frac{|\ip{nt} - \ip{nt'}| + |\ip{ns} - \ip{ns'}|}{\ip{nt}-\ip{ns}} \times \sup_{u \in [0,1]} |\nu'(u)| \\ \times \sup_{x \in \R} | H_\alpha(x) | \times \sup_{(s,t,u) \in \Delta \times [0,1]} | \W_{\alpha,n}(s,t,u) | = o_\Pr(1).
\end{multline*}
Hence,~\eqref{eq:X_nu_n'_auep} and thus~\eqref{eq:ae_X'phi2} hold, which implies~\eqref{eq:step2}.

Finally,~\eqref{eq:M_nu_st} is a consequence of the fact that, for any $\nu \in \{\nu_1,\dots,\nu_p\}$, 
$$
\M_{\nu}(s,t) = \Y_{\nu}(s,t) + \int_\R \nu'\{F(x)\} H_\alpha(x) \W_\alpha\{s,t,F(x)\} \dd F(x)
$$
and the analogue properties for $\Y_{\nu}$ and $\W_\alpha$, while~\eqref{eq:covW} follows from standard calculations.
\end{proof}

%%%%%%%%%%%%%%%%%%%%%%%%%%%%%%%%%%%%%%%%%%%%%%%%%%%%%%%%%%%%%%%%%%%%%%%%%%%%%%%%%%%%%%%

\section{Proof of Proposition~\ref{prop:weak_S_g_n}}
\label{proof:prop:weak_S_g_n}

Proposition~\ref{prop:weak_S_g_n} is a corollary of a slightly more general result that we shall first prove using Proposition~\ref{prop:weak_Mn} shown in Appendix~\ref{proof:prop:weak_Mn}. To be consistent with Appendix~\ref{proof:prop:weak_Mn}, we keep using the richer notation introduced therein. Let $\nu_1,\dots,\nu_p$ be $p$ functions on $[0,1]$, let $g$ be a function from $\R^p$ to $\R$ defined on $\DD$ containing $\bm \beta_{\bm \nu} = (\beta_{\nu_1},\dots,\beta_{\nu_p})$, and consider the generic statistic
\begin{equation*}
%\label{eq:S_nu_g_n}
S_{\bm \nu,g,n} =  \max_{1 \leq k \leq n-1} \frac{k (n-k)}{n^{3/2}} \1( \hat{\bm \beta}_{\bm \nu,1:k} \in \DD, \hat{\bm \beta}_{\bm \nu,k+1:n} \in \DD) \left| g( \hat{\bm \beta}_{\bm \nu,1:k}) - g(\hat{\bm \beta}_{\bm \nu,k+1:n}) \right|, 
\end{equation*}
where $\hat{\bm \beta}_{\bm \nu,1:k}$ and $\hat{\bm \beta}_{\bm \nu,k+1:n}$ are defined analogously to~\eqref{eq:hatbbetakl}. When $p=3$, $\nu_1 (x) = 1$, $\nu_2(x) = x$, $\nu_3(x) = x^2$, $x \in [0,1]$, $\DD = \DD_\xi$ in~\eqref{eq:def_xi} and $g \in \{g_\mu,g_\sigma,g_\xi\}$, $S_{\bm \nu,g,n}$ is $S_{g,n}$ in~\eqref{eq:S_g_n}.

\begin{prop}[Limiting null distribution of $S_{\bm \nu,g,n}$]
\label{prop:weak_S_nu_g_n}
If Condition~\ref{cond:main} holds, $\Ex(|X_1|) < \infty$, $\DD$ is an open convex subset of $\R^p$ containing $\bm \beta_{\bm \nu}$ and $g$ is continuously differentiable on~$\DD$, $S_{\bm \nu,g,n}$ converges in distribution to $\sigma_{\bm \nu,g} \sup_{s \in [0,1]} |\U(s) - s \U(1)|$, where $\U$ is a standard Brownian motion on $[0,1]$ and
\begin{equation*}
%\label{eq:sigma_nu_g}
\sigma_{\bm \nu,g}^2 = \sum_{i,j=1}^p \partial_i g(\bm \beta_{\bm \nu}) \partial_j g(\bm \beta_{\bm \nu}) \Cov(Y_{\nu_i},Y_{\nu_j}),
\end{equation*}
with $Y_{\nu_i}$ defined in~\eqref{eq:Ynu}.
\end{prop}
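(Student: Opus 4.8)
The plan is to realise $S_{\bm \nu,g,n}$ as the supremum of a CUSUM process on $[0,1]$, linearise that process by a delta-method argument so that it becomes a bounded linear functional of the sequential PWM processes $\Mb_{\bm \nu,n}$, and then invoke Proposition~\ref{prop:weak_Mn} together with the continuous mapping theorem. For $t \in [0,1]$ set $k = \ip{nt}$ and
\[
Z_n(t) = \frac{k(n-k)}{n^{3/2}}\, \1(\hat{\bm \beta}_{\bm \nu,1:k} \in \DD,\, \hat{\bm \beta}_{\bm \nu,k+1:n} \in \DD)\, \big\{ g(\hat{\bm \beta}_{\bm \nu,1:k}) - g(\hat{\bm \beta}_{\bm \nu,k+1:n}) \big\},
\]
so that $S_{\bm \nu,g,n} = \sup_{t \in [0,1]} |Z_n(t)|$, the endpoints $k \in \{0,n\}$ contributing nothing. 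Using $\lambda_n(0,t) = k/n$, $\lambda_n(t,1) = (n-k)/n$ and the definition~\eqref{eq:M_nu_n}, one has the algebraic identity
\[
\frac{k(n-k)}{n^{3/2}}\big( \hat \beta_{\nu_i,1:k} - \hat \beta_{\nu_i,k+1:n}\big) = (1-t_n)\, \M_{\nu_i,n}(0,t) - t_n\, \M_{\nu_i,n}(t,1), \qquad t_n = \frac{k}{n},
\]
which is the bridge-type combination that will produce the limit.

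I would treat the interior and the boundary separately. Fix $\delta \in (0,1/2)$ and work first on $t \in [\delta, 1-\delta]$. Proposition~\ref{prop:weak_Mn} gives $\sup_{(s,t) \in \Delta}|\M_{\nu_i,n}(s,t)| = O_\Pr(1)$, whence both $\sup_{t \in [\delta,1-\delta]}|\hat{\bm \beta}_{\bm \nu,1:\ip{nt}} - \bm \beta_{\bm \nu}|$ and the analogous quantity for the right subsample are $O_\Pr(n^{-1/2})$; in particular both estimators lie in the open set $\DD$, so the indicator equals $1$ with probability tending to one, uniformly in $t \in [\delta,1-\delta]$. Since $\DD$ is convex, the segment joining the two estimators stays in $\DD$, and the integral form of the mean value theorem gives $g(\hat{\bm \beta}_{\bm \nu,1:k}) - g(\hat{\bm \beta}_{\bm \nu,k+1:n}) = \{\nabla g(\bm \beta_{\bm \nu}) + r_n(t)\}^{\top}(\hat{\bm \beta}_{\bm \nu,1:k} - \hat{\bm \beta}_{\bm \nu,k+1:n})$ with $\sup_{[\delta,1-\delta]}|r_n(t)| = o_\Pr(1)$, because $\nabla g$ is continuous and both endpoints converge uniformly to $\bm \beta_{\bm \nu}$. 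Multiplying by the CUSUM coefficient and using the identity above yields, uniformly on $[\delta,1-\delta]$, $Z_n(t) = \Psi(\Mb_{\bm \nu,n})(t) + o_\Pr(1)$, where $\Psi \colon \{\ell^\infty(\Delta)\}^p \to \ell^\infty([0,1])$ is the bounded linear map $\Psi(\mathbf m)(t) = \sum_{i=1}^p \partial_i g(\bm \beta_{\bm \nu})\{(1-t)\, m_i(0,t) - t\, m_i(t,1)\}$ (replacing $t_n$ by $t$ costs only $o_\Pr(1)$ since $|t_n - t| \le 1/n$ and the $\M$'s are $O_\Pr(1)$). As $\Psi$ is continuous, Proposition~\ref{prop:weak_Mn} and the continuous mapping theorem give $\Psi(\Mb_{\bm \nu,n}) \leadsto \Psi(\Mb_{\bm \nu})$ in $\ell^\infty([0,1])$. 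Using~\eqref{eq:M_nu_st} to write $\M_{\nu_i}(t,1) = \M_{\nu_i}(0,1) - \M_{\nu_i}(0,t)$, the limit collapses to $\Psi(\Mb_{\bm \nu})(t) = W(t) - t\,W(1)$ with $W(t) = \sum_i \partial_i g(\bm \beta_{\bm \nu})\M_{\nu_i}(0,t)$; by~\eqref{eq:covW}, $\Cov\{W(s),W(t)\} = (s \wedge t)\sigma_{\bm \nu,g}^2$, so $W = \sigma_{\bm \nu,g}\U$ in law and $\Psi(\Mb_{\bm \nu})(t) = \sigma_{\bm \nu,g}\{\U(t) - t\U(1)\}$, the scaled Brownian bridge of the statement.

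The main obstacle is the boundary, where the small subsample's estimator may belong to $\DD$ yet be far from $\bm \beta_{\bm \nu}$, so that the interior linearisation fails. I would show $\lim_{\delta \downarrow 0}\limsup_n \Pr\big(\sup_{t \in [0,\delta]\cup[1-\delta,1]}|Z_n(t)| > \eps\big) = 0$ for every $\eps>0$ by a two-scale split near $t=0$ (the case near $1$ being symmetric). For the finitely many indices $k \le K$ with $K$ fixed, $g(\hat{\bm \beta}_{\bm \nu,1:k})\1(\hat{\bm \beta}_{\bm \nu,1:k}\in\DD) = O_\Pr(1)$ while $g(\hat{\bm \beta}_{\bm \nu,k+1:n}) \as g(\bm \beta_{\bm \nu})$, and the coefficient is $O(k/\sqrt n)$, so this block is $O_\Pr(K/\sqrt n) = o_\Pr(1)$. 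For $K < k \le n\delta$, the assumption $\Ex(|X_1|)<\infty$ together with the strong law and the Glivenko--Cantelli theorem gives $\sup_{k \ge K}|\hat{\bm \beta}_{\bm \nu,1:k} - \bm \beta_{\bm \nu}| \as 0$ as $K \to \infty$, so for $K$ large the left estimator stays, with high probability, in a closed ball $N \subset \DD$ on which $|\nabla g| \le C$; the mean value theorem and the identity above then give $|Z_n(t)| \le C\sup_{t\in[0,\delta]}|\Mb_{\bm \nu,n}(0,t)| + C\,\delta\,\sup_{(s,t)\in\Delta}|\Mb_{\bm \nu,n}(s,t)|$, where the first supremum is small for small $\delta$ by the asymptotic equicontinuity of $\Mb_{\bm \nu,n}$ (Proposition~\ref{prop:weak_Mn}, with $\Mb_{\bm \nu,n}(0,0)=0$) and the second is $\delta\,O_\Pr(1)$. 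Since the limit bridge also has arbitrarily small supremum over $[0,\delta]\cup[1-\delta,1]$ for small $\delta$, a standard converging-together argument (comparing $\sup_{[\delta,1-\delta]}$ with $\sup_{[0,1]}$ and letting $\delta \downarrow 0$) upgrades the interior convergence to $S_{\bm \nu,g,n} = \sup_{t\in[0,1]}|Z_n(t)| \leadsto \sigma_{\bm \nu,g}\sup_{s\in[0,1]}|\U(s) - s\U(1)|$, which is the claim. Proposition~\ref{prop:weak_S_g_n} then follows as the special case $p=3$, $\nu_1(x)=1$, $\nu_2(x)=x$, $\nu_3(x)=x^2$, $\DD=\DD_\xi$ in~\eqref{eq:def_xi}.
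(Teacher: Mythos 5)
Your proposal is correct and is essentially the paper's own argument: the same CUSUM realization and bridge-type identity, the same mean-value-theorem linearization at $\bm\beta_{\bm\nu}$ justified by the uniform $O_\Pr(n^{-1/2})$ consistency of the two subsample estimators on the interior, the same key input (Proposition~\ref{prop:weak_Mn} plus the continuous mapping theorem and the covariance computation via~\eqref{eq:covW}), and the same ingredients for the delicate boundary part --- almost-sure consistency of $\hat{\bm\beta}_{\bm\nu,1:k}$ uniformly in the subsample size (strong law plus Glivenko--Cantelli, which is exactly where $\Ex(|X_1|)<\infty$ enters), asymptotic equicontinuity of $\M_{\nu_i,n}(0,\cdot)$ near $0$, and reversal symmetry for the right end. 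The only difference is bookkeeping: the paper linearizes globally on $[0,1]$ with the indicators retained (its terms $A_n$ and $B_n$) and then strips the indicators in stages ($S_{\bm\nu,g,n}' \to S_{\bm\nu,g,n}'' \to S_{\bm\nu,g,n}'''$) before applying the continuous mapping theorem to the full supremum, whereas you linearize only on $[\delta,1-\delta]$ and dispose of the boundary by a $\lim_{\delta\downarrow 0}\limsup_n$ negligibility argument followed by a converging-together step --- two equivalent organizations of the same estimates.
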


\begin{proof}
%To be consistent with Appendix~\ref{proof:prop:weak_Mn} whose Proposition~\ref{prop:weak_Mn} will be used hereafter, we revert to the richer notation introduced therein and thus write $\bm \beta_{\bm \nu}$ for $\bm \beta$, $\beta_{\nu_i}$ for $\beta_i$, and so on. We start by proving the convergence in distribution of $S_{g,n}$. 
For any $s \in [0,1]$, let
\begin{equation*}
%\label{eq:D_nu_g_n}
\D_{\bm \nu,g,n}(s) = \sqrt{n} \lambda_n(0,s) \lambda_n(s,1) \1(\hat{\bm \beta}_{\bm \nu,1:\ip{ns}} \in \DD, \hat{\bm \beta}_{\bm \nu,\ip{ns}+1:n} \in \DD) \{ g(\hat{\bm \beta}_{\bm \nu,1:\ip{ns}}) - g(\hat{\bm \beta}_{\bm \nu,\ip{ns}+1:n}) \}, 
\end{equation*}
and notice that 
\begin{equation}
\label{eq:S_nu_g_n_D}
S_{\bm \nu,g,n} = \sup_{s \in [0,1]} |\D_{\bm \nu,g,n}(s)|.
\end{equation}
Furthermore, let 
\begin{equation*}
%\label{eq:M_nu_g_n}
\M_{\bm \nu,g,n}(s,t) = \sqrt{n} \lambda_n(s,t) \{ g(\hat{\bm \beta}_{\bm \nu,\ip{ns}+1:\ip{nt}}) - g(\bm \beta_{\bm \nu}) \}, \qquad (s,t) \in \Delta.
\end{equation*}
Under $H_0$ in~\eqref{eq:H0} implied by Condition~\ref{cond:main}, it is easy to verify that
\begin{multline}
\label{eq:D_nu_g_n_H0}
\D_{\bm \nu,g,n}(s) = \lambda_n(s,1) \1(\hat{\bm \beta}_{\bm \nu,1:\ip{ns}} \in \DD, \hat{\bm \beta}_{\bm \nu,\ip{ns}+1:n} \in \DD) \M_{\bm \nu,g,n}(0,s) \\ - \lambda_n(0,s) \1(\hat{\bm \beta}_{\bm \nu,1:\ip{ns}} \in \DD, \hat{\bm \beta}_{\bm \nu,\ip{ns}+1:n} \in \DD)  \M_{\bm \nu,g,n}(s,1), \qquad s \in [0,1]. 
\end{multline}
To show the desired result, we shall first prove that
\begin{equation}
\label{eq:ae1}
A_n = \sup_{s \in [0,1]} \1(\hat{\bm \beta}_{\bm \nu,1:\ip{ns}} \in \DD, \hat{\bm \beta}_{\bm \nu,\ip{ns}+1:n} \in \DD) \left| \M_{\bm \nu,g,n}(0,s) - \sum_{i=1}^p \partial_i g(\bm \beta_{\bm \nu})  \M_{\nu_i,n}(0,s)\right| = o_\Pr(1)
\end{equation}
and that
\begin{equation}
\label{eq:ae2}
B_n = \sup_{s \in [0,1]} \1(\hat{\bm \beta}_{\bm \nu,1:\ip{ns}} \in \DD, \hat{\bm \beta}_{\bm \nu,\ip{ns}+1:n} \in \DD) \left| \M_{\bm \nu,g,n}(s,1) - \sum_{i=1}^p \partial_i g(\bm \beta_{\bm \nu})  \M_{\nu_i,n}(s,1)\right| = o_\Pr(1),
\end{equation}
where $\M_{\nu_1,n},\dots,\M_{\nu_p,n}$ are defined in~\eqref{eq:M_nu_n}.

Let us start with~\eqref{eq:ae1}. Using the mean value theorem when the indicators are equal to one, $A_n$ is smaller than $\sum_{i=1}^p A_{i,n}$, where
\begin{equation}
\label{eq:Ain}
A_{i,n} = \sup_{s \in [0,1]}  \1(\hat{\bm \beta}_{\bm \nu,1:\ip{ns}} \in \DD) | \partial_i g(\hat{\bm \beta}_{\bm \nu,1:\ip{ns}}^*) - \partial_i g(\bm \beta_{\bm \nu}) | |\M_{\nu_i,n}(0,s) |,
\end{equation}
and where $\hat{\bm \beta}_{\bm \nu,1:\ip{ns}}^*$ lies on the segment between $\bm \beta_{\bm \nu}$ and $\hat{\bm \beta}_{\bm \nu,1:\ip{ns}}$. To show~\eqref{eq:ae1}, it suffices therefore to prove that $A_{i,n} = o_\Pr(1)$ for all $i \in \{1,\dots,p\}$.

Fix $\nu \in \{\nu_1,\dots,\nu_p\}$. Since $\beta_\nu < \infty$, from the strong law of large number, we have that $\tilde \beta_{\nu,1:n} \as \beta_\nu$, where $\tilde \beta_{\nu,1:n}$ is defined analogously to~\eqref{eq:tilde_beta_st}. Thus,
\begin{equation*}
%\label{eq:ae_tildebeta_hatbeta}
\left| \tilde \beta_{\nu,1:n} - \hat \beta_{\nu,1:n} \right| \leq \sup_{x \in \R} | \nu\{F_{1:n}(x)\} - \nu\{F(x)\} | \times \frac{1}{n} \sum_{i=1}^n  | X_i | \as 0
\end{equation*}
since $\Ex(|X_1|) < \infty$, and the supremum on the right converges to 0 almost surely as a consequence of the Glivenko-Cantelli lemma and the continuous mapping theorem. It follows that $\hat \beta_{\nu,1:n} \as \beta_\nu$ for all $\nu \in \{\nu_1,\dots,\nu_p\}$, and therefore that $\hat{\bm \beta}_{\bm \nu,1:n} \as \bm \beta_{\bm \nu}$. 

Fix $i \in \{1,\dots,p\}$. The continuous mapping theorem next implies that $\partial_i g(\hat{\bm \beta}_{\bm \nu,1:n}) \as \partial_i g( \bm \beta_{\bm \nu} )$, which in turn implies that $\sup_{k \geq 1} \1(\hat{\bm \beta}_{\bm \nu,1:k} \in \DD) | \partial_i g(\hat{\bm \beta}_{\bm \nu,1:k}) - \partial_i g( \bm \beta_{\bm \nu} ) | < \infty$ almost surely. 

Let $\eta > 0$. Then, there exists $M > 0$ such that
\begin{equation}
\label{eq:bound1}
\Pr \left\{ \sup_{k \geq 1} \1(\hat{\bm \beta}_{\bm \nu,1:k} \in \DD) \left| \partial_i g(\hat{\bm \beta}_{\bm \nu,1:k}) - \partial_i g( \bm \beta_{\bm \nu} ) \right| > M \right\} \leq \eta/4.
\end{equation}
Furthermore, let $\eps > 0$ and consider $\eps/M > 0$. Using the asymptotic uniform equicontinuity in probability of $\M_{\nu_i,n}$ (a consequence of Proposition~\ref{prop:weak_Mn}), there exists $\delta \in (0,1)$ such that, for all sufficiently large $n$, 
\begin{equation}
\label{eq:bound2}
\Pr \left\{ \sup_{s \in [0,\delta)} |\M_{\nu_i,n}(0,s)| > \eps/M \right\} \leq \eta/4.
\end{equation}
Then,
\begin{multline}
\label{eq:prob_ineq}
\Pr(A_{i,n} > \eps) \leq \Pr\left\{ \sup_{s \in [0,\delta)} \1(\hat{\bm \beta}_{\bm \nu,1:\ip{ns}} \in \DD) | \partial_i g(\hat{\bm \beta}_{\bm \nu,1:\ip{ns}}^*) - \partial_i g(\bm \beta_{\bm \nu}) | |\M_{\nu_i,n}(0,s) |  > \eps \right\} \\ + \Pr\left\{ \sup_{s \in [\delta,1]} \1(\hat{\bm \beta}_{\bm \nu,1:\ip{ns}} \in \DD) | \partial_i g(\hat{\bm \beta}_{\bm \nu,1:\ip{ns}}^*) - \partial_i g(\bm \beta_{\bm \nu}) | |\M_{\nu_i,n}(0,s) | \} > \eps \right\}.
\end{multline}
The first term on the right is smaller, for all sufficiently large $n$, than  
$$
\Pr \left\{ \sup_{k \geq 1} \1(\hat{\bm \beta}_{\bm \nu,1:k} \in \DD) \left| \partial_i g(\hat{\bm \beta}_{\bm \nu,1:k}) - \partial_i g( \bm \beta_{\bm \nu} ) \right| > M \right\} + \Pr \left\{ \sup_{s \in [0,\delta)} |\M_{\nu_i,n}(0,s)| > \eps/M \right\} \leq \eta/2,
$$
where the last inequality follows by~\eqref{eq:bound1} and~\eqref{eq:bound2}. The second term on the right of~\eqref{eq:prob_ineq} is smaller, for all sufficiently large $n$, than $\eta/2$. Indeed,
\begin{multline*}
\sup_{s \in [\delta,1]} \1(\hat{\bm \beta}_{\bm \nu,1:\ip{ns}} \in \DD) | \partial_i g(\hat{\bm \beta}_{\bm \nu,1:\ip{ns}}^*) - \partial_i g(\bm \beta_{\bm \nu}) | |\M_{\nu_i,n}(0,s) | \\ \leq \sup_{s \in [\delta,1]} | \partial_i g(\hat{\bm \beta}_{\bm \nu,1:\ip{ns}}^*) - \partial_i g(\bm \beta_{\bm \nu}) |  \times \sup_{s \in [0,1]} |\M_{\nu_i,n}(0,s) | = o_\Pr(1)
\end{multline*}
since the second supremum is $O_\Pr(1)$ from the weak convergence of $\M_{\nu_i,n}$, while the first supremum is $o_\Pr(1)$ as a consequence of the fact that, for any $i \in \{1,\dots,p\}$, 
$$
\sup_{s \in [\delta,1]} | \hat \beta_{\nu_i,1:\ip{ns}} - \beta_{\nu_i} | = n^{-1/2} \sup_{s \in [\delta,1]} \lambda_n(0,s)^{-1} \times \sup_{s \in [0,1]} | \M_{\nu_i,n}(0,s) | =  o_\Pr(1)
$$
and the continuous mapping theorem. Hence, $A_{i,n}$ defined in~\eqref{eq:Ain} converges in probability to zero for all $i \in \{1,\dots,p\}$, which implies that~\eqref{eq:ae1} holds. The fact that~\eqref{eq:ae2} holds follows immediately since $B_n$ in~\eqref{eq:ae2} (written as a maximum) is nothing else than $A_n$ in~\eqref{eq:ae1} computed from the sequence $X_n,\dots,X_1$. Therefore, for any $\eps > 0$, $\Pr (B_n > \eps) = \Pr (A_n > \eps) \to 0$ by~\eqref{eq:ae1}.

Next, let 
$$
S_{\bm \nu,g,n}' =  \sup_{s \in [0,1]} \1(\hat{\bm \beta}_{\bm \nu,1:\ip{ns}} \in \DD, \hat{\bm \beta}_{\bm \nu,\ip{ns}+1:n} \in \DD) | \D_{\bm \nu,g,n}'(s) |,
$$
where 
\begin{equation}
\label{eq:D'_nu_g_n}
\D_{\bm \nu,g,n}'(s) = \sum_{i=1}^p \partial_i g(\bm \beta_{\bm \nu}) \{ \lambda_n(s,1)  \M_{\nu_i,n}(0,s) - \lambda_n(0,s) \M_{\nu_i,n}(s,1) \}, \qquad s \in [0,1].
\end{equation}
Then, starting from~\eqref{eq:S_nu_g_n_D} and~\eqref{eq:D_nu_g_n_H0}, and using the reverse triangle inequality, we obtain that
\begin{equation}
\label{eq:conv0}
\left| S_{\bm \nu,g,n} - S_{\bm \nu,g,n}' \right| \leq A_n + B_n = o_\Pr(1),
\end{equation}
where $A_n$ and $B_n$ are defined in~\eqref{eq:ae1} and~\eqref{eq:ae2}. 

We shall now verify that 
\begin{equation}
\label{eq:conv}
S_{\bm \nu,g,n}' - S_{\bm \nu,g,n}''' = o_\Pr(1), 
\end{equation}
where
\begin{equation}
\label{eq:S'''_nu_g_n} 
S_{\bm \nu,g,n}''' =  \sup_{s \in [0,1]}  | \D_{\bm \nu,g,n}'(s) |. 
\end{equation}
To do so, we use the decomposition $S_{\bm \nu,g,n}' - S_{\bm \nu,g,n}''' = S_{\bm \nu,g,n}' - S_{\bm \nu,g,n}'' + S_{\bm \nu,g,n}'' - S_{\bm \nu,g,n}'''$, where
$$
S_{\bm \nu,g,n}'' =  \sup_{s \in [0,1]}  \1(\hat{\bm \beta}_{\bm \nu,\ip{ns}+1:n} \in \DD) | \D_{\bm \nu,g,n}'(s) |.
$$
Since $\hat{\bm \beta}_{\bm \nu,1:k} \as \bm \beta_{\bm \nu}$ as $k \to \infty$, we have that, for almost every $\omega$, there exists $k_\omega$ such that $k \geq k_\omega$ implies that $\hat{\bm \beta}_{\bm \nu,1:k}(\omega) \in \DD$. Hence, for almost every $\omega$, 
\begin{multline*}
S_{\bm \nu,g,n}'(\omega) - S_{\bm \nu,g,n}''(\omega) = \sup_{s \in [k_\omega/n,1]} \1\{\hat{\bm \beta}_{\bm \nu,\ip{ns}+1:n}(\omega) \in \DD\} | \D_{\bm \nu,g,n}'(s,\omega) |  \\  - \sup_{s \in [0,1]} \1\{ \hat{\bm \beta}_{\bm \nu,\ip{ns}+1:n}(\omega) \in \DD \} | \D_{\bm \nu,g,n}'(s,\omega) |  \to 0,
\end{multline*}
which implies that $S_{\bm \nu,g,n}' - S_{\bm \nu,g,n}'' = o_\Pr(1)$. Similarly, we obtain that
$$
\sup_{s \in [0,1]} \1(\hat{\bm \beta}_{\bm \nu,1:\ip{ns}} \in \DD) | \D_{\bm \nu,g,n}'(s) | - S_{\bm \nu,g,n}''' = o_\Pr(1).
$$
Then, using the fact that $S_{\bm \nu,g,n}'' - S_{\bm \nu,g,n}'''$ is $\sup_{s \in [0,1]} \1(\hat{\bm \beta}_{\bm \nu,1:\ip{ns}} \in \DD) | \D_{\bm \nu,g,n}'(s) |  - S_{\bm \nu,g,n}'''$ computed from the sequence $X_n,\dots,X_1$, we obtain that, for every $\eps > 0$,
$$
\Pr( | S_{\bm \nu,g,n}'' - S_{\bm \nu,g,n}''' | > \eps) = \Pr \left\{ \left| \sup_{s \in [0,1]} \1(\hat{\bm \beta}_{\bm \nu,1:\ip{ns}} \in \DD) | \D_{\bm \nu,g,n}'(s) |  - S_{\bm \nu,g,n}''' \right| > \eps \right\} \to 0.
$$
Therefore,~\eqref{eq:conv} holds. Next, from Proposition~\ref{prop:weak_Mn} and the continuous mapping theorem, we have that $\D_{\bm \nu,g,n}' \leadsto \D_{\bm \nu,g}'$ in $\ell^\infty([0,1])$, where $\D_{\bm \nu,g,n}'$ is defined in~\eqref{eq:D'_nu_g_n},
$$
\D_{\bm \nu,g}'(s) = \sum_{i=1}^p \partial_i g(\bm \beta_{\bm \nu}) \{ (1-s)  \M_{\nu_i}(0,s) - s \M_{\nu_i}(s,1) \} = \sum_{i=1}^p \partial_i g(\bm \beta_{\bm \nu}) \{ \M_{\nu_i}(0,s) - s \M_{\nu_i}(0,1) \}, 
$$
for $s \in [0,1]$, and $\M_{\nu_1}, \dots, \M_{\nu_p}$ are defined in Proposition~\ref{prop:weak_Mn}. The desired convergence in distribution of $S_{\bm \nu,g,n}$ finally follows from~\eqref{eq:conv0},~\eqref{eq:conv},~\eqref{eq:S'''_nu_g_n} and standard calculations.
\end{proof}

\begin{proof}[\bf Proof of Proposition~\ref{prop:weak_S_g_n}]
The convergence in distribution of $S_{g,n}$ is an immediate consequence of Proposition~\ref{prop:weak_S_nu_g_n}. It remains to prove that $S_{g,n} - T_{g,n} =o_\Pr(1)$. The latter is a consequence of the fact that, for any $i \in \{1,2,3\}$,
\begin{equation}
\label{eq:aim1}
I_n = \sup_{s \in [0,1]} | \M_{\nu_i,n}(0,s) - \sqrt{n} \lambda_n(0,s) ( \hat b_{i,1:\ip{ns}} - \beta_{\nu_i} ) | = o_\Pr(1)
\end{equation}
and
\begin{equation}
\label{eq:aim2}
J_n = \sup_{s \in [0,1]} | \M_{\nu_i,n}(s,1) - \sqrt{n} \lambda_n(s,1) ( \hat b_{i,\ip{ns}+1:n} - \beta_{\nu_i} ) | = o_\Pr(1),
\end{equation}
where $\hat b_{i,1:\ip{ns}}$ and $\hat b_{i,\ip{ns}+1:n}$ are defined analogously to~\eqref{eq:landwehr}. The above statements are clearly true for $i=1$ as $\hat b_{1,k:l} = \hat \beta_{\nu_1,k:l}$ for all integers $1 \leq k, l \leq n$. For $i=2$, it can be verified that the supremum on the left of~\eqref{eq:aim1} restricted to $s \in [2/n,1]$ can be rewritten as
\begin{multline*}
\sup_{s \in [2/n,1]} \frac{1}{\sqrt{n}} \left| \sum_{i=1}^{\ip{ns}} X_i F_{1:\ip{ns}}(X_i)- \sum_{i=1}^{\ip{ns}} X_i \frac{\ip{ns} F_{1:\ip{ns}}(X_i) - 1}{\ip{ns} - 1} \right| \\
\leq \frac{2}{\sqrt{n}} \sup_{s \in [2/n,1]} \frac{1}{\ip{ns} - 1} \sum_{i=1}^{\ip{ns}} |X_i| %\\ \leq \frac{4}{\sqrt{n}} \sup_{s \in [2/n,1]} \frac{1}{\ip{ns}} \sum_{i=1}^{\ip{ns}} |X_i| 
\leq \frac{4}{\sqrt{n}} \sup_{k \geq 1} \frac{1}{k} \sum_{i=1}^{k} |X_i| \as 0
\end{multline*}
since $\Ex(|X_1|) < \infty$. The case $i=3$ is similar. To show~\eqref{eq:aim2}, we use the fact that $J_n$ is $I_n$ computed from the sequence $X_n,\dots,X_1$ and conclude from~\eqref{eq:aim1}. 

Starting from~\eqref{eq:aim1} and~\eqref{eq:aim2}, it is possible to adapt the proof of Proposition~\ref{prop:weak_S_nu_g_n} to show that $S_{g,n} - T_{g,n} = o_\Pr(1)$. We omit the details for the sake of brevity.  
\end{proof}

%%%%%%%%%%%%%%%%%%%%%%%%%%%%%%%%%%%%%%%%%%%%%%%%%%%%%%%%%%%%%%%%%%%%%%%%%%%%%%%%%%%

\bibliographystyle{plainnat}
\bibliography{biblio}

% \includepdf[pages=-,pagecommand={},
% %addtotoc={1, section, 2, Axiomatic characterizations of cardinal-probabilistic interaction indices, art:ii},
% %trim=10mm 55mm 10mm 40mm, clip, 
% ]{supp.pdf}

\end{document}